\documentclass[11pt,a4paper]{article}

\usepackage{amsmath,amssymb,amsthm}
\usepackage{amsfonts}
\usepackage{mathtools}
\usepackage{booktabs}
\usepackage{multirow}
\usepackage{enumitem}
\usepackage{array}
\usepackage[margin=2.5cm]{geometry}
\usepackage[colorlinks=true,linkcolor=blue,citecolor=red,urlcolor=blue]{hyperref}
\allowdisplaybreaks

\newtheorem{theorem}{Theorem}[section]
\newtheorem{corollary}[theorem]{Corollary}
\newtheorem{lemma}[theorem]{Lemma}
\newtheorem{proposition}[theorem]{Proposition}
\theoremstyle{definition}
\newtheorem{definition}[theorem]{Definition}
\newtheorem{remark}[theorem]{Remark}
\newtheorem{example}[theorem]{Example}

\newcommand{\Z}{\mathbb{Z}}

\newcommand{\C}{{\mathcal C}}
\newcommand{\cH}{{\mathcal H}}

\newcommand{\cA}{{\mathcal A}}

\newcommand{\cM}{{\mathcal M}}
\newcommand{\zero}{{\mathbf{0}}}
\newcommand{\one}{{\mathbf{1}}}
\newcommand{\two}{{\mathbf{2}}}
\newcommand{\four}{{\mathbf{4}}}
\newcommand{\six}{{\mathbf{6}}}
\newcommand{\uu}{\mathbf{u}}
\newcommand{\vv}{\mathbf{v}}
\newcommand{\ww}{\mathbf{w}}
\newcommand{\xx}{\mathbf{x}}

\newcommand{\zz}{\mathbf{z}}

\newcommand{\wt}{{\rm wt}}
\newcommand{\rank}{\operatorname{rank}}
\newcommand{\kernel}{\operatorname{ker}}
\newcommand{\ord}{\operatorname{o}}
\newcommand{\K}{\operatorname{K}}
\newcommand{\Span}{\operatorname{span}}

\newcommand{\sym}{\operatorname{sym}}

\begin{document}

\title{\bf Rank and classification of $\Z_2\Z_4\Z_8$-linear\\ Hadamard codes\thanks{This work has been granted by the Juan de la Cierva 2024 grant (JDC2024-053082-I), funded by Ministerio de Ciencia, Innovación y Universidades (MICIU/AEI/10.13039/501100011033) and co-funded by the European Social Fund Plus (FSE+).}}

\author{
Dipak K. Bhunia\\
\small Departament de Matem\`atiques,\\ \small Universitat Polit\`ecnica de Catalunya,\\
\small Barcelona, Spain
}
\date{}

\maketitle

\begin{abstract}
The $\Z_2\Z_4\Z_8$-additive codes are subgroups of $\Z_2^{\alpha_1} \times \Z_4^{\alpha_2} \times \Z_8^{\alpha_3}$. A $\Z_2\Z_4\Z_8$-linear  Hadamard code is a Hadamard code which is the Gray map image of a $\Z_2\Z_4\Z_8$-additive code. A recursive construction of $\Z_2\Z_4\Z_8$-additive Hadamard codes $\cH^{t_1,t_2,t_3}$ of type $(\alpha_1,\alpha_2,\alpha_3;t_1,t_2,t_3)$ with $\alpha_1\not=0$, $\alpha_2\not=0$, $\alpha_3\not=0$, $t_1\geq 1$, $t_2\geq 0$ and $t_3\geq 1$ is known, as well as the linearity and the kernel of the corresponding $\Z_2\Z_4\Z_8$-linear Hadamard codes $H^{t_1,t_2,t_3}$ of length $2^t$. The rank of these codes, however, was not known: the values used in the literature had to be computed with a computer algebra system, one type at a time.

In this paper, we determine $\rank(H^{t_1,t_2,t_3})$ for every admissible
triple $(t_1,t_2,t_3)$ and obtain an explicit closed formula.  The proof
computes the dimension of the space spanned by the binary coordinate
functions of the code, viewed as Boolean functions of the binary digits of
the message.  We then use the rank formula, together with the known dimension
of the kernel, to classify the family as far as these two invariants allow.
Inside the family, we prove that the only pairs of distinct types of the same
length sharing both invariants are
$\bigl(H^{1,2,t-6},H^{2,0,t-5}\bigr)$ for $t\geq7$ and
$\bigl(H^{2,2,t-9},H^{3,0,t-8}\bigr)$ for $t\geq10$. Consequently, writing $r$ for the rank and $k$ for the dimension of the kernel, the number of distinct pairs $(r,k)$ among the codes of length $2^t$ is $\left\lfloor(t^2+6)/12\right\rfloor$ for $3\leq t\leq6$,
$\left\lfloor(t^2+6)/12\right\rfloor-1$ for $7\leq t\leq9$, and
$\left\lfloor(t^2+6)/12\right\rfloor-2$ for $t\geq10$.  In particular, the
rank and the dimension of the kernel classify the family completely if and
only if $3\leq t\leq6$. We also determine all coincidences of these two invariants with the
$\Z_4$-linear, $\Z_2\Z_4$-linear, and $\Z_8$-linear Hadamard codes of the
same length.  There is exactly one infinite family of coincidences with the
$\Z_4$-linear and $\Z_2\Z_4$-linear families, namely
$H^{2,1,t-7}$ and $H^{5,t-9}$ for $t\geq10$, and exactly five families of
coincidences with the $\Z_8$-linear codes.
\end{abstract}

\medskip
\noindent\textbf{Keywords:} Hadamard code, Gray map, $\Z_2\Z_4\Z_8$-linear code, $\Z_2\Z_4\Z_8$-additive code, rank, kernel, classification.

\smallskip
\noindent\textbf{2020 Mathematics Subject Classification:} 94B25, 94B60.
\section{Introduction}\label{sec:intro}

Let $\Z_{2^s}$ be the ring of integers modulo $2^s$ with $s\geq1$.  The set
of $n$-tuples over $\Z_{2^s}$ is denoted by $\Z_{2^s}^n$, and in this paper
the elements of $\Z_{2^s}^n$ are also called vectors. A code over $\Z_2$ of
length $N$ is a nonempty subset of $\Z_2^N$, and it is linear if it is a
subspace of $\Z_2^N$.  Similarly, a nonempty subset of $\Z_{2^s}^n$ is a $\Z_{2^s}$-additive code if it is a subgroup of the additive group of $\Z_{2^s}^n$. A $\Z_2\Z_4\Z_8$-additive code is a subgroup of $\Z_2^{\alpha_1} \times \Z_4^{\alpha_2}\times \Z_8^{\alpha_3}$. Note that a $\Z_2\Z_4\Z_8$-additive code is a linear code over $\Z_2$ when $\alpha_2=\alpha_3=0$,  a $\Z_4$-additive or $\Z_8$-additive code when $\alpha_1=\alpha_3=0$ or $\alpha_1=\alpha_2=0$, respectively, and a $\Z_2\Z_4$-additive code when $\alpha_3=0$. The order of a vector $\uu\in \Z_{2^s}^n$, denoted by $\ord(\uu)$, is the smallest positive integer $m$ such that $m \uu =(0,\dots,0)$. Also, the order of a vector $\uu\in \Z_2^{\alpha_1}\times\Z_4^{\alpha_2} \times\Z_8^{\alpha_3}$, denoted by $\ord(\uu)$, is the smallest positive integer $m$ such that $m \uu =(0,\dots,0\mid 0,\dots,0 \mid 0,\dots,0)$.

Two binary codes $C_1$ and $C_2$ of length $N$ are said to be equivalent if
there are a vector $\uu\in\Z_2^N$ and a permutation of coordinates $\pi$ such
that $C_2=\{\uu+\pi(\xx):\xx\in C_1\}$.  The Hamming weight of a vector
$\uu\in\Z_2^N$, denoted by $\wt_H(\uu)$, is the number of its nonzero
coordinates, and the Hamming distance $d_H(\uu,\vv)$ of two vectors is the
number of coordinates in which they differ; hence,
$d_H(\uu,\vv)=\wt_H(\uu-\vv)$.  The minimum distance of a code $C$ over $\Z_2$
is $d(C)=\min\{d_H(\uu,\vv):\uu,\vv\in C,\ \uu\not=\vv\}$.

In \cite{Sole}, a Gray map from $\Z_4$ to $\Z_2^2$ is defined as
$\phi(0)=(0,0)$, $\phi(1)=(0,1)$, $\phi(2)=(1,1)$ and $\phi(3)=(1,0)$.  There
exist different generalizations of this Gray map, which go from $\Z_{2^s}$ to
$\Z_2^{2^{s-1}}$ \cite{Carlet,Codes2k,dougherty,Nechaev,Krotov:2007}.  In this
paper, we focus on Carlet's Gray map \cite{Carlet}, from $\Z_{2^s}$ to
$\Z_2^{2^{s-1}}$, which is a particular case of the one given in
\cite{Krotov:2007,ShiKrotov2019} satisfying
$\sum\lambda_i\phi_s(2^i)=\phi_s(\sum\lambda_i2^i)$ \cite{KernelZ2s}.
Specifically,
\begin{equation}\label{eq:Carlet}
\phi_s(u)=(u_{s-1},u_{s-1},\dots,u_{s-1})+(u_0,\dots,u_{s-2})Y_{s-1},
\end{equation}
where $u\in\Z_{2^s}$; $[u_0,u_1,\dots,u_{s-1}]_2$ is the binary expansion of
$u$, that is, $u=\sum_{i=0}^{s-1}u_i2^i$ with $u_i\in\{0,1\}$; and $Y_{s-1}$
is a matrix of size $(s-1)\times2^{s-1}$ whose columns are all the vectors in
$\Z_2^{s-1}$.  Without loss of generality, we assume that the columns of
$Y_{s-1}$ are ordered in ascending order, by considering the elements of
$\Z_2^{s-1}$ as the binary expansions of the elements of $\Z_{2^{s-1}}$. Note that $\phi_1$ is the identity map and that $\phi_2$ is the Gray map $\phi$ given above.  We define
$\Phi_s:\Z_{2^s}^n\rightarrow\Z_2^{n2^{s-1}}$ as the component-wise extended
map of $\phi_s$, and a Gray map $\Phi$ from
$\Z_2^{\alpha_1}\times\Z_4^{\alpha_2}\times\Z_8^{\alpha_3}$ to $\Z_2^N$, where
$N=\alpha_1+2\alpha_2+4\alpha_3$, by
$$
\Phi(\uu_1\mid \uu_2\mid \uu_3)=(\uu_1,\Phi_2(\uu_2),\Phi_3(\uu_3)),
$$
for any $\uu_i\in\Z_{2^i}^{\alpha_i}$, where $1\leq i\leq3$.

Let $\C\subseteq\Z_{2^s}^n$ be a $\Z_{2^s}$-additive code of length $n$.  We
say that the Gray map image of $\C$, say $C=\Phi_s(\C)$, is a
$\Z_{2^s}$-linear code of length $n2^{s-1}$.  Since $\C$ is a subgroup of
$\Z_{2^s}^n$, it is isomorphic to
$\Z_{2^s}^{t_1}\times\Z_{2^{s-1}}^{t_2}\times\dots\times\Z_2^{t_s}$, and we
say that $\C$, or equivalently $C=\Phi_s(\C)$, is of type $(n;t_1,\dots,t_s)$.
Similarly, if
$\C\subseteq\Z_2^{\alpha_1}\times\Z_4^{\alpha_2}\times\Z_8^{\alpha_3}$ is a
$\Z_2\Z_4\Z_8$-additive code, we say that its Gray map image $C=\Phi(\C)$ is a
$\Z_2\Z_4\Z_8$-linear code of length $\alpha_1+2\alpha_2+4\alpha_3$.  Since
$\C$ can be seen as a subgroup of $\Z_8^{\alpha_1+\alpha_2+\alpha_3}$, it is
isomorphic to $\Z_8^{t_1}\times\Z_4^{t_2}\times\Z_2^{t_3}$, and we say that
$\C$, or equivalently $C=\Phi(\C)$, is of type
$(\alpha_1,\alpha_2,\alpha_3;t_1,t_2,t_3)$.  Note that a $\Z_2\Z_4$-linear
code $\C$ \cite{ccsg,BookZ2Z4} can be seen as a $\Z_2\Z_4\Z_8$-linear code of
type $(\alpha_1,\alpha_2,0;0,t_2,t_3)$.  In this case, we also say that the
type of $\C$ is directly $(\alpha_1,\alpha_2;t_2,t_3)$.  Unlike linear codes
over finite fields, linear codes over rings do not have a basis, but there
exists a generator matrix for these codes having minimum number of rows.  If
$\C$ is a $\Z_2\Z_4\Z_8$-additive code of type
$(\alpha_1,\alpha_2,\alpha_3;t_1,t_2,t_3)$, then $|\C|=8^{t_1}4^{t_2}2^{t_3}$
and there exists a generator matrix with $t_1+t_2+t_3$ rows.

A binary code of length $N$, $2N$ codewords and minimum distance $N/2$ is called a Hadamard code. Hadamard codes can be constructed from Hadamard matrices \cite{Key,WMcwill}. Note that linear Hadamard codes are in fact first order Reed-Muller codes, or equivalently, the dual of extended Hamming codes \cite{WMcwill}. It is also important to note that Hadamard codes are two weight codes, which have been widely studied in \cite{ShiTwoHomWeight,TwoWeightSole}.
 Most Hadamard codes are, however, nonlinear, and their classification is still an open problem.  One fruitful way
of attacking it is to realise some of them as Gray map images of additive
codes over rings or over mixed alphabets and then to decide which of the
resulting codes are equivalent, each such result giving a partial
classification of nonlinear Hadamard codes.  The $\Z_4$ representation of the
Kerdock, Preparata and Goethals codes \cite{Sole} was the starting point of
this point of view, and codes over $\Z_{p^s}$ go back to Blake \cite{Blake}
and Shankar \cite{Shankar}. 

From a more practical point of view, since Hadamard codes are optimal and have a high correction capability, they appear in different aspects related to the transmission of information, such as in digital communication with satellites \cite{H07}, in CDMA phones to modulate the transmission of information and minimize interference with other transmissions \cite{SBT98} and, in general, in different OCDMA multiple access systems to allow access to multiple users asynchronously and simultaneously \cite{HYT04}. Other applications are found in cryptography \cite{Nyb91} or in information hiding (steganography and watermarking) \cite{YLL03}. See \cite{H07} for more applications in other fields.

Two structural properties of codes over $\Z_2$ are the rank and the 
dimension of the kernel. The rank of a code $C$ over $\Z_2$ is simply the
dimension of the linear span, $\langle C \rangle$,  of $C$.
The kernel of a code $C$ over $\Z_2$ is defined as
$\K(C)=\{\xx\in \Z_2^N : \xx+C=C \}$ \cite{BGH83}. If the all-zero vector belongs to $C$,
then $\K(C)$ is a linear subcode of $C$.
Note also that if $C$ is linear, then $K(C)=C=\langle C \rangle$.
We denote the rank of $C$ as $\rank(C)$ and the dimension of the kernel as $\kernel(C)$.
Both are equivalence invariants, so two codes with different pairs
$(r,k)=(\rank(C),\kernel(C))$ are nonequivalent; the converse is false in
general, and it is exactly this that makes classification difficult.

The $\Z_{2^s}$-additive codes such that after the Gray map $\Phi_s$ give
Hadamard codes are called $\Z_{2^s}$-additive Hadamard codes and the
corresponding images are called $\Z_{2^s}$-linear Hadamard codes.  Similarly,
the $\Z_2\Z_4\Z_8$-additive codes such that after the Gray map $\Phi$ give
Hadamard codes are called $\Z_2\Z_4\Z_8$-additive Hadamard codes and the
corresponding images are called $\Z_2\Z_4\Z_8$-linear Hadamard codes.  It is
known that $\Z_4$-linear Hadamard codes, that is, $\Z_2\Z_4$-linear Hadamard
codes with $\alpha_1=0$, and $\Z_2\Z_4$-linear Hadamard codes with
$\alpha_1\not=0$ can be classified by using either the rank or the dimension
of the kernel \cite{Kro:2001:Z4_Had_Perf,PRV06}.  Moreover, in \cite{KV2015},
it is shown that each $\Z_4$-linear Hadamard code is equivalent to a
$\Z_2\Z_4$-linear Hadamard code with $\alpha_1\not=0$.  Later, in
\cite{KernelZ2s,HadamardZps,EquivZ2s,ZpsEquivalance}, a recursive construction
for $\Z_{p^s}$-linear Hadamard codes, with $p$ prime, is described, the
linearity is established, and a partial classification by using the dimension
of the kernel is obtained, giving the exact amount of nonequivalent such codes
for some parameters.  In \cite{fernandez2019mathbb}, a complete classification
of $\Z_8$-linear Hadamard codes by using the rank and dimension of the kernel
is provided, giving the exact amount of nonequivalent such codes.  For any
$t\geq2$, the full classification of $\Z_p\Z_{p^2}$-linear Hadamard codes of
length $p^t$, with $\alpha_1\not=0$, $\alpha_2\not=0$, and $p\geq3$ prime, is
given in \cite{ZpZp2Construction,ZpZp2Classification}, by using just the
dimension of the kernel.

This paper is concerned with the mixed alphabet
$\Z_2^{\alpha_1}\times\Z_4^{\alpha_2}\times\Z_8^{\alpha_3}$ with
$\alpha_1\not=0$, $\alpha_2\not=0$ and $\alpha_3\not=0$.  A recursive
construction of $\Z_2\Z_4\Z_8$-additive Hadamard codes $\cH^{t_1,t_2,t_3}$ of
type $(\alpha_1,\alpha_2,\alpha_3;t_1,t_2,t_3)$, with $t_1\geq1$, $t_2\geq0$
and $t_3\geq1$, was given in \cite{Z2Z4Z8Construction}, where it was also
proved that several other natural recursive constructions produce permutation
equivalent codes. The corresponding binary codes are denoted by
$H^{t_1,t_2,t_3}=\Phi(\cH^{t_1,t_2,t_3})$. In \cite{Z2Z4Z8Linearity}, it was
shown that $H^{t_1,t_2,t_3}$ is linear if and only if $(t_1,t_2)=(1,0)$, and
that for the nonlinear ones the kernel is the Gray map image of the subcode of
the elements of order at most two, and hence
$\kernel(H^{t_1,t_2,t_3})=t_1+t_2+t_3$.

Those results give a complete classification for the lengths $2^t$ with $3\leq t\leq11$. However, no general formula for the rank of $H^{t_1,t_2,t_3}$ was known. The rank values used in \cite{Z2Z4Z8Linearity} were computed with {\sc Magma} \cite{Magma}, type by type, for these lengths. Consequently, although the classification in that range was complete, there was no general description of the rank for arbitrary type, nor of which distinct types of the same length share both the rank and the dimension of the kernel. In particular, when two codes had the same values of these two invariants, their nonequivalence was established by computational equivalence tests. Thus, a general rank formula is needed to understand theoretically how far the rank and the dimension of the kernel alone distinguish $\Z_2\Z_4\Z_8$-linear Hadamard codes for arbitrary length, and to identify precisely the cases in which an additional invariant is required.

The first and main contribution of this paper is the explicit determination of
the rank of $H^{t_1,t_2,t_3}$, given in Theorem \ref{thm:rank-main}.  The
proof is elementary but requires care.  Since the row rank of a matrix equals
its column rank, the rank of the code is the dimension of the space spanned by
its binary coordinate functions, each of which is a Boolean function of the
binary digits of the message.  The coordinate functions attached to the
coordinates in the $\Z_2$ and the $\Z_4$ parts span a space $V_0$, which
contains the digit functions and the quadratic carry functions.  The lowest
two binary digits of every coordinate in the $\Z_8$ part already lie in $V_0$;
only the top digit can increase the rank, and we describe the space it
contributes by an explicit and short list of generators.  Every value produced
by the formula agrees with the values obtained with {\sc Magma} in
\cite[Tables 2--4]{Z2Z4Z8Linearity}.

The second contribution is the classification that the rank and the dimension
of the kernel produce, together with a precise description of what they cannot
do.  Inside the family we prove, in Theorem \ref{thm:all-collisions}, that the
only pairs of distinct types with the same length, the same rank and the same
dimension of the kernel are $\bigl(H^{1,2,t-6},\,H^{2,0,t-5}\bigr)$, where
$t\geq7$, and $\bigl(H^{2,2,t-9},\,H^{3,0,t-8}\bigr)$, where $t\geq10$.
Consequently, writing $\cA_t=\lfloor(t^2+6)/12\rfloor$ for the number of
admissible types of length $2^t$, the pair $(r,k)$ separates exactly $\cA_t$
classes for $3\leq t\leq6$, then $\cA_t-1$ for $7\leq t\leq9$, and then
$\cA_t-2$ for every $t\geq10$.  In particular, $(r,k)$ classifies the whole
family if and only if $3\leq t\leq6$.  This is Corollary
\ref{cor:rk-classes}, and it is the exact statement of what remains to be done
by other means.

The third contribution is the comparison with the classical families by means
of the same two invariants.  We determine, in Theorem
\ref{thm:z2z4-coincidences}, all the coincidences with the $\Z_4$-linear and
the $\Z_2\Z_4$-linear Hadamard codes: there is exactly one infinite family,
namely $H^{2,1,t-7}$ and  $H^{5,t-9}$, where  $t\geq 10,$ and for every other type the rank or the dimension of the kernel already separates.  We also determine, in Proposition \ref{prop:z8-coincidences}, all the coincidences with the $\Z_8$-linear Hadamard codes; there are exactly five families. 

The paper is organised as follows. Section \ref{sec:preliminaries} contains the preliminaries. We recall the construction of $\Z_2\Z_4\Z_8$-linear Hadamard codes of type $(\alpha_1,\alpha_2,\alpha_3;t_1,t_2,t_3)$, with $\alpha_1\neq 0$, $\alpha_2\neq 0$, and $\alpha_3\neq 0$, given in \cite{Z2Z4Z8Construction}, together with a direct description of the columns of their generator matrices $A^{t_1,t_2,t_3}$, which replaces the recursive construction by a closed combinatorial statement and is the form used in all subsequent proofs. We also recall the linearity and kernel results given in \cite{Z2Z4Z8Linearity}. Section \ref{sec:rank} develops the Boolean function method and proves the rank formula.  Section \ref{sec:rk} carries out the classification by the rank and the dimension of the kernel, inside the family and against the classical families, and states precisely which pairs
remain unseparated.  Section \ref{sec:conclusions} summarises the results.

\section{Preliminaries and the recursive family}\label{sec:preliminaries}

In this section, we recall the recursive construction of the matrices
$A^{t_1,t_2,t_3}$ given in \cite{Z2Z4Z8Construction}.  We then replace the
recursion by a direct description of their columns, which is the form used
throughout the proofs of Section \ref{sec:rank}.  Finally, we recall the
linearity and the kernel of $H^{t_1,t_2,t_3}$ obtained in
\cite{Z2Z4Z8Linearity}.

\subsection{The recursive construction of $A^{t_1,t_2,t_3}$}
\label{subsec:construction}

We now recall the construction of \cite{Z2Z4Z8Construction}. Let $\zero,\one,\two,\dots,\mathbf{7}$ denote the vectors having the elements $0,1,2,\dots,7$ repeated in every coordinate, the length being always clear from the context.  Let $t_1\geq1$, $t_2\geq0$ and $t_3\geq1$ be integers.
The matrices $A^{t_1,t_2,t_3}$, having $t_1$ rows of order $8$, $t_2$ rows of
order $4$ and $t_3$ rows of order $2$, are constructed recursively as
follows.  We start from
\begin{equation}\label{eq:A101}
A^{1,0,1}=\left(\begin{array}{cc|c|c}
1&1&2&4\\
0&1&1&1\\
\end{array}\right),
\end{equation}
and then apply the three constructions below.  If
$A^{\ell-1,0,1}=(A_1\mid A_2\mid A_3)$ with $\ell\geq2$ is already
constructed, we form
\begin{equation}\label{eq:construction-order8}
A^{\ell,0,1}=\left(\begin{array}{cc|ccccc|ccccc}
A_1&A_1&M_1&A_2&A_2&A_2&A_2&M_2&A_3&A_3&\cdots&A_3\\
\zero&\one&\one&\zero&\one&\two&\mathbf{3}&\one&\zero&\one&\cdots&\mathbf{7}
\end{array}\right),
\end{equation}
where $M_1=\{\zz^T:\zz\in\{2\}\times\{0,2\}^{\ell-1}\}$ and
$M_2=\{\zz^T:\zz\in\{4\}\times\{0,2,4,6\}^{\ell-1}\}$.  Construction
\eqref{eq:construction-order8} is applied until $\ell=t_1$.  If
$A^{t_1,\ell-1,1}=(A_1\mid A_2\mid A_3)$ with $t_1\geq1$ and $\ell\geq1$ is
already constructed, we form
\begin{equation}\label{eq:construction-order4}
A^{t_1,\ell,1}=\left(\begin{array}{cc|ccccc|cccc}
A_1&A_1&M_1&A_2&A_2&A_2&A_2&A_3&A_3&A_3&A_3\\
\zero&\one&\one&\zero&\one&\two&\mathbf{3}&\zero&\two&\four&\six
\end{array}\right),
\end{equation}
where now $M_1=\{\zz^T:\zz\in\{2\}\times\{0,2\}^{t_1+\ell-1}\}$, and
Construction \eqref{eq:construction-order4} is applied until $\ell=t_2$.
Finally, if $A^{t_1,t_2,\ell-1}=(A_1\mid A_2\mid A_3)$ with $\ell\geq2$ is
already constructed, we form
\begin{equation}\label{eq:construction-order2}
A^{t_1,t_2,\ell}=\left(\begin{array}{cc|cc|cc}
A_1&A_1&A_2&A_2&A_3&A_3\\
\zero&\one&\zero&\two&\zero&\four
\end{array}\right),
\end{equation}
and Construction \eqref{eq:construction-order2} is applied until
$\ell=t_3$. Thus, in this way, we obtain $A^{t_1,t_2,t_3}$.

Summarising, in order to reach $A^{t_1,t_2,t_3}$ from $A^{1,0,1}$ we first
adds $t_1-1$ rows of order $8$ by applying \eqref{eq:construction-order8}
exactly $t_1-1$ times, then add $t_2$ rows of order $4$ by applying
\eqref{eq:construction-order4} exactly $t_2$ times, and finally add $t_3-1$
rows of order $2$ by applying \eqref{eq:construction-order2} exactly $t_3-1$
times.  Note that the first row of $A^{t_1,t_2,t_3}$ is always
$(\one\mid\two\mid\four)$, and that it has order $2$; it is therefore one of
the $t_3$ rows of order two, and it will play a distinguished role
throughout.  We denote by $\cH^{t_1,t_2,t_3}$ the $\Z_2\Z_4\Z_8$-additive
code generated by $A^{t_1,t_2,t_3}$ and by
$H^{t_1,t_2,t_3}=\Phi(\cH^{t_1,t_2,t_3})$ the corresponding
$\Z_2\Z_4\Z_8$-linear code.

\begin{example}\label{ex:matricesA}
By using the constructions described in (\ref{eq:construction-order8}), (\ref{eq:construction-order4}), and (\ref{eq:construction-order2}), we obtain the following matrices $A^{2,0,1}$, $A^{1,1,1}$ and $A^{1,1,2}$, respectively, starting from $A^{1,0,1}$ given in (\ref{eq:A101}):
\begin{equation}\label{eq:A201}
A^{2,0,1}=\left(\begin{array}{cc|cc|cc}
11&11&22&2222&4444&44444444\\
01&01&02&1111&0246&11111111\\
00&11&11&0123&1111&01234567
\end{array}\right),
\end{equation}
\begin{equation}\label{eq:A111-reordered}
A^{1,1,1}=
\left(\begin{array}{cc|cc|c}
 11 &11 &22&2222 &4444\\
 01&01 &02&1111 &1111\\
 00&11 &11&0123 &0246\\ 
\end{array}\right), 
\end{equation}
$$
A^{1,1,2}=
\left(\begin{array}{cc|cc|cc}
 1111 & 1111 &222222 &222222 &4444 &4444\\
 0101 & 0101 &021111 &021111 &1111 &1111\\
 0011 & 0011 &110123 &110123 &0246 &0246\\ 
 0000 &1111  &000000 &222222 &0000 &4444\\
\end{array}\right).
$$
\end{example}

\subsection{A direct description of the columns of $A^{t_1,t_2,t_3}$}
\label{subsec:columns}

The recursion defines the family, but for the proofs it is far more convenient
to know directly which columns occur.  We give such a description and check
that it agrees with the recursion.  It is stated for $t_3=1$, since the
remaining rows of order two are added by the duplication
\eqref{eq:construction-order2}.  Recall that an element of $\Z_{2^s}$ is a
unit if and only if it is odd.

\begin{definition}\label{def:normalized-primitive}
Let $\zz=(z_1,\dots,z_\ell)\in\Z_{2^s}^\ell$.  We say that $\zz$ is
\emph{primitive} if at least one of its entries is odd, or equivalently if
its entries do not all lie in the ideal $2\Z_{2^s}$.  If $\zz$ is primitive
and $p=\min\{i:z_i\text{ is odd}\}$ is the position of its first odd entry,
we say that $\zz$ is \emph{normalized primitive} when $z_p=1$.
\end{definition}

Normalization selects exactly one vector from each orbit under multiplication
by a unit: if the first odd entry of $\zz$ is $z_p$, then $z_p^{-1}\zz$ is
normalized primitive, and this representative is unique because $rz_p=r'z_p=1$
forces $r=r'=z_p^{-1}$.  For instance $(3,2)\in\Z_4^2$ is primitive but not
normalized; its first odd entry is $3$ and, since $3^{-1}=3$ in $\Z_4$, its normalized representative is $(1,2)$. The six normalized primitive vectors of $\Z_4^2$ are $(1,0)$, $(1,1)$, $(1,2)$, $(1,3)$, $(0,1)$ and $(2,1)$.  Note that normalization fixes the entry at the position $p$ only: neither the even entries before $p$ nor the entries after $p$ are forced to vanish.

\begin{proposition}\label{prop:column-description}
Let $t_1\geq1$ and $t_2\geq0$. Delete from every column of $A^{t_1,t_2,1}$ its first entry, which is $1$, $2$ or $4$ according to whether the additive coordinate lies in the $\Z_2$, the $\Z_4$ or the $\Z_8$ part. Then, the sets of the remaining truncated columns are the following.
\begin{enumerate}[label=\textup{(\roman*)}]
\item The $\Z_2$ part consists of every vector of $\Z_2^{t_1+t_2}$, each one
occurring exactly once.
\item The $\Z_4$ part consists of every normalized primitive vector of
$\Z_4^{t_1+t_2}$, each one occurring exactly once.
\item The $\Z_8$ part consists of every vector $(q_1,\dots,q_{t_1},2r_1,\dots,2r_{t_2})$, where $(q_1,\dots,q_{t_1})$ is normalized primitive in $\Z_8^{t_1}$ and
$(r_1,\dots,r_{t_2})\in\Z_4^{t_2}$ is arbitrary, each one occurring exactly
once.
\end{enumerate}
For $t_3>1$, the matrix $A^{t_1,t_2,t_3}$ is obtained from $A^{t_1,t_2,1}$ by
applying the duplication \eqref{eq:construction-order2} exactly $t_3-1$
times.
\end{proposition}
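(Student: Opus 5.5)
Proof plan. We argue by induction along the recursion, first on $t_1$ with $t_2=0$ using \eqref{eq:construction-order8}, and then on $t_2$ with $t_1$ fixed using \eqref{eq:construction-order4}. The last sentence about $t_3>1$ holds by definition, since \eqref{eq:construction-order2} is the only construction applied after the first two. For the base case $A^{1,0,1}$, the truncated columns are: $0,1$ in the $\Z_2$ part (all of $\Z_2^1$); $1$ in the $\Z_4$ part (the only normalized primitive vector of $\Z_4^1$); and $1$ in the $\Z_8$ part (the only normalized primitive vector of $\Z_8^1$). In each step, the new row is appended at the bottom. Hence the truncated columns of the new matrix are the old truncated columns with one extra last coordinate, together with the columns coming from the blocks $M_1,M_2$. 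For each part we then check three things: the new multiset of truncated columns has the claimed elements, it has the right size, and there are no repetitions.

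For the $\Z_2$ part, $(A_1\,A_1)$ with last row $(\zero\,\one)$ clearly gives $\Z_2^{m}\times\{0,1\}$. For the $\Z_4$ part in either construction, let $m$ be the old number of non-first rows. The old columns are the normalized primitive vectors $\vv\in\Z_4^m$, and appending $0,1,2,3$ gives all $(\vv,c)$ with $\vv$ normalized primitive, which are exactly the normalized primitive vectors whose first odd entry lies among the first $m$ coordinates. The new block $M_1$ with last entry $1$ gives all $(\zz,1)$ with $\zz\in\{0,2\}^m$, i.e.\ the normalized primitive vectors whose first odd entry is the last one. The truncation removes the leading $2$ of $M_1$; this is consistent because every column of the $\Z_4$ part has first entry $2$. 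These two sets are disjoint and exhaust the normalized primitive vectors of $\Z_4^{m+1}$.

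For the $\Z_8$ part under \eqref{eq:construction-order8} ($t_2=0$), the same argument applies: appending $0,\dots,7$ to the old normalized primitive vectors of $\Z_8^{\ell-1}$, together with $M_2$, whose truncated columns are $\{0,2,4,6\}^{\ell-1}$ with last entry $1$, gives all normalized primitive vectors of $\Z_8^{\ell}$. Under \eqref{eq:construction-order4} there is no new $M$ block in the $\Z_8$ part. The old columns $(\mathbf q,2r_1,\dots,2r_{\ell-1})$ receive appended entries $0,2,4,6=2r_\ell$ with $r_\ell\in\Z_4$, which is exactly the claimed set for $t_2=\ell$. Distinctness in every case follows from distinctness of the old columns and of the appended entries.

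The main obstacle is bookkeeping. First, we must check that \eqref{eq:construction-order4} uses $M_1$ of length $t_1+\ell$ with first entry the order-two row: after truncation this gives $\{0,2\}^{t_1+\ell-1}$ followed by $1$. Second, we must confirm the indexing: the non-first rows are $t_1+t_2$ in number because the first row of $A^{1,0,1}$ counts as the order-two row $(\one\mid\two\mid\four)$. Third, in the $\Z_8$ part of \eqref{eq:construction-order8}, the entries $0,2,4,6$ in $M_2$ are the even entries preceding the first odd one, matching Definition \ref{def:normalized-primitive}.
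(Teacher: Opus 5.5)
Your proposal is correct and follows the paper's proof essentially step for step. Like the paper, you induct on $t_1$ through \eqref{eq:construction-order8} and then on $t_2$ through \eqref{eq:construction-order4}. You split the normalized primitive vectors according to whether the first odd entry lies in an old position, which comes from the repeated blocks $A_2$, $A_3$ with all possible new entries, or in the new position, which comes from the blocks $M_1$, $M_2$. In the order-four step you extend the $\Z_8$ part by $\zero,\two,\four,\six$. Your extra bookkeeping is accurate and is left implicit in the paper: the lengths of $M_1$ and $M_2$ match the number of rows of the previous matrix, and the two subsets are disjoint.
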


\begin{proof}
For $A^{1,0,1}$, the three assertions follow directly from
\eqref{eq:A101}: the $\Z_2$ part is $\{(0),(1)\}=\Z_2^1$, the $\Z_4$ part is
$\{(1)\}$, which is the unique normalized primitive vector of $\Z_4^1$, and
the $\Z_8$ part is $\{(1)\}$, which is the unique normalized primitive vector
of $\Z_8^1$.

Suppose that the description holds for $A^{\ell-1,0,1}$ and that a new row of
order $8$ is added by \eqref{eq:construction-order8}.  In the $\Z_2$ part,
every old truncated column occurs twice, once with new entry $0$ and once with
new entry $1$, so all the vectors of $\Z_2^{\ell}$ occur exactly once.  In the
$\Z_4$ part, the four repetitions of $A_2$ carry the new entries
$\zero,\one,\two,\mathbf{3}$, so every old normalized primitive vector is
extended by all four possible new entries; these are precisely the normalized
primitive vectors of $\Z_4^{\ell}$ whose first odd entry occurs among the
first $\ell-1$ positions.  The extra block $M_1$ contributes the columns whose
first $\ell-1$ entries lie in $\{0,2\}$ and whose new entry is $1$, that is,
precisely those whose first odd entry occurs in the new position.  Hence, every
normalized primitive vector of $\Z_4^{\ell}$ occurs exactly once, and no other
primitive vector occurs.  The same argument applies verbatim to the $\Z_8$
part, with $A_3$ repeated eight times with new entries
$\zero,\one,\dots,\mathbf{7}$ and with the extra block $M_2$, whose entries in
the first $\ell-1$ positions lie in $\{0,2,4,6\}$ and whose new entry is $1$.

Suppose now that the description holds for $A^{t_1,\ell-1,1}$ and that a new
row of order $4$ is added by \eqref{eq:construction-order4}.  In the $\Z_2$
and the $\Z_4$ parts, the argument is unchanged.  In the $\Z_8$ part, a row of
order $4$ can only have entries of order at most $4$, that is, entries in
$\{0,2,4,6\}$, and the four repetitions of $A_3$ carry exactly the new entries
$\zero,\two,\four,\six$; so those columns are extended by all the values $2r$
with $r\in\Z_4$, which is item (iii).  The two inductions, on $t_1$ and then
on $t_2$, prove the three assertions. The last statement of the
proposition is Construction \eqref{eq:construction-order2} itself.
\end{proof}

The type of the codes now follows by counting the coordinates of each kind.

\begin{proposition}[\cite{Z2Z4Z8Construction}]\label{prop:coordinate-counts}
Let $t_1\geq1$, $t_2\geq0$ and $t_3\geq1$, and let the type of
$\cH^{t_1,t_2,t_3}$ be $(\alpha_1,\alpha_2,\alpha_3;t_1,t_2,t_3)$.  Then,
$\alpha_1=2^{t_1+t_2+t_3-1}$,
$\alpha_1+2\alpha_2=4^{t_1+t_2}2^{t_3-1}$, and
$\alpha_1+2\alpha_2+4\alpha_3=8^{t_1}4^{t_2}2^{t_3-1}$.
In particular, $\alpha_1\not=0$, $\alpha_2\not=0$ and $\alpha_3\not=0$, the
binary length of $H^{t_1,t_2,t_3}$ is $N=2^t$ with
\begin{equation}\label{eq:length-relation}
t+1=3t_1+2t_2+t_3,
\end{equation}
and $|H^{t_1,t_2,t_3}|=8^{t_1}4^{t_2}2^{t_3}=2^{t+1}=2N$.
\end{proposition}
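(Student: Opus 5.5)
The plan is to count the columns of each kind using Proposition \ref{prop:column-description}, first for $t_3=1$, and then to track how the duplication \eqref{eq:construction-order2} changes the counts.

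For $t_3=1$, set $m=t_1+t_2$.

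\textbf{The $\Z_2$ part.} By item (i), $\alpha_1=2^{m}$.

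\textbf{The $\Z_4$ part.} By item (ii), $\alpha_2$ is the number of normalized primitive vectors of $\Z_4^{m}$. The primitive vectors number $4^m-2^m$, since we remove those with all entries in $\{0,2\}$. Each unit orbit has size $|\Z_4^*|=2$, and normalization picks exactly one representative per orbit, as remarked after Definition \ref{def:normalized-primitive}. Hence $\alpha_2=(4^m-2^m)/2$, and
$$\alpha_1+2\alpha_2=2^m+4^m-2^m=4^m.$$

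\textbf{The $\Z_8$ part.} By item (iii), the count is the number of normalized primitive vectors of $\Z_8^{t_1}$ times $4^{t_2}$. The first factor is $(8^{t_1}-4^{t_1})/4$, because there are $8^{t_1}-4^{t_1}$ primitive vectors and the unit orbits have size $|\Z_8^*|=4$. So
$$4\alpha_3=(8^{t_1}-4^{t_1})4^{t_2},\qquad \alpha_1+2\alpha_2+4\alpha_3=4^{m}+8^{t_1}4^{t_2}-4^{t_1}4^{t_2}=8^{t_1}4^{t_2}.$$

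\textbf{The effect of $t_3$.} Each application of \eqref{eq:construction-order2} doubles every block, so each of $\alpha_1,\alpha_2,\alpha_3$ is multiplied by $2^{t_3-1}$. This gives the three stated formulas. Nonvanishing follows because $t_1\geq1$ makes $8^{t_1}>4^{t_1}$, and $m\geq1$ makes $4^m>2^m$.

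\textbf{Length and cardinality.} The binary length is
$$N=8^{t_1}4^{t_2}2^{t_3-1}=2^{3t_1+2t_2+t_3-1},$$
which gives \eqref{eq:length-relation}. The cardinality $|\cH^{t_1,t_2,t_3}|=8^{t_1}4^{t_2}2^{t_3}$ follows from the type, as recalled in the introduction. Since $\Phi$ is injective, $|H^{t_1,t_2,t_3}|=2^{t+1}=2N$.

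The only point needing care is the orbit-counting step. One must check that the unit group acts freely on primitive vectors, which holds because a unit $r\neq1$ changes the first odd entry. One must also check that each orbit contains exactly one normalized vector, which was already justified in the paper. Everything else is routine arithmetic.
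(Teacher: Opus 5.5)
Your proposal is correct and follows essentially the route the paper indicates. The paper cites this result from \cite{Z2Z4Z8Construction} and says only that it follows by counting the coordinates of each kind using Proposition \ref{prop:column-description}; that is exactly your count of $(4^m-2^m)/2$ and $(8^{t_1}-4^{t_1})4^{t_2}/4$ columns, followed by the factor $2^{t_3-1}$ from the duplication, with the cardinality coming from the assumed type and the injectivity of $\Phi$.
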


We illustrate Proposition \ref{prop:column-description} on the smallest
nonlinear member of the family, which will be used repeatedly.

\begin{example}\label{ex:A111-columns}
Consider the matrix $A^{1,1,1}$, given in \eqref{eq:A111-reordered}. Delete the
first entry of every column.  In the $\Z_2$ part, the four truncated columns
are $(0,0)$, $(1,0)$, $(0,1)$, $(1,1)$, that is, all of $\Z_2^2$; in the
$\Z_4$ part the six truncated columns are $(0,1)$, $(2,1)$, $(1,0)$, $(1,1)$,
$(1,2)$, $(1,3)$, exactly the six normalized primitive vectors of $\Z_4^2$
listed above; and in the $\Z_8$ part the four truncated columns are $(1,2r)$
with $r\in\Z_4$, since $(1)$ is the only normalized primitive vector of
$\Z_8^1$.  This is what items (i), (ii) and (iii) assert.  The type is
$(4,6,4;1,1,1)$, the binary length is $4+2\cdot6+4\cdot4=32=2^5$, in
accordance with $t+1=6=3+2+1$, and $|H^{1,1,1}|=8\cdot4\cdot2=64=2\cdot32$.
\end{example}

\subsection{Linearity and kernel of $H^{t_1,t_2,t_3}$}
\label{subsec:recalled}

We finally recall the following three results of \cite{Z2Z4Z8Construction} and \cite{Z2Z4Z8Linearity}.
\begin{theorem}[\cite{Z2Z4Z8Construction}]\label{thm:recalled-hadamard}
Let $t_1\geq1$, $t_2\geq0$ and $t_3\geq1$ be integers.  Then, the
$\Z_2\Z_4\Z_8$-additive code $\cH^{t_1,t_2,t_3}$ generated by
$A^{t_1,t_2,t_3}$ is a $\Z_2\Z_4\Z_8$-additive Hadamard code; that is,
$H^{t_1,t_2,t_3}=\Phi(\cH^{t_1,t_2,t_3})$ is a binary Hadamard code of length
$2^t$ with $t+1=3t_1+2t_2+t_3$.
\end{theorem}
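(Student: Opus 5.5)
The plan is to verify the two defining properties of a Hadamard code separately. The first is that there are $2N$ codewords. The second is that the minimum distance is $N/2$, where $N=2^t$ is the binary length.

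The length and the cardinality are already given by Proposition~\ref{prop:coordinate-counts}: $N=2^t$ with $t+1=3t_1+2t_2+t_3$, and $|H^{t_1,t_2,t_3}|=2^{t+1}=2N$. So only the distance needs work. For that I will use the standard fact that Carlet's Gray map $\phi_s$ is an isometry from $\Z_{2^s}$ with the homogeneous weight to $\Z_2^{2^{s-1}}$ with the Hamming weight. That is, $d_H(\phi_s(x),\phi_s(y))=\wt_H(\phi_s(x-y))$, and this value depends only on the ideal generated by $x-y$:
\begin{itemize}
\item in $\Z_8$: weight $4$ for the element $4$, and $2$ for every other nonzero element;
\item in $\Z_4$: weight $2$ for the element $2$, and $1$ for the units;
\item in $\Z_2$: the usual Hamming weight.
\end{itemize}
Hence $d(H^{t_1,t_2,t_3})$ equals the minimum homogeneous weight of a nonzero codeword of $\cH^{t_1,t_2,t_3}$. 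It therefore suffices to show that every codeword other than $\zero$ and $(\one\mid\two\mid\four)$ has homogeneous weight exactly $N/2$. The vector $(\one\mid\two\mid\four)$ is the first row of the generator matrix, and its Gray image is the all-ones vector.

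I will prove this by induction along the recursion, starting from $A^{1,0,1}$. The base case is a direct check of its eight codewords, whose binary length is $8$. Construction~\eqref{eq:construction-order2} is the easy step. A new codeword has the form $(\uu,\uu)$ or $(\uu,\uu+(\zero\mid\two\mid\four))$. Since $\Phi$ of $(\one\mid\two\mid\four)$ is the all-ones vector and the Gray map is additive on this row, the binary code becomes $\{(\xx,\xx),(\xx,\xx+\one)\}$. This is the Sylvester doubling, which maps Hadamard codes to Hadamard codes.

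For Constructions~\eqref{eq:construction-order8} and~\eqref{eq:construction-order4}, write a new codeword as $\uu'+\mu\,\vv$. Here $\uu'$ is obtained from an old codeword $\uu$ by repeating its blocks, $\vv$ is the new row, and $\mu$ is the new coefficient. The repeated blocks $A_i$ carry the new entries $\zero,\one,\dots$. So in those blocks a coordinate $u_j$ of $\uu$ is replaced by the multiset $\{u_j+\mu k\}$, where $k$ runs over the listed constants. When $\mu$ is a unit, this runs over a full set of residues, and the average of the homogeneous weight over a full ring is the constant $2^{s-1}/2$ per coordinate. When $\mu$ is a nonunit, the values run over a coset of an ideal $I$. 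For a nonzero ideal, the average weight over any coset of $I$ is again $2^{s-2}$, the half-length. For $I=0$, that is $\mu=0$, the weight is just the old weight, multiplied by the number of copies.

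Summing block by block, I expect in each case to get exactly half of the new length from the repeated blocks. The remaining contributions from the blocks $M_1$ and $M_2$ should then combine, via the induction hypothesis applied to $\uu$, to give total weight $N/2$. The exception is $\uu\in\{\zero,(\one\mid\two\mid\four)\}$ with $\mu=0$.

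The main obstacle is the bookkeeping for the extra blocks $M_1$ and $M_2$ and for the case where $\mu$ is a nonunit, for example $\mu=2$ or $4$ in the order-$8$ step. There, part of the repeated blocks is not averaged over the whole ring, and $M_1,M_2$ carry only even entries in the old positions. For these cases I plan to use the column description of Proposition~\ref{prop:column-description} instead of the recursion. Each coordinate of $\lambda A$ is then the inner product of $\lambda$ with a truncated column, plus the first-row term. I would count, for a fixed nonzero $\lambda$, how many normalized primitive columns give each value. The counting reduces to the number of solutions of a linear equation over $\Z_{2^s}$, restricted to vectors whose first odd entry lies in a given position, and it should show that the homogeneous weight is balanced. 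This fallback to counting over Proposition~\ref{prop:column-description} is the step I expect to require the most care, especially in separating the cases according to the $2$-adic valuation of $\lambda$ restricted to the order-$8$ coordinates.
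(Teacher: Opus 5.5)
Your preliminary reductions are sound. Carlet's map does satisfy $d_H(\Phi(\xx),\Phi(\mathbf{y}))=\wt_H(\Phi(\xx-\mathbf{y}))$: coordinatewise, the distance is $0$, $2^{s-1}$ or $2^{s-2}$ according as the bitwise sum of the two symbols is $0$, $2^{s-1}$ or neither, and this is the same as $x-y$ being $0$, $2^{s-1}$ or neither. The order-two step is the doubling you describe, with two small slips: the second copy is shifted by $(\one\mid\two\mid\four)$, and the base code has $16$ codewords, not $8$.

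The paper only quotes this theorem from \cite{Z2Z4Z8Construction}, so everything rests on your treatment of Constructions \eqref{eq:construction-order8} and \eqref{eq:construction-order4}. That is where there is a genuine gap: the argument is only announced, and the heuristic behind it is false. The repeated blocks do \emph{not} contribute half their length when the new coefficient $\mu$ is a nonunit. If $\mu$ is even, the two $\Z_2$ copies are $(\uu_1,\uu_1)$. If $\mu\equiv0\pmod 4$, the four $\Z_4$ copies all equal $\uu_2$. For $\mu=0$, the copies multiply the weights of the three parts by $2,4,8$ (resp.\ $2,4,4$), which is not a multiple of $\wt_H(\Phi(\uu))$. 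The induction hypothesis on $\uu$ says nothing about $M_1$ and $M_2$ until these blocks are expressed through $\uu$. The fallback count over Proposition~\ref{prop:column-description} is never carried out.

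The missing idea is that the extra blocks are fixed ring images of the old coordinates, shifted by $\mu$. In the order-$8$ step, item (i) of Proposition~\ref{prop:column-description} shows that $M_1$ contributes the $\Z_4$-vector $2\uu_1+\mu\one$, with $\uu_1$ read as a $0/1$ vector. Write every $r\in\Z_4^{\ell-1}$ either as $vp$ with $v\in\{1,3\}$ and $p$ normalized primitive, or as $2\lambda$ with $\lambda\in\{0,1\}^{\ell-1}$. Using $2v\epsilon=2\epsilon$, the block $M_2$ then contributes the $\Z_8$-vectors $2\uu_2+\mu\one$, $6\uu_2+\mu\one$ and $4\uu_1+\mu\one$. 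With $\wt$ the homogeneous weight, for $\mu=0$ one gets
\[
2\wt(\uu_1)+2\wt(\uu_1)+4\wt(\uu_2)+\bigl(4\wt(\uu_2)+4\wt(\uu_1)\bigr)+8\wt(\uu_3)=8\wt(\uu),
\]
and this is where the induction hypothesis enters. For $\mu\not=0$ the weight is exactly $4N$, half the new length $8N$, with no hypothesis on $\uu$ at all. This uses $\wt(w)+\wt(w+2)=2$ in $\Z_4$ and, in $\Z_8$, $\wt(2w+\mu)+\wt(6w+\mu)=4$ for $\mu\in\{2,6\}$. It also uses $\wt(2w+4)+\wt(6w+4)=4\wt(w+2)$ and $\wt(4u+4)=4-\wt(4u)$ for $u\in\{0,1\}$. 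The order-$4$ step works the same way with $M_1$ alone, giving $4\wt(\uu)$ for $\mu=0$ and $2N$ otherwise. With this bookkeeping your induction closes; without it, the proof stops exactly where the content of the theorem begins.
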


\begin{theorem}[\cite{Z2Z4Z8Linearity}]\label{thm:recalled-linearity-kernel}
Let $t_1\geq1$, $t_2\geq0$ and $t_3\geq1$ be integers.
\begin{enumerate}[label=\textup{(\roman*)}]
\item The code $H^{t_1,t_2,t_3}$ is linear if and only if $(t_1,t_2)=(1,0)$;
equivalently, the linear members of the family are exactly the codes
$H^{1,0,t_3}$ with $t_3\geq1$.
\item Assume that $H^{t_1,t_2,t_3}$ is nonlinear and let
$\cH^{t_1,t_2,t_3}_2$ be the subcode of $\cH^{t_1,t_2,t_3}$ consisting of all
its elements of order at most two.  Then,
$\K(H^{t_1,t_2,t_3})=\Phi\bigl(\cH^{t_1,t_2,t_3}_2\bigr)$ and $\kernel(H^{t_1,t_2,t_3})=t_1+t_2+t_3$.
\item If $\ww_1,\dots,\ww_{t_1+t_2+t_3}$ are the rows of $A^{t_1,t_2,t_3}$,
in any order, and $o_i=\ord(\ww_i)$, then $\{\Phi(\tfrac{o_i}{2}\ww_i):1\leq
i\leq t_1+t_2+t_3\}$ is a basis of $\K(H^{t_1,t_2,t_3})$.
\end{enumerate}
\end{theorem}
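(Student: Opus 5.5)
Since the three items are recalled from \cite{Z2Z4Z8Linearity}, we only sketch how one would recover them from Proposition~\ref{prop:column-description}. The basic tool is the additivity of $\phi_s$ on its binary digits, $\sum\lambda_i\phi_s(2^i)=\phi_s(\sum\lambda_i2^i)$. From it we derive the usual ``carry'' identity. For $u,v\in\Z_{2^s}$, $\phi_s(u)+\phi_s(v)$ equals $\phi_s(u+v)$ corrected by Gray images of the carries, which are products of binary digits of $u$ and $v$. In particular, if $v\in 2^{s-1}\Z_{2^s}$, then $\phi_s(v)$ is $\zero$ or the all-one vector and $\phi_s(u+v)=\phi_s(u)+\phi_s(v)$. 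Applied coordinatewise to $\Phi$, this gives $\Phi(\uu+\vv)=\Phi(\uu)+\Phi(\vv)$ whenever $\ord(\vv)\le2$. Hence $\Phi(\cH_2^{t_1,t_2,t_3})\subseteq \K(H^{t_1,t_2,t_3})$, and this inclusion holds for all types.

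Next we would handle the dimension and the basis, which is item (iii). Since $\cH^{t_1,t_2,t_3}\cong\Z_8^{t_1}\times\Z_4^{t_2}\times\Z_2^{t_3}$ and the rows $\ww_i$ generate it with orders $o_i$, the subgroup $\cH_2^{t_1,t_2,t_3}$ is elementary abelian. It is generated by the independent elements $\tfrac{o_i}{2}\ww_i$, so it has order $2^{t_1+t_2+t_3}$. The map $\Phi$ is injective and, by the previous paragraph, additive on this subgroup. Therefore the images $\Phi(\tfrac{o_i}{2}\ww_i)$ form a basis of a linear subspace of dimension $t_1+t_2+t_3$ contained in the kernel. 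Once equality is proved, this is item (iii), and the ordering of the rows is irrelevant.

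For item (i) with $(t_1,t_2)=(1,0)$, the dimension $t_1+t_2+t_3=1+t_3$ already equals $t+1$, because $t+1=3+t_3-\ldots$; more precisely, $t+1=3\cdot1+0+t_3=t_3+3$. So a direct check is needed. We would verify that for the single order-$8$ row $\ww$ and its multiples, every carry term $\Phi(\cdot)$ of two codewords lies in $H$. Proposition~\ref{prop:column-description} gives the columns explicitly, and the carries are generated by $\Phi(2\ww)$, $\Phi(4\ww)$ and the order-two rows. Consequently $H$ is closed under addition. For all other types we would exhibit two codewords whose binary sum is not in $H$. Natural candidates are $\Phi(\ww_i)$ and $\Phi(\ww_j)$ for two rows of order $\ge4$ other than the first row. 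For $t_2\ge1$ and for $t_1\ge2$ respectively, we would reduce to the base cases $A^{1,1,1}$ and $A^{2,0,1}$ of Example~\ref{ex:matricesA}, where the check is a finite computation. The reduction uses the fact that the recursive constructions repeat old columns, so the failing coordinates persist in the larger code.

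The main obstacle is the reverse inclusion $\K(H)\subseteq\Phi(\cH_2)$ in the nonlinear case, that is, showing that no $\Phi(\uu)$ with $\ord(\uu)\ge4$ lies in the kernel. The kernel of a code containing $\zero$ lies in the code, so any element of $\K(H)$ has the form $\Phi(\uu)$ with $\uu\in\cH^{t_1,t_2,t_3}$. Subtracting a suitable element of $\cH_2$, which is harmless by the first paragraph, we may assume that $\uu$ is a nontrivial combination of rows of order $4$ or $8$. We would then look for a row $\ww_j$ such that the carry vector of $\Phi(\uu)+\Phi(\ww_j)$ is not the Gray image of a codeword. To certify this, we would use the Hadamard property: every nonzero codeword other than the all-one vector has weight exactly $N/2$. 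So it suffices to show that the carry vector, which is the correction between $\Phi(\uu)+\Phi(\ww_j)$ and $\Phi(\uu+\ww_j)$, has weight different from $0$, $N/2$ and $N$. The weight of the carry vector can be counted directly from the column description in Proposition~\ref{prop:column-description}, by counting the normalized primitive columns on which both digits involved in the carry equal $1$. The delicate part will be organizing this count uniformly in $t_1,t_2,t_3$. The fallback plan is to show that such a $\uu$ restricts to a kernel element of $A^{2,0,1}$ or $A^{1,1,1}$ up to duplication, and then to rule it out there by hand.
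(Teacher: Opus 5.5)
The paper gives no proof of this statement; it is quoted from \cite{Z2Z4Z8Linearity}, so your sketch has to stand on its own.

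Your first two paragraphs are correct. The identity $\phi_s(u+2^{s-1})=\phi_s(u)+\phi_s(2^{s-1})$ gives $\Phi(\cH_2)\subseteq\K(H)$ for every type, and the vectors $\Phi(\tfrac{o_i}{2}\ww_i)$ form a basis of the $(t_1+t_2+t_3)$-dimensional space $\Phi(\cH_2)$.

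The linear case contains two errors. First, $t_1+t_2+t_3=1+t_3$ does not ``already equal'' $t+1=t_3+3$. What this computation shows is that for $H^{1,0,t_3}$ the kernel is the whole code, of dimension $t_3+3$. So item (iii) is false there and must be read under the nonlinearity hypothesis of item (ii), as your ``once equality is proved'' tacitly does; you should say so.

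Second, ``every carry term lies in $H$, consequently $H$ is closed under addition'' is a non sequitur. We have $\Phi(\uu)+\Phi(\vv)=\Phi(\uu+\vv)+\mathbf{e}$, and this lies in $H$ when $\mathbf{e}\in\K(H)$, not merely when $\mathbf{e}\in H$; and $\Phi(2\ww)$ is not yet known to be in the kernel. The right argument: every entry of the order-$8$ row $\ww$ of $A^{1,0,t_3}$ is $0$ or $1$. Hence Carlet's identity gives $\Phi(x\ww)=x_0\Phi(\ww)+x_1\Phi(2\ww)+x_2\Phi(4\ww)$ for $x=x_0+2x_1+4x_2$. Together with the additivity of $\Phi$ against order-two elements, this exhibits $H$ as a $\Z_2$-span. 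Alternatively, within this paper item (i) follows at once from Theorem~\ref{thm:rank-main}, whose proof in Section~\ref{sec:rank} does not invoke the present statement. Since $|H|=2^{t+1}$, $H$ is linear if and only if its rank is $t+1$, which by \eqref{eq:rank-length-form} happens exactly when $(t_1,t_2)=(1,0)$.

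The genuine gap is the reverse inclusion $\K(H)\subseteq\Phi(\cH_2)$, which is the real content of (ii) and (iii). Your primary plan does not establish it, because nothing guarantees that some generator row gives a carry vector of weight outside $\{0,N/2,N\}$. Weight $0$ does occur: in $H^{1,1,1}$ one has $\Phi(2\ww_1)+\Phi(\ww_1)=\Phi(3\ww_1)$. Moreover, for a general combination $\uu$ the digits of its entries are nonlinear functions of the column, so the uniform count you postpone is exactly the difficulty.

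Your fallback is the argument that works, but it rests on two facts you do not state.
\begin{enumerate}
\item[(a)] If $\pi$ is the projection onto the binary coordinates of a set of additive coordinates, then $\pi(\K(H))\subseteq\K(\pi(H))$, because $\xx+H=H$ implies $\pi(\xx)+\pi(H)=\pi(H)$.
\item[(b)] The coordinate set must come from Proposition~\ref{prop:column-description}, not from undoing the recursion, since the rows of order $8$ are added before those of order $4$. If $\uu\notin\cH_2$, its coefficient on some $\ww_i$ lies outside $4\Z_8$ or its coefficient on some $\vv_j$ is odd. Choose a companion row: if $t_2\geq1$, take $\vv_1$ as companion of $\ww_i$ and $\ww_1$ as companion of $\vv_j$; if $t_2=0$, take any $\ww_{i'}$ with $i'\neq i$. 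Then the additive coordinates whose columns vanish in every row except $\ww_0$, the chosen row and its companion carry, up to the order of the columns, exactly $A^{1,1,1}$ or $A^{2,0,1}$. This includes vanishing in the order-two rows added by duplication. All other rows vanish there, so $\pi(H)$ is a copy of $H^{1,1,1}$ or $H^{2,0,1}$, and the projection of $\uu$ still has order at least $4$.
\end{enumerate}
It then remains to know that $\kernel(H^{1,1,1})=\kernel(H^{2,0,1})=3$. This must be checked directly rather than read off Table~\ref{tab:invariants}, which presupposes the theorem. Your weight test does settle it: on the $7$ and $15$ nonzero cosets of $\cH_2$, testing against $\ww_1$, $\vv_1$ or $\ww_2$ gives weights between $4$ and $24$, never in $\{0,N/2,N\}$.

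The same projection argument is what makes your nonlinearity reduction in (i) valid; ``the failing coordinates persist'' is not by itself a proof.
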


\section{The rank of $H^{t_1,t_2,t_3}$}\label{sec:rank}

In this section, we determine the rank of every code of the family.  The
rank of the $\Z_4$-linear and of the $\Z_2\Z_4$-linear Hadamard codes was
computed in \cite{Kro:2001:Z4_Had_Perf,PRV06}, and the rank of the
$\Z_8$-linear Hadamard codes in \cite{fernandez2019mathbb}; no such formula
was available for the mixed alphabet $\Z_2\Z_4\Z_8$, and the values used in
\cite{Z2Z4Z8Linearity} were obtained type by type with {\sc Magma}
\cite{Magma}.  Here, we obtain a closed formula valid for all $t_1\geq1$,
$t_2\geq0$ and $t_3\geq1$.

Subsection \ref{subsec:rows-columns} fixes the vocabulary of rows, coordinates,
and digits and turns the rank into the dimension of a space of Boolean
functions.  This reduction, recorded in \eqref{eq:rank-as-function-span}, is
the basis of the whole section.  The distinction between rows and coordinates
is essential throughout: the coefficients $x_i,y_j$ multiply generator
\emph{rows}, whereas the parameters $q_i,r_j$ describe \emph{coordinates}. The computation itself has four steps.  We first determine the space $V_0$ spanned by the coordinate functions coming from the coordinates in the $\Z_2$ and $\Z_4$ parts.  We then expand the three binary digits of a general
coordinate in the $\Z_8$ part and show that the bottom two already lie in
$V_0$.  Thus only the top digit can increase the rank, and we reduce this
digit modulo $V_0$ to the short expression of Lemma \ref{lem:top-digit-reduced}.  We then determine the dimension of the span of the reduced functions.  Finally, the duplication \eqref{eq:construction-order2} carries the result from $t_3=1$ to an arbitrary $t_3$.

\subsection{Generator rows, additive coordinates and binary coordinates}
\label{subsec:rows-columns}

Five objects occur in the computations below, and confusing any two of them
would make the arguments impossible to follow: \emph{generator rows},
\emph{additive coordinates}, \emph{message coefficients}, \emph{binary message
digits} and \emph{binary coordinates}.  We fix their names here and use no
others.

Let $A=(A_1\mid A_2\mid A_3)$ be a generator matrix of a
$\Z_2\Z_4\Z_8$-additive code $\C$ of type
$(\alpha_1,\alpha_2,\alpha_3;t_1,t_2,t_3)$, where $A_1$, $A_2$, and $A_3$
correspond to the $\Z_2$-, $\Z_4$-, and $\Z_8$-coordinate positions,
respectively.  The rows of $A$ are the \emph{generator rows}, and the
coordinate positions of $\C$ are its \emph{additive coordinates}. An
additive coordinate lies in the $\Z_2$ part, in the $\Z_4$ part, or in the
$\Z_8$ part according to whether the corresponding column of $A$ belongs to
$A_1$, $A_2$, or $A_3$.  That column lists the values taken at that coordinate
by the generator rows, and we identify an additive coordinate with its
corresponding column whenever we refer to its entries.

Let $\ww_1,\dots,\ww_{t_1}$, $\vv_1,\dots,\vv_{t_2}$, and
$\zz_1,\dots,\zz_{t_3}$ denote the generator rows of $A$ of orders $8$, $4$,
and $2$, respectively.  Then every additive codeword $\uu\in\C$ can be
written uniquely as
\begin{equation}\label{eq:additive-message}
\uu=\sum_{i=1}^{t_1}x_i\ww_i+\sum_{j=1}^{t_2}y_j\vv_j
      +\sum_{k=1}^{t_3}\epsilon_k\zz_k,
\end{equation}
where $x_i\in\Z_8$, $y_j\in\Z_4$, and $ \epsilon_k\in\Z_2$.
The tuple $\xi=(x_1,\dots,x_{t_1},y_1,\dots,y_{t_2},\epsilon_1,\dots,\epsilon_{t_3})$ is
called the \emph{message} of the codeword $\uu$, and its entries
$x_i,y_j,\epsilon_k$ are the \emph{message coefficients}. A message coefficient tells how many times the corresponding generator row is added to itself.  The set of all messages is $\cM=\Z_8^{t_1}\times\Z_4^{t_2}\times\Z_2^{t_3}$, so $|\cM|=|\C|$ and \eqref{eq:additive-message} defines a bijection $\cM\rightarrow\C$. 

For the rank computation, we replace each message coefficient by its binary
digits, writing
\begin{equation}\label{eq:message-digits}
x_i=a_i+2b_i+4c_i\ \ \text{in }\Z_8,
\qquad
y_j=d_j+2e_j\ \ \text{in }\Z_4,
\end{equation}
with $a_i,b_i,c_i,d_j,e_j\in\{0,1\}$.  From that point on, we regard the
$a_i,b_i,c_i$ $(1\leq i\leq t_1)$, the $d_j,e_j$ $(1\leq j\leq t_2)$, and
the $\epsilon_k$ $(1\leq k\leq t_3)$ as independent variables over $\Z_2$;
they are called the \emph{binary message digits}, or simply the message
digits.  Thus, a message digit is one binary digit of one message coefficient,
and $\epsilon_k$ is its own single digit. The same convention applies to every ring element occurring below.  For $w\in\Z_{2^s}$, we always write $w=\sum_{i=0}^{s-1}w_i2^i$ with $w_i\in\{0,1\}$.  We call $w_0$ the \emph{bottom digit} and $w_{s-1}$ the \emph{top digit} of $w$ and, when $s=3$, we call $w_1$ the \emph{middle digit}.  The digits are extracted inside $\Z_{2^s}$ but are treated as elements of $\Z_2$ afterwards, since by \eqref{eq:Carlet} the Gray image is obtained from them by linear algebra over $\Z_2$, and the rank is the dimension of a $\Z_2$-span.

A position of $\Phi(\C)$ is called a \emph{binary coordinate}.  The two kinds
of position are always distinguished in the same way: an additive coordinate
is named together with its part, as in ``a coordinate in the $\Z_4$ part'',
whereas ``binary coordinate'', or simply ``coordinate'', always means a
position of the binary code.

We isolate next the consequence of \eqref{eq:Carlet} that will be used
repeatedly below: at a given additive coordinate, the binary coordinate
functions produced by the Gray map span exactly the same space as the binary
digits of that coordinate.  In our applications, however, the value at such
a coordinate is not a fixed element of $\Z_{2^s}$, since it varies with the
message.  We therefore write this value as $u(\omega)$, where $\omega$ runs
over a finite parameter set $\Omega$, and apply $\phi_s$ to $u(\omega)$ for
each $\omega\in\Omega$.  As $\omega$ varies, each binary coordinate of the
resulting Gray image defines a function on $\Omega$.

\begin{lemma}\label{lem:gray-digits}
Let $s\geq1$, let $\Omega$ be a finite set, and let
$u:\Omega\rightarrow\Z_{2^s}$ be any map.  For $0\leq i\leq s-1$, let
$u_i:\Omega\rightarrow\Z_2$ be the $i$th binary digit of $u$, that is, 
$u(\omega)=\sum_{i=0}^{s-1}u_i(\omega)2^i$ in $\Z_{2^s}$ for every $\omega\in\Omega$.  Let
$g_1,\dots,g_{2^{s-1}}:\Omega\rightarrow\Z_2$ be defined by
$$
\phi_s\bigl(u(\omega)\bigr)
=
\bigl(g_1(\omega),\dots,g_{2^{s-1}}(\omega)\bigr),
$$
for every $\omega\in\Omega$. Then, the functions $g_1,\dots,g_{2^{s-1}}$ and $u_0,u_1,\dots,u_{s-1}$ span the same $\Z_2$-subspace of the space of all maps $\Omega\rightarrow\Z_2$.
\end{lemma}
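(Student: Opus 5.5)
The plan is to prove the two inclusions between the spans directly, working pointwise in $\omega$. By \eqref{eq:Carlet},
$$
\phi_s\bigl(u(\omega)\bigr)=\bigl(u_{s-1}(\omega),\dots,u_{s-1}(\omega)\bigr)+\bigl(u_0(\omega),\dots,u_{s-2}(\omega)\bigr)Y_{s-1}.
$$
Let $j\in\{1,\dots,2^{s-1}\}$, and let $(y_{0,j},\dots,y_{s-2,j})^T\in\Z_2^{s-1}$ be the $j$th column of $Y_{s-1}$. Reading off the $j$th coordinate gives
$$
g_j=u_{s-1}+\sum_{i=0}^{s-2}y_{i,j}\,u_i .
$$
The coefficients $y_{i,j}$ are constants, independent of $\omega$. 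Hence every $g_j$ is a $\Z_2$-linear combination of $u_0,\dots,u_{s-1}$, which proves one inclusion. The case $s=1$ is trivial, since $\phi_1$ is the identity and $g_1=u_0$.

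For the reverse inclusion we use the fact that the columns of $Y_{s-1}$ are all the vectors of $\Z_2^{s-1}$, each appearing once.

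\begin{itemize}
\item Take the index $j_0$ whose column is the zero vector. Then $g_{j_0}=u_{s-1}$, so $u_{s-1}$ lies in the span of the $g_j$.
\item For $0\le i\le s-2$, take the index $j_i$ whose column is the $i$th standard basis vector $e_i$. Then $g_{j_i}=u_{s-1}+u_i$, so $u_i=g_{j_i}+g_{j_0}$ also lies in the span.
\end{itemize}

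This gives equality of the two $\Z_2$-subspaces of maps $\Omega\to\Z_2$.

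There is no real obstacle here. The one point needing care is that the identity for $g_j$ holds for every $\omega$ with the same constant coefficients. This is what turns a pointwise relation into a linear relation among functions, and it holds because $Y_{s-1}$ does not depend on $u$.
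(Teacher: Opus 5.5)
Your proposal is correct and follows essentially the same route as the paper's proof. You read off $g_j=u_{s-1}+\sum_{i=0}^{s-2}y_{i,j}u_i$ from \eqref{eq:Carlet} with constant coefficients. You then recover $u_{s-1}$ from the zero column of $Y_{s-1}$ and each $u_i$ from the column $e_i$, and treat $s=1$ separately.
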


\begin{proof}
Fix $\omega\in\Omega$ and apply \eqref{eq:Carlet} to $u(\omega)$.  If the
$j$th column of $Y_{s-1}$ is $\tau=(\tau_0,\dots,\tau_{s-2})$, then the $j$th
coordinate of $\phi_s(u(\omega))$ is
$u_{s-1}(\omega)+\sum_{i=0}^{s-2}\tau_iu_i(\omega)$.  Since this equality
holds for every $\omega\in\Omega$, we obtain the equality of functions
\begin{equation}\label{eq:gray-coordinate-function}
g_j=u_{s-1}+\sum_{i=0}^{s-2}\tau_iu_i .
\end{equation}
Hence, every $g_j$ belongs to the span of
$u_0,\dots,u_{s-1}$. Conversely, the zero column of $Y_{s-1}$ gives the function
$u_{s-1}$.  For each $0\leq i\leq s-2$, the column equal to the $i$th
standard basis vector gives $u_{s-1}+u_i$, and hence $u_i$ also belongs to
the span of the functions $g_j$.  Therefore, the two families of functions
span the same $\Z_2$-subspace.  When $s=1$, we simply have $g_1=u_0$.
\end{proof}
\begin{example}
For $s=3$, the columns of $Y_2=\left(\begin{smallmatrix}0&1&0&1\\0&0&1&1\end{smallmatrix}\right)$ are
$(0,0)$, $(1,0)$, $(0,1)$, and $(1,1)$.  Hence, \eqref{eq:gray-coordinate-function} gives
$g_1=u_2$, $g_2=u_0+u_2$, $g_3=u_1+u_2$, and $g_4=u_0+u_1+u_2$.  Thus, the four binary coordinate functions span the same space as $u_0,u_1,u_2$.
\end{example}
We now introduce the matrix that turns the rank into a question about
functions.  For a message $\xi\in\cM$, let $\uu(\xi)\in\C$ be the additive
codeword given by \eqref{eq:additive-message}. Fix any ordering
$\xi^{(1)},\dots,\xi^{(|\cM|)}$ of $\cM$, and let $E$ be the binary matrix
whose $i$th row is $\Phi(\uu(\xi^{(i)}))$. Thus, the rows of $E$ are all the
codewords of $\Phi(\C)$. We call $E$ the \emph{binary evaluation matrix} of
$\C$.

For a binary coordinate $\ell$, the $\ell$th column of $E$ is the
vector of values of the map $f_\ell:\cM\rightarrow\Z_2$ given by
$f_\ell(\xi)=(\Phi(\uu(\xi)))_\ell$, which we call the \emph{binary coordinate
function} of that position.  Since digit extraction identifies $\cM$, as a
set, with $\Z_2^{3t_1+2t_2+t_3}$, each $f_\ell$ is a Boolean function of the
message digits, nonlinear in general. These nonlinear terms are the source of the
additional rank computed below.

We now describe explicitly the binary coordinate functions for the code $\C$.
Put $m=t_1+t_2+t_3$, and consider an additive coordinate whose corresponding
column has entries $z_1,\dots,z_m$, listed in the same order as the generator
rows.  Writing a message as $\xi=(\xi_1,\dots,\xi_m)$, the codeword $\uu(\xi)$
takes at this coordinate the ring element $\sum_{i=1}^{m}\xi_iz_i$, computed
in $\Z_2$, $\Z_4$ or $\Z_8$ according to the part in which the coordinate
lies.  Its binary digits are Boolean functions of the message digits.  By
\eqref{eq:Carlet}, the one, two or four binary coordinate functions produced
by the Gray map are $\Z_2$-linear combinations of these digit functions, and
by Lemma \ref{lem:gray-digits}, they span the same space as those digits.
Briefly, a binary coordinate selects the corresponding coordinate function,
whereas changing the message moves the point at which it is evaluated.

\begin{lemma}\label{lem:evaluation-principle}
Let $\Omega$ be a finite set, let $f_1,\dots,f_N:\Omega\rightarrow\Z_2$, fix
an ordering $\omega_1,\dots,\omega_{|\Omega|}$ of $\Omega$, and let
$C_\Omega\subseteq\Z_2^N$ be the set of rows of the $|\Omega|\times N$ binary
matrix $E_\Omega=(f_j(\omega_i))_{i,j}$.  Then,
\begin{equation}\label{eq:evaluation-rank}
\rank(C_\Omega)=\dim\Span\{f_1,\dots,f_N\}.
\end{equation}
\end{lemma}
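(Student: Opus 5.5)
The plan is to observe that $\rank(C_\Omega)$ is by definition the dimension of the row space of $E_\Omega$: the rows of $E_\Omega$ are exactly the elements of $C_\Omega$, so $\langle C_\Omega\rangle$ is the row space of $E_\Omega$. Since row rank equals column rank over $\Z_2$, it then suffices to show that the column space of $E_\Omega$ has dimension $\dim\Span\{f_1,\dots,f_N\}$.

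To establish this, I will use the evaluation map $\mathrm{ev}:\{f:\Omega\rightarrow\Z_2\}\rightarrow\Z_2^{|\Omega|}$, $f\mapsto (f(\omega_1),\dots,f(\omega_{|\Omega|}))^T$. It is $\Z_2$-linear, because addition and scalar multiplication of functions are pointwise. It is also injective, since the ordering lists every element of $\Omega$: a function vanishing at all $\omega_i$ is the zero function. Thus $\mathrm{ev}$ is an isomorphism onto $\Z_2^{|\Omega|}$.

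By construction, the $j$th column of $E_\Omega$ is $\mathrm{ev}(f_j)$. Hence the column space equals $\mathrm{ev}(\Span\{f_1,\dots,f_N\})$, and by injectivity it has the same dimension as that span. Combining this with the row-rank/column-rank equality gives \eqref{eq:evaluation-rank}.

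There is no real obstacle here. The only point needing care is that repeated rows of $E_\Omega$ (when distinct $\omega_i$ give the same codeword) do not matter: $C_\Omega$ is a set, but its span still coincides with the row space of $E_\Omega$. Injectivity of $\mathrm{ev}$ uses only that every point of $\Omega$ is listed, not that the rows are distinct.
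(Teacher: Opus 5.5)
Your proposal is correct and follows essentially the same route as the paper: you identify $\langle C_\Omega\rangle$ with the row space of $E_\Omega$ (repeated rows being harmless), pass to the column rank, and use that the linear injection $\mathrm{ev}$ sends $f_j$ to the $j$th column. One small point: you assert that $\mathrm{ev}$ is onto $\Z_2^{|\Omega|}$ after proving only injectivity. Surjectivity is true but not justified as written, and the argument needs only the injectivity you did prove.
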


\begin{proof}
The rows of $E_\Omega$ are exactly the elements of $C_\Omega$, possibly with
repetitions, so the row space of $E_\Omega$ is $\langle C_\Omega\rangle$ and
$\rank(C_\Omega)$ is the row rank of $E_\Omega$.  Over any field, the row rank
equals the column rank. The columns of $E_\Omega$ are the evaluation
vectors $\operatorname{ev}(f_j)=(f_j(\omega_1),\dots,f_j(\omega_{|\Omega|}))$.
The map
$$
\operatorname{ev}:
\{f:\Omega\rightarrow\Z_2\}\longrightarrow\Z_2^{|\Omega|},
\qquad
f\longmapsto\bigl(f(\omega_1),\dots,f(\omega_{|\Omega|})\bigr),
$$
is a $\Z_2$-linear isomorphism: it is injective because two functions taking
the same value at every point of $\Omega$ are equal, and surjective because
every vector in $\Z_2^{|\Omega|}$ defines a function on $\Omega$. Hence, the span of the $f_j$ and the span of their evaluation
vectors have the same dimension, and the latter is the column rank of
$E_\Omega$.
\end{proof}

Applying Lemma \ref{lem:evaluation-principle} with $\Omega=\cM$, we obtain
\begin{equation}\label{eq:rank-as-function-span}
\rank(\Phi(\C))
=\dim_{\Z_2}\Span\{f_\ell:\ 1\leq\ell\leq N\},
\end{equation}
the identity on which the rest of this section rests.  It is what makes the
computation feasible.  In the present Hadamard case, the matrix $E$ has
$2^{\,t+1}$ rows and $2^t$ columns, so handling it directly is impractical in
general.  By contrast, \eqref{eq:rank-as-function-span} allows us to work with
the coordinate functions $f_\ell$, each of which has a short explicit
description. 

We close the subsection by fixing the notation used in the rest of the
section.  Throughout, we work with $H^{t_1,t_2,1}$ and pass to general $t_3$
only at the very end.  We denote by $\ww_0=(\one\mid\two\mid\four)$ the
distinguished row of order $2$, by $\ww_1,\dots,\ww_{t_1}$ the rows of order
$8$ and by $\vv_1,\dots,\vv_{t_2}$ the rows of order $4$ of $A^{t_1,t_2,1}$.
Every additive codeword is then written uniquely as
\begin{equation}\label{eq:general-message}
\epsilon\ww_0+\sum_{i=1}^{t_1}x_i\ww_i+\sum_{j=1}^{t_2}y_j\vv_j,
\end{equation}
where  $\epsilon\in\Z_2$, $x_i\in\Z_8$, and  $y_j\in\Z_4$. As in \eqref{eq:message-digits}, we write $x_i=a_i+2b_i+4c_i$ in $\Z_8$
and $y_j=d_j+2e_j$ in $\Z_4$ and regard afterwards
$\epsilon,a_i,b_i,c_i,d_j,e_j$ as independent variables over $\Z_2$.  We will
constantly use the set
\begin{equation}\label{eq:setL}
L=\{a_1,\dots,a_{t_1}\}\cup\{d_1,\dots,d_{t_2}\}
\end{equation}
of the bottom binary digits of the message, which has $t_1+t_2$ elements.

We also recall the standard vocabulary of Boolean functions.  If
$z_1,\dots,z_h$ are independent binary variables, then every function
$f:\Z_2^h\rightarrow\Z_2$ has a unique expression
$f(z_1,\dots,z_h)=\sum_{I\subseteq\{1,\dots,h\}}\gamma_I\prod_{i\in I}z_i$,
 with $\gamma_I\in\Z_2$; this is its algebraic normal form. It is
square-free because $z_i^2=z_i$ over $\Z_2$, and the empty product is the
constant function $1$.  The coefficients are recovered by
$\gamma_I=\sum_{J\subseteq I}f(\one_J)$, where $\one_J$ is the indicator
vector of $J$, so the expression is indeed unique and, in particular,
\emph{distinct square-free monomials are linearly independent as Boolean
functions}.  This last fact justifies every dimension count below.  For a list
$\zz=(z_1,\dots,z_h)$ of binary variables and for $j\geq1$, we write
$\sym_j(\zz)=\sum_{i_1<\cdots<i_j}z_{i_1}\cdots z_{i_j}$ for its $j$th
elementary symmetric Boolean polynomial. Thus, $\sym_1(\zz)=z_1+\cdots+z_h$
and $\sym_2(\zz)=\sum_{i<k}z_iz_k$. When the list is indexed by a subset $S$, we
abbreviate $\sym_j(a_S)=\sym_j\bigl((a_i)_{i\in S}\bigr),$ and similarly for $b$ and $d$.

\subsection{The space $V_0$ coming from the $\Z_2$ and the $\Z_4$ parts}
\label{subsec:V0}

We start with the coordinates over the two smaller alphabets.  Each of them
carries one or two binary coordinates, and the corresponding coordinate
functions span a space $V_0$ consisting of the linear functions of the two
lowest layers of message digits together with the products of two distinct
elements of $L$; those products are the classical carries produced when two
bottom digits are added.

\begin{lemma}\label{lem:V0}
Let $V_0$ denote the $\Z_2$-span of the binary coordinate functions of
$H^{t_1,t_2,1}$ coming from the coordinates in the $\Z_2$ part or in the $\Z_4$
part.  Then, $V_0$ has basis
\begin{equation}\label{eq:V0-basis}
\{\epsilon\}\cup\{a_i,b_i:1\leq i\leq t_1\}
\cup\{d_j,e_j:1\leq j\leq t_2\}
\cup\{uv:\ u,v\in L,\ u\not=v\},
\end{equation}
where each product $uv$ is listed once.  Consequently,
\begin{equation}\label{eq:V0-dim}
\dim V_0=1+2(t_1+t_2)+\binom{t_1+t_2}{2}.
\end{equation}
\end{lemma}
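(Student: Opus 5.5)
By Lemma \ref{lem:gray-digits}, I would compute $V_0$ as the span of the binary digit functions of the ring value at each coordinate in the $\Z_2$ and $\Z_4$ parts. The $\Z_4$ part contributes both the bottom and the top digit. The plan has three steps: compute these digit functions explicitly, using Proposition \ref{prop:column-description}; show that they lie in the span of the listed set; and show that every listed function is obtained. Independence is then automatic, since the listed functions are distinct square-free monomials. Counting them gives \eqref{eq:V0-dim}.

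\textbf{Computing the digit functions.} Take first a coordinate in the $\Z_2$ part. Its column is $(1,\lambda)$ with $\lambda=(\lambda_1,\dots,\lambda_{t_1+t_2})\in\Z_2^{t_1+t_2}$ arbitrary. The ring value is $\epsilon+\sum_i\lambda_i x_i+\sum_j\lambda_{t_1+j}y_j$ modulo $2$, which equals $\epsilon+\sum\lambda_i a_i+\sum\lambda_{t_1+j}d_j$. Taking $\lambda=0$ and then each $\lambda$ a standard basis vector, the span of these functions is exactly $\langle\epsilon,a_i,d_j\rangle$.

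Now take a coordinate in the $\Z_4$ part. Its column is $(2,\zz)$ with $\zz$ normalized primitive in $\Z_4^{t_1+t_2}$. The value is $2\epsilon+\sum_k z_k m_k$ in $\Z_4$, where $m_k$ denotes $x_i$ reduced modulo $4$, i.e. $a_i+2b_i$, or $y_j=d_j+2e_j$. The bottom digit is $\sum_{z_k\text{ odd}}\ell_k$, where $\ell_k\in L$ is the corresponding bottom digit. For the top digit, I would use the standard $\Z_4$ carry formula: for bits $u_k$, $\sum_k u_k \equiv \sym_1(u)+2\sym_2(u)\pmod 4$. Writing $z_k=z_{k,0}+2z_{k,1}$, the top digit is
\[
\epsilon+\sum_k z_{k,0}\,h_k+\sum_k z_{k,1}\,\ell_k+\sym_2\bigl((\ell_k)_{z_{k,0}=1}\bigr),
\]
where $h_k$ is the middle digit $b_i$ or $e_j$. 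Every term here lies in the span of the listed set, so the span of the listed set contains $V_0$.

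\textbf{Generating every basis function.} I will choose specific normalized primitive columns and take differences.
\begin{itemize}
\item $\zz=e_k$ gives the top digit $\epsilon+h_k$. Since $\epsilon$ is already available from the $\Z_2$ part, this gives $h_k$, i.e. every $b_i$ and $e_j$.
\item $\zz=e_k+2e_{k'}$ is normalized primitive when $k<k'$. It gives $\epsilon+h_k+\ell_{k'}$, which is redundant.
\item For a quadratic term, take $\zz=e_k+e_{k'}$ with $k<k'$. This is normalized primitive, since its first odd entry is $1$. Its top digit is $\epsilon+h_k+h_{k'}+\ell_k\ell_{k'}$, so $\ell_k\ell_{k'}\in V_0$.
\end{itemize}
This exhausts all products $uv$ with $u,v\in L$ distinct, and the listed set lies in $V_0$.

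The main obstacle is being careful that the chosen columns are genuinely normalized primitive, so that they occur in $A^{t_1,t_2,1}$. The second point of care is the carry identity for several summands, including the coefficient $z_k=3=1+2$, which contributes $\ell_k$ to the top digit through $z_{k,1}$. Once these are checked, the final count is $1+2(t_1+t_2)+\binom{t_1+t_2}{2}$.
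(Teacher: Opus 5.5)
Your proposal is correct and follows essentially the same route as the paper. It computes the digit functions through Lemma \ref{lem:gray-digits}, uses the same $\sym_2$ carry expression for the top digit in the $\Z_4$ part, generates $b_i$, $e_j$ and the products from the columns $e_k$ and $e_k+e_{k'}$, and gets independence from the fact that distinct square-free monomials are linearly independent. The column $e_k+2e_{k'}$ is unnecessary but harmless.
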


\begin{proof}
By item (i) of Proposition \ref{prop:column-description},  a column in the
$\Z_2$ part of $A^{t_1,t_2,1}$ has the form $(1,\lambda,\rho)^T$, where
$\lambda=(\lambda_1,\dots,\lambda_{t_1})\in\Z_2^{t_1}$ and
$\rho=(\rho_1,\dots,\rho_{t_2})\in\Z_2^{t_2}$ are arbitrary.  Since $x_i\equiv a_i$ and
$y_j\equiv d_j \pmod 2$, its binary coordinate function is
$\epsilon+\sum_i\lambda_ia_i+\sum_j\rho_jd_j$.  Choosing $\lambda=\rho=0$ gives $\epsilon$. Taking $\rho=0$ and $\lambda$ to have exactly one nonzero entry gives $\epsilon+a_i$, for each $1\leq i\leq t_1$, while taking $\lambda=0$ and $\rho$ to have exactly one
nonzero entry gives $\epsilon+d_j$, for each $1\leq j\leq t_2$.  Adding
$\epsilon$ then gives each $a_i$ and each $d_j$. Hence, the coordinates in the $\Z_2$
part span exactly the space with basis $\epsilon$, $a_1,\dots,a_{t_1}$ and
$d_1,\dots,d_{t_2}$.

Now, let $(2,p_1,\dots,p_{t_1},s_1,\dots,s_{t_2})^T$ be the column of a
coordinate in the $\Z_4$ part, where $(p,s)$ is normalized primitive in
$\Z_4^{t_1+t_2}$ by item (ii) of
Proposition \ref{prop:column-description}.  Write the binary expansions
$p_i=\lambda_i+2\mu_i$ and $s_j=\rho_j+2\sigma_j$ in $\Z_4$.  The entry of the
additive codeword \eqref{eq:general-message} at this coordinate is
\begin{equation}\label{eq:z4-symbol}
2\epsilon+\sum_{i=1}^{t_1}p_ix_i+\sum_{j=1}^{t_2}s_jy_j \pmod 4.
\end{equation}
Since $p_i x_i\equiv \lambda_i a_i+2(\lambda_i b_i+\mu_i a_i)\pmod 4$ and $s_j y_j\equiv \rho_j d_j+2(\rho_j e_j+\sigma_j d_j)\pmod 4$, 
the bottom digit of \eqref{eq:z4-symbol} is
\begin{equation}\label{eq:z4-low-digit}
\ell=\sum_{i=1}^{t_1}\lambda_ia_i+\sum_{j=1}^{t_2}\rho_jd_j,
\end{equation}
and, adding the bits $\lambda_ia_i$ and $\rho_jd_j$ and taking into account
that the carry produced by that addition is the second elementary symmetric
function of those bits, the top digit is
\begin{equation}\label{eq:z4-high-digit}
h=\epsilon+\sum_{i=1}^{t_1}(\lambda_ib_i+\mu_ia_i)
   +\sum_{j=1}^{t_2}(\rho_je_j+\sigma_jd_j)
   +\sym_2(\lambda_1a_1,\dots,\lambda_{t_1}a_{t_1},
        \rho_1d_1,\dots,\rho_{t_2}d_{t_2}).
\end{equation}
By Lemma \ref{lem:gray-digits}, the two binary coordinates it produces span the
same space as $\ell$ and $h$.

Put $S=\{i:\lambda_i=1\}$, $T=\{i:\mu_i=1\}$, $R=\{j:\rho_j=1\}$ and
$R'=\{j:\sigma_j=1\}$, and let
$L_{S,R}=\{a_i:i\in S\}\cup\{d_j:j\in R\}\subseteq L$.  Then,
\eqref{eq:z4-low-digit} and \eqref{eq:z4-high-digit} read
$$
\ell=\sum_{i\in S}a_i+\sum_{j\in R}d_j,\qquad
h=\epsilon+\sum_{i\in S}b_i+\sum_{i\in T}a_i+\sum_{j\in R}e_j
   +\sum_{j\in R'}d_j+\sym_2(L_{S,R}),
$$
and the vector $(p,s)$ is primitive exactly when $(S,R)\not=(\emptyset,
\emptyset)$.  Both $\ell$ and $h$ clearly belong to the space spanned by
\eqref{eq:V0-basis}, and so do the functions obtained from the
coordinates in the $\Z_2$ part; hence $V_0$ is contained in that space.

Conversely, all the functions in \eqref{eq:V0-basis} occur.  The elements
$\epsilon$, $a_i$ and $d_j$ were already obtained from the
coordinates in the $\Z_2$ part.  Choosing the coordinate whose column is
normalized primitive with $S=\{i\}$, $R=\emptyset$ and all higher digits equal
to zero gives
$h=\epsilon+b_i$, hence $b_i$; note that normalization only forces $\mu_i=0$,
which is what we have chosen.  Symmetrically, $S=\emptyset$, $R=\{j\}$ gives
$h=\epsilon+e_j$, hence $e_j$.  Choosing the next two nonzero positions, say
$S=\{i,k\}$ and $R=\emptyset$, gives $h=\epsilon+b_i+b_k+a_ia_k$, and therefore
$a_ia_k$; and, in the same way, $S=\{i\}$, $R=\{j\}$ gives $a_id_j$, and
$S=\emptyset$, $R=\{j,l\}$ gives $d_jd_l$.  Therefore, every product of two
distinct elements of $L$ belongs to $V_0$.

Finally, all the functions listed in \eqref{eq:V0-basis} are distinct
square-free monomials in the independent variables
$\epsilon,a_i,b_i,d_j,e_j$, hence linearly independent, so they form a basis.
Counting them gives \eqref{eq:V0-dim}.
\end{proof}

\subsection{The three binary digits of a coordinate in the $\Z_8$ part}
\label{subsec:z8-digits}

By item (iii) of Proposition \ref{prop:column-description}, a coordinate in
the $\Z_8$ part of $A^{t_1,t_2,1}$ has column
\begin{equation}\label{eq:z8-column}
(4,q_1,\dots,q_{t_1},2r_1,\dots,2r_{t_2})^T,
\end{equation}
where $q=(q_1,\dots,q_{t_1})$ is normalized primitive in $\Z_8^{t_1}$ and
$r=(r_1,\dots,r_{t_2})\in\Z_4^{t_2}$ is arbitrary.  We write the binary
expansions
\begin{equation}\label{eq:column-digits}
q_i=\lambda_i+2\mu_i+4\nu_i\ \text{ in }\Z_8,
\qquad
r_j=\rho_j+2\sigma_j\ \text{ in }\Z_4.
\end{equation}
The entry of the additive codeword at a coordinate with
column \eqref{eq:z8-column} is
\begin{equation}\label{eq:z8-symbol}
4\epsilon+\sum_{i=1}^{t_1}q_ix_i+2\sum_{j=1}^{t_2}r_jy_j\pmod 8.
\end{equation}
We compute its three binary digits in two steps.  The first lemma expands
one product; it is where the first genuinely nonlinear carry appears.

\begin{lemma}\label{lem:product-digits}
With the notation \eqref{eq:message-digits} and \eqref{eq:column-digits}, the
bottom, middle and top binary digits of $q_ix_i$ modulo $8$ are,
respectively,
\begin{equation}\label{eq:ABC}
A_i=\lambda_ia_i,\qquad
B_i=\lambda_ib_i+\mu_ia_i,\qquad
C_i=\lambda_ic_i+\mu_ib_i+\nu_ia_i+\lambda_i\mu_ia_ib_i,
\end{equation}
and the middle and top binary digits of $2r_jy_j$ modulo $8$ are
\begin{equation}\label{eq:DE}
D_j=\rho_jd_j,\qquad E_j=\rho_je_j+\sigma_jd_j.
\end{equation}
\end{lemma}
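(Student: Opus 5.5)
This is a direct computation in $\Z_8$: multiply the binary expansions, collect terms by powers of $2$, and track the one carry that appears. For the first product I will write
$$
q_ix_i=(\lambda_i+2\mu_i+4\nu_i)(a_i+2b_i+4c_i)
$$
and expand it modulo $8$, discarding every term with coefficient divisible by $8$. Three groups survive.
\begin{itemize}
\item The coefficient of $1$ is $\lambda_ia_i$.
\item The coefficient of $2$ is $\lambda_ib_i+\mu_ia_i$.
\item The coefficient of $4$ is $\lambda_ic_i+\mu_ib_i+\nu_ia_i$.
\end{itemize}
The expansion is an integer combination of products of $0/1$ quantities, each product again in $\{0,1\}$. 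So the only obstruction to reading off digits directly is carries. The constant term $\lambda_ia_i$ is a single bit, so the bottom digit is $A_i=\lambda_ia_i$ and nothing carries out of position $0$.

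Position $1$ is where the carry appears. The coefficient of $2$ is an integer sum of two bits, $\lambda_ib_i$ and $\mu_ia_i$, and it can equal $2$. Over the integers,
$$
\lambda_ib_i+\mu_ia_i=(\lambda_ib_i\oplus\mu_ia_i)+2(\lambda_ib_i)(\mu_ia_i).
$$
The middle digit is therefore $B_i=\lambda_ib_i+\mu_ia_i$ over $\Z_2$. The carry $\lambda_i\mu_ia_ib_i$, after using $z^2=z$, passes into position $2$.

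At position $2$ every contribution is taken modulo $2$, since anything beyond is killed by $8$. The top digit is then the $\Z_2$-sum of the coefficient of $4$ and the carry, which is $C_i$ as stated in \eqref{eq:ABC}.

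For the second product I will note that $2r_jy_j \bmod 8$ equals $2\,(r_jy_j \bmod 4)$. Its middle and top digits are therefore the bottom and top digits of $r_jy_j$ modulo $4$. Expanding
$$
(\rho_j+2\sigma_j)(d_j+2e_j)\equiv\rho_jd_j+2(\rho_je_j+\sigma_jd_j)\pmod 4,
$$
the bottom digit is the single bit $\rho_jd_j$, so nothing carries out of it. Position $1$ is the top position modulo $4$, so no carry can leave it and it is read modulo $2$. This gives $D_j$ and $E_j$ in \eqref{eq:DE}, the same computation already used in the proof of Lemma~\ref{lem:V0}.

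The only delicate point is the carry from the middle position in $q_ix_i$. I will justify it by the integer identity $u+v=(u\oplus v)+2uv$ for bits $u,v$, and note that no second-order carry arises at the top position because we work modulo $8$.
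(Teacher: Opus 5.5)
Your proposal is correct and follows essentially the same route as the paper. You expand the product of the binary expansions, discard multiples of $8$, split the level-$2$ sum of the bits $\lambda_ib_i$ and $\mu_ia_i$ into its parity plus the carry $\lambda_i\mu_ia_ib_i$, and read the level-$4$ coefficient together with that carry modulo $2$; the second product is likewise obtained by computing $r_jy_j$ modulo $4$ and shifting by the factor $2$, exactly as the paper does (your identity $u+v=(u\oplus v)+2uv$ just makes the paper's carry remark explicit).
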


\begin{proof}
Expanding the product as an ordinary integer and deleting the multiples of
$8$ gives $q_ix_i\equiv \lambda_ia_i+2(\lambda_ib_i+\mu_ia_i)
+4(\lambda_ic_i+\mu_ib_i+\nu_ia_i)\pmod 8$, where each of the six displayed
products of digits is a bit and the sums are
integer sums.  The bottom digit is the bit $\lambda_ia_i$, which gives
$A_i$.  The two bits $\lambda_ib_i$ and $\mu_ia_i$ occurring at the level of
$2$ add up to an integer in $\{0,1,2\}$, whose lowest bit is
$\lambda_ib_i+\mu_ia_i$ in $\Z_2$ and whose carry is
$(\lambda_ib_i)(\mu_ia_i)=\lambda_i\mu_ia_ib_i$; this proves the formula for
$B_i$ and shows that the carry $\lambda_i\mu_ia_ib_i$ is added at the level
of $4$.  At that level, the three bits $\lambda_ic_i$, $\mu_ib_i$, $\nu_ia_i$
and the carry are added, and only the parity of the result matters modulo
$8$; this gives $C_i$.

For the second statement, $r_jy_j\equiv\rho_jd_j+2(\rho_je_j+\sigma_jd_j)
\pmod 4$ by the same computation with one digit less, and multiplying by $2$
shifts these two digits to the levels of $2$ and $4$ of a symbol modulo $8$.
\end{proof}

The second lemma is the rule that governs the carries of an arbitrary sum of
bits.  It is classical, and we include the short proof for completeness.

\begin{lemma}\label{lem:carries}
Let $z_1,\dots,z_h\in\Z_2$ and let $w=z_1+\cdots+z_h$ be their ordinary
integer sum.  Then, for every $r\geq0$, the $r$th binary digit of $w$ equals
$\sym_{2^r}(z_1,\dots,z_h)$.  In particular, the bottom digit, the carry to the
next digit and the direct carry to the third digit are $\sym_1$, $\sym_2$, and
$\sym_4$, respectively.
\end{lemma}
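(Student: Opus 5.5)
The plan is to prove Lemma \ref{lem:carries} through the classical fact that the binomial coefficient $\binom{w}{k}$, for an integer $w=z_1+\cdots+z_h$ with $z_i\in\{0,1\}$, counts the $k$-subsets of $\{i:z_i=1\}$. Hence, as integers,
$$
\binom{w}{k}=\sum_{i_1<\cdots<i_k}z_{i_1}\cdots z_{i_k},
$$
and reducing modulo $2$ gives $\binom{w}{k}\equiv\sym_k(z_1,\dots,z_h)\pmod 2$. This reduces the lemma to the purely arithmetic statement that the $r$th binary digit of a nonnegative integer $w$ equals $\binom{w}{2^r}\bmod 2$.

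For that arithmetic statement I will invoke Lucas' theorem. Write $w=\sum_k w_k2^k$ with $w_k\in\{0,1\}$. The binary expansion of $2^r$ has a single $1$, in position $r$, so Lucas' theorem gives
$$
\binom{w}{2^r}\equiv\binom{w_r}{1}\prod_{k\neq r}\binom{w_k}{0}=w_r\pmod 2.
$$
If one prefers a self-contained argument, the same conclusion follows by comparing coefficients in $(1+X)^w=\prod_k(1+X^{2^k})^{w_k}$ over $\Z_2[X]$. This identity uses $(1+X)^{2^k}=1+X^{2^k}$, which is the Frobenius map in characteristic $2$. The coefficient of $X^{2^r}$ on the right is $w_r$, because binary expansions are unique, so $X^{2^r}$ arises in exactly one way from the product.

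Combining the two steps gives that the $r$th digit of $w$ equals $\sym_{2^r}(z_1,\dots,z_h)$. The particular cases $r=0,1,2$ then give $\sym_1$, $\sym_2$ and $\sym_4$.

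No step is a real obstacle. The only point needing care is the uniqueness-of-expansion argument in the generating-function proof, that is, showing that $X^{2^r}$ arises only from the factor $1+X^{2^r}$. I expect to handle it in one line by citing uniqueness of binary representation.
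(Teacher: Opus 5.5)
Your proposal is correct and follows essentially the same route as the paper: identify $\sym_{2^r}(z_1,\dots,z_h)$ with $\binom{w}{2^r}\bmod 2$ by counting the $2^r$-subsets of the support, then apply Lucas' theorem using that $2^r$ has a single nonzero binary digit. Your optional generating-function derivation via $(1+X)^w=\prod_k(1+X^{2^k})^{w_k}$ in $\Z_2[X]$ is a valid self-contained substitute for citing Lucas' theorem, but it is not needed.
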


\begin{proof}
Fix an assignment of the $z_i$. Then, $w$ is the Hamming weight of
$(z_1,\dots,z_h)$.  On this assignment exactly $\binom{w}{2^r}$ of the
products occurring in $\sym_{2^r}(z_1,\dots,z_h)$ equal $1$, so
$\sym_{2^r}(z_1,\dots,z_h)\equiv\binom{w}{2^r}\pmod 2$. Writing
$w=\sum_{i\geq0}w_i2^i$ and applying Lucas' theorem modulo $2$, and using the fact that
$2^r$ has a single nonzero binary digit, we get $\binom{w}{2^r}\equiv w_r$.
As $w_r$ is the $r$th binary digit of $w$ and the assignment was arbitrary,
the two Boolean functions coincide.
\end{proof}

\begin{lemma}\label{lem:z8-digits}
Fix a coordinate in the $\Z_8$ part of $A^{t_1,t_2,1}$, with column
\eqref{eq:z8-column}, and
let $A_i,B_i,C_i$ $(1\leq i\leq t_1)$ and $D_j,E_j$ $(1\leq j\leq t_2)$ be the
Boolean functions \eqref{eq:ABC} and \eqref{eq:DE} attached to it by Lemma
\ref{lem:product-digits}.  Then, the bottom, the middle and the top
binary digits of the entry \eqref{eq:z8-symbol} are, respectively,
\begin{align}
P_{q,r}&=\sym_1(A_1,\dots,A_{t_1}),\label{eq:bottom-digit}\\
Q_{q,r}&=\sym_1(B_1,\dots,B_{t_1},D_1,\dots,D_{t_2})
        +\sym_2(A_1,\dots,A_{t_1}),\label{eq:middle-digit}\\
F_{q,r}&=\epsilon+\sum_{i=1}^{t_1}C_i+\sum_{j=1}^{t_2}E_j
       +\sym_2(B_1,\dots,B_{t_1},D_1,\dots,D_{t_2})
       +\sym_4(A_1,\dots,A_{t_1})\notag\\
&\qquad
       +\sym_1(B_1,\dots,B_{t_1},D_1,\dots,D_{t_2})\,
        \sym_2(A_1,\dots,A_{t_1}).\label{eq:top-digit}
\end{align}
Moreover, $P_{q,r}\in V_0$ and $Q_{q,r}\in V_0$. The space
spanned by all the binary coordinate functions of $H^{t_1,t_2,1}$
is therefore
\begin{equation}\label{eq:total-span}
V_0+W,\qquad\text{where }\
W=\Span\bigl\{F_{q,r}:\ q\ \text{normalized primitive in }\Z_8^{t_1},\
r\in\Z_4^{t_2}\bigr\},
\end{equation}
and hence $\rank(H^{t_1,t_2,1})=\dim(V_0+W)$.
\end{lemma}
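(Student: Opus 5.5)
The plan is a direct carry computation on the integer sum behind \eqref{eq:z8-symbol}, using Lemma \ref{lem:product-digits} for each summand and Lemma \ref{lem:carries} for the carries. By Lemma \ref{lem:product-digits}, the entry \eqref{eq:z8-symbol} is congruent modulo $8$ to the integer
$$
4\epsilon+\sum_{i}\bigl(A_i+2B_i+4C_i\bigr)+\sum_j\bigl(2D_j+4E_j\bigr).
$$
This is a sum of bits placed at levels $1,2,4$. At level $1$ only the $A_i$ occur, so the bottom digit is their parity $\sym_1(A)$, which gives \eqref{eq:bottom-digit}. By Lemma \ref{lem:carries}, the $A_i$ produce a carry $\sym_2(A)$ into level $2$ and a direct carry $\sym_4(A)$ into level $4$.

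At level $2$ the bits added are the $B_i$, the $D_j$, and the carry bit $\sym_2(A)$. To avoid double counting, I will write $\sum A_i=\sym_1(A)+2\sym_2(A)+4\sym_4(A)+8(\cdots)$. The level-$2$ sum is then $\sum B_i+\sum D_j+\sym_2(A)$. Its parity is $\sym_1(B,D)+\sym_2(A)$, which gives \eqref{eq:middle-digit}. Its carry into level $4$ is the second symmetric function of the enlarged list $(B,D,\sym_2(A))$, namely $\sym_2(B,D)+\sym_1(B,D)\,\sym_2(A)$. Collecting everything at level $4$ (that is, $\epsilon$, the $C_i$, the $E_j$, $\sym_4(A)$ and this carry), only the parity matters modulo $8$, and \eqref{eq:top-digit} follows. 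The step needing care is exactly this bookkeeping: the carry from level $1$ must be inserted as a single bit, so that its interaction with the $B,D$ bits yields the product term.

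For membership in $V_0$, I will use $A_i=\lambda_ia_i$ and $D_j=\rho_jd_j$, and $B_i=\lambda_ib_i+\mu_ia_i$, which are all linear in $a_i,b_i,d_j$. Hence $P_{q,r}$ and $\sym_1(B,D)$ are linear combinations of $a_i,b_i,d_j$. The function $\sym_2(A)$ is a sum of products $a_ia_k$ with $i\neq k$, and these lie in $V_0$ by Lemma \ref{lem:V0}. So $P_{q,r},Q_{q,r}\in V_0$.

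Finally, by Lemma \ref{lem:gray-digits} the four binary coordinate functions of a $\Z_8$ coordinate span $\Span\{P_{q,r},Q_{q,r},F_{q,r}\}$. By Proposition \ref{prop:column-description}(iii), the $\Z_8$ coordinates are indexed exactly by the pairs $(q,r)$ with $q$ normalized primitive and $r\in\Z_4^{t_2}$. Adding $V_0$ from the other parts shows that the total span is $V_0+W$, and \eqref{eq:rank-as-function-span} then gives the rank.
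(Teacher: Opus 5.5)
Your proposal is correct and follows essentially the same route as the paper: you expand each summand with Lemma~\ref{lem:product-digits}, read off the carries with Lemma~\ref{lem:carries}, check $P_{q,r},Q_{q,r}\in V_0$ from the explicit monomials, and conclude with Lemma~\ref{lem:gray-digits} and \eqref{eq:rank-as-function-span}. The only cosmetic difference is in the carry into level $4$: you take $\sym_2$ of the enlarged list $(B,D,\sym_2(A))$, whereas the paper first takes the digits $\sym_1(B,D),\sym_2(B,D)$ of $\beta$ and then adds the two bits $\sym_1(B,D)$ and $\sym_2(A)$ at level $2$; both give $\sym_2(B,D)+\sym_1(B,D)\,\sym_2(A)$.
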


\begin{proof}
By Lemma \ref{lem:product-digits}, the symbol \eqref{eq:z8-symbol} equals
$\alpha+2\beta+4\gamma$ modulo $8$, where
$\alpha=\sum_iA_i$, $\beta=\sum_iB_i+\sum_jD_j$ and
$\gamma=\epsilon+\sum_iC_i+\sum_jE_j$ are integer sums of bits.  By Lemma
\ref{lem:carries}, the binary digits of $\alpha$ are $\sym_1(A)$,
$\sym_2(A)$ and $\sym_4(A)$, those of $\beta$ are $\sym_1(B,D)$ and
$\sym_2(B,D)$, and the bottom digit of $\gamma$ is
$\epsilon+\sum_iC_i+\sum_jE_j$ read in $\Z_2$.

The bottom digit of the symbol is therefore $\sym_1(A)$, which is
\eqref{eq:bottom-digit}.  At the level of $2$, the digits $\sym_2(A)$ of $\alpha$
and $\sym_1(B,D)$ of $\beta$ are added, so the middle digit is
$\sym_2(A)+\sym_1(B,D)$, which is \eqref{eq:middle-digit}, and the carry produced
by that addition is $\sym_1(B,D)\sym_2(A)$.  At the level of $4$, we add the digit
$\sym_4(A)$ of $\alpha$, the digit $\sym_2(B,D)$ of $\beta$, the digit
$\epsilon+\sum_iC_i+\sum_jE_j$ of $\gamma$ and the carry
$\sym_1(B,D)\sym_2(A)$, and only the parity matters modulo $8$; this is
\eqref{eq:top-digit}.

Next, $P_{q,r}=\sum_i\lambda_ia_i$ is a linear combination of the $a_i$, hence
lies in $V_0$, and $Q_{q,r}=\sum_i(\lambda_ib_i+\mu_ia_i)+\sum_j\rho_jd_j
+\sum_{i<k}\lambda_i\lambda_ka_ia_k$ is a linear combination of the functions
$b_i$, $a_i$, $d_j$ and $a_ia_k$, all
of which belong to the basis \eqref{eq:V0-basis} of $V_0$.

For the last assertion, recall from Lemma \ref{lem:gray-digits} that the four
binary coordinates produced by a coordinate with column \eqref{eq:z8-column}
span the same space of Boolean functions as its three digits $P_{q,r}$,
$Q_{q,r}$ and $F_{q,r}$.  Adding these spans over all the coordinates in the $\Z_8$ part,
and adding the space $V_0$ produced by those in the $\Z_2$ and the $\Z_4$
parts, we conclude that all the binary coordinate functions of $H^{t_1,t_2,1}$
span $V_0+W$ with $W$ as in \eqref{eq:total-span}, because $P_{q,r}$ and
$Q_{q,r}$ lie in $V_0$ and the coordinates in the $\Z_8$ part therefore contribute
nothing beyond $V_0$ except the functions $F_{q,r}$.  By
\eqref{eq:rank-as-function-span}, this gives
$\rank(H^{t_1,t_2,1})=\dim(V_0+W)$.
\end{proof}

\subsection{The reduced top digit}\label{subsec:reduced-top}

The expression \eqref{eq:top-digit} looks complicated, but almost all of it
disappears modulo $V_0$.  The next lemma performs that reduction and is the
technical heart of the section. It shows that, modulo $V_0$, the top digit
depends on the coordinate through two data only, namely the set $S$ of positions
where $q$ is odd and one linear form in the variables of $L$.  For a subset
$S\subseteq\{1,\dots,t_1\}$ and a subset $R\subseteq\{1,\dots,t_2\}$, we
write
\begin{equation}\label{eq:symmetric-notation}
\begin{aligned}
&\alpha_S=\sum_{i\in S}a_i,\quad
\beta_S=\sum_{i\in S}b_i,\quad
\gamma_S=\sum_{i\in S}c_i,\quad
\delta_R=\sum_{j\in R}d_j,\\
&\sym_2(a_S)=\sum_{\{i,k\}\subseteq S}a_ia_k,\quad
\sym_2(b_S)=\sum_{\{i,k\}\subseteq S}b_ib_k,\quad
\sym_4(a_S)=\sum_{\{i,j,k,l\}\subseteq S}a_ia_ja_ka_l,
\end{aligned}
\end{equation}
the last three sums being taken over all the subsets of $S$ of size $2$, $2$
and $4$, respectively.

\begin{lemma}\label{lem:top-digit-reduced}
Let $(4,q_1,\dots,q_{t_1},2r_1,\dots,2r_{t_2})^T$ be the column of a coordinate
in the $\Z_8$ part of $A^{t_1,t_2,1}$, let $S=\{i:q_i\text{ is odd}\}$ and
$p=\min S$, and let the
digits of $q_i$ and $r_j$ be as in \eqref{eq:column-digits}.  Put
$T=\{i:\mu_i=1\}$ and $R=\{j:\rho_j=1\}$.  Then, $S\not=\emptyset$,
$p\notin T$, and, modulo $V_0$,
\begin{equation}\label{eq:top-digit-reduced}
F_{q,r}\equiv g_S+\ell\cdot\bigl(\beta_S+\sym_2(a_S)\bigr),
\end{equation}
where
\begin{equation}\label{eq:gS}
g_S=\gamma_S+\sym_2(b_S)+\sym_4(a_S)+\beta_S\,\sym_2(a_S)
\qquad\text{and}\qquad
\ell=\alpha_T+\delta_R.
\end{equation}
Conversely, for every nonempty $S\subseteq\{1,\dots,t_1\}$ and every linear
form $\ell$ in the variables of $L\setminus\{a_{\min S}\}$, there is a
coordinate in the $\Z_8$ part of $A^{t_1,t_2,1}$ realising the pair $(S,\ell)$.
\end{lemma}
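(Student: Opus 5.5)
The plan is to substitute the digit formulas of Lemma \ref{lem:product-digits} into the expression \eqref{eq:top-digit} of Lemma \ref{lem:z8-digits}, expand everything as a Boolean polynomial, and discard every monomial that lies in $V_0$ by Lemma \ref{lem:V0}. Recall what lies in $V_0$: all linear monomials in $\epsilon,a_i,b_i,d_j,e_j$, and all products of two distinct elements of $L$.

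The preliminary assertions are immediate. Since $q$ is primitive, some $q_i$ is odd, so $S\neq\emptyset$. Since $q$ is normalized, $q_p=1$, so $\mu_p=0$ and $p\notin T$. With $\lambda_i=1$ exactly for $i\in S$, we get $A_i=a_i$ for $i\in S$ and $A_i=0$ otherwise. Hence $\sym_2(A)=\sym_2(a_S)$ and $\sym_4(A)=\sym_4(a_S)$. Likewise $\sym_1(B,D)=\beta_S+\alpha_T+\delta_R=\beta_S+\ell$.

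Now I go through the terms of \eqref{eq:top-digit} one at a time.
\begin{enumerate}[label=\textup{(\alph*)}]
\item The terms $\epsilon$ and $\sum_jE_j$ are linear, hence in $V_0$.
\item The sum $\sum_iC_i$ equals $\gamma_S$ plus linear terms in the $b_i$ and $a_i$, plus $\sum_{i\in S\cap T}a_ib_i$.
\item The last term equals $(\beta_S+\ell)\sym_2(a_S)$ and is kept entirely. It supplies $\beta_S\sym_2(a_S)$ inside $g_S$ and $\ell\,\sym_2(a_S)$ in the second summand.
\item The delicate term is $\sym_2(B_1,\dots,B_{t_1},D_1,\dots,D_{t_2})=\sum B_iB_k+\sum B_iD_j+\sum D_jD_l$. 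Each $B_i=\lambda_ib_i+\mu_ia_i$ is itself a $\Z_2$-sum, so I expand each product bilinearly rather than treating $B_i$ as a single variable.
\begin{itemize}
\item The products $D_jD_l=d_jd_l$ and the $a$--$a$ products $\mu_i\mu_ka_ia_k$ ($i\neq k$) lie in $V_0$.
\item The $b$--$b$ products give $\sym_2(b_S)$.
\item The mixed products give $\sum_{i\in S,\,k\in T,\,k\neq i}b_ia_k+\sum_{i\in S,\,j\in R}b_id_j$, modulo products $a_kd_j\in V_0$.
\end{itemize}
\end{enumerate}

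Adding the leftover $\sum_{i\in S\cap T}a_ib_i$ from $\sum_iC_i$ fills in the missing diagonal $k=i$. The mixed part therefore becomes exactly $\beta_S(\alpha_T+\delta_R)=\beta_S\ell$. Collecting everything yields \eqref{eq:top-digit-reduced} with $g_S$ as in \eqref{eq:gS}.

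The main obstacle is this bookkeeping of the quadratic carry $\sym_2(B,D)$. The point to check is that the diagonal terms $a_ib_i$ with $i\in S\cap T$, which do not appear in $\sym_2$, are exactly supplied by the internal carry $\lambda_i\mu_ia_ib_i$ inside $C_i$.

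For the converse, let $S$ be nonempty with $p=\min S$, and let $\ell=\alpha_T+\delta_R$ with $T\subseteq\{1,\dots,t_1\}\setminus\{p\}$ and $R\subseteq\{1,\dots,t_2\}$ arbitrary. Take $q_i=\lambda_i+2\mu_i$ with $\lambda=\one_S$, $\mu=\one_T$ and $\nu=0$, and take $r_j=\rho_j$ with $\rho=\one_R$. Then $q_p=1$, and all entries $q_i$ with $i<p$ are even, so $q$ is normalized primitive. By Proposition \ref{prop:column-description}(iii), the column $(4,q,2r)^T$ occurs in the $\Z_8$ part of $A^{t_1,t_2,1}$. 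By the first part of the proof, this column realises the pair $(S,\ell)$.
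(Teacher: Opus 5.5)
Your proof is correct and follows the paper's argument essentially step for step. You use the same term-by-term reduction of \eqref{eq:top-digit} modulo $V_0$, in which the diagonal terms $a_ib_i$ with $i\in S\cap T$, coming from the internal carry in $C_i$, complete the mixed part of $\sym_2(B,D)$ to $\beta_S(\alpha_T+\delta_R)$. The only difference is cosmetic: for the converse you exhibit one explicit column (with $\nu=0$ and $r_j\in\{0,1\}$), whereas the paper notes that normalization leaves the digits $\mu_i$ ($i\neq p$) and $\rho_j$ unconstrained.
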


\begin{proof}
Since $q$ is primitive, $S\not=\emptyset$; and since $q$ is normalized, its
first odd entry satisfies $q_p=1$, that is $\lambda_p=1$ and
$\mu_p=\nu_p=0$, and therefore $p\notin T$.  All the remaining digits
$\mu_i,\nu_i$ with $i\not=p$ and all the digits $\rho_j,\sigma_j$ are free,
because item (iii) of Proposition \ref{prop:column-description} imposes no
further condition. The entries $q_i$ with $i<p$ are even, which is exactly
the condition $\lambda_i=0$ for $i<p$, and their higher digits are free.
Consequently, as the coordinate varies, the pair $(T,R)$ runs over all the
pairs with $T\subseteq\{1,\dots,t_1\}\setminus\{p\}$ and
$R\subseteq\{1,\dots,t_2\}$, so $\ell=\alpha_T+\delta_R$ runs over all the
linear forms in the variables of $L\setminus\{a_p\}$.  This proves the last
assertion, and it only remains to establish \eqref{eq:top-digit-reduced}.

We reduce the five summands of \eqref{eq:top-digit} one by one, using the
basis \eqref{eq:V0-basis} of $V_0$ and writing $U=\{i:\nu_i=1\}$ and
$R'=\{j:\sigma_j=1\}$.  Note first that $A_i=\lambda_ia_i$ vanishes unless
$i\in S$, so $\sym_2(A)=\sym_2(a_S)$ and $\sym_4(A)=\sym_4(a_S)$; and that
$B_i=\lambda_ib_i+\mu_ia_i$ and $D_j=\rho_jd_j$, so 
$\sym_1(B,D)=\beta_S+\alpha_T+\delta_R$.

The term $\epsilon$ belongs to $V_0$.  Next, by \eqref{eq:ABC},
$$
\sum_{i}C_i=\sum_{i\in S}c_i+\sum_{i\in T}b_i+\sum_{i\in U}a_i
            +\sum_{i\in S\cap T}a_ib_i
\equiv\gamma_S+\sum_{i\in S\cap T}a_ib_i \pmod{V_0},
$$
because $b_i$ and $a_i$ lie in $V_0$, whereas $a_ib_i$ does not, since $b_i$
is not an element of $L$.  Similarly, by \eqref{eq:DE},
$\sum_jE_j=\sum_{j\in R}e_j+\sum_{j\in R'}d_j\in V_0$.

We now expand $\sym_2(B,D)$.  First,
$$
\sum_{i<k}B_iB_k=\sum_{i<k}(\lambda_ib_i+\mu_ia_i)(\lambda_kb_k+\mu_ka_k)
=\sym_2(b_S)+\sum_{i\not=k}\lambda_i\mu_kb_ia_k
 +\sum_{i<k}\mu_i\mu_ka_ia_k,
$$
whose last sum lies in $V_0$, being a combination of products of two distinct
elements of $L$, while its middle sum equals $\sum_{i\in S}\sum_{k\in T,\,
k\not=i}a_kb_i$ because $\lambda_i=1$ exactly for $i\in S$ and $\mu_k=1$
exactly for $k\in T$. Second,
$$\sum_{i,j}B_iD_j=\sum_{i\in S,\,j\in R}b_id_j
+\sum_{i\in T,\,j\in R}a_id_j\equiv\beta_S\delta_R\pmod{V_0},$$ again because
$a_id_j\in V_0$. Third, $\sum_{j<l}D_jD_l=\sum_{\{j,l\}\subseteq R}d_jd_l\in V_0$.
Adding the three contributions,
$$
\sym_2(B,D)\equiv \sym_2(b_S)+\sum_{i\in S}\sum_{k\in T,\,k\not=i}a_kb_i
             +\beta_S\delta_R\pmod{V_0}.
$$
Combining this with the term $\sum_{i\in S\cap T}a_ib_i$ obtained above and
observing that
$\sum_{i\in S\cap T}a_ib_i+\sum_{i\in S}\sum_{k\in T,\,k\not=i}a_kb_i
=\sum_{i\in S}\sum_{k\in T}a_kb_i=\beta_S\alpha_T,$
we get
$$
\sum_iC_i+\sum_jE_j+\sym_2(B,D)\equiv
\gamma_S+\sym_2(b_S)+\beta_S\alpha_T+\beta_S\delta_R\pmod{V_0}.
$$
Finally, $\sym_4(A)=\sym_4(a_S)$ and
$\sym_1(B,D)\sym_2(A)=(\beta_S+\alpha_T+\delta_R)\sym_2(a_S)$. Adding
everything and writing $\ell=\alpha_T+\delta_R$, we obtain
$F_{q,r}\equiv\gamma_S+\sym_2(b_S)+\sym_4(a_S)+\beta_S\,\sym_2(a_S)
        +\ell\bigl(\beta_S+\sym_2(a_S)\bigr)\pmod{V_0},$
which is \eqref{eq:top-digit-reduced}.
\end{proof}

Formula \eqref{eq:top-digit-reduced} is affine in $\ell$, which is what makes
the dimension count of the next subsection short: for each fixed $S$, we only
have to add the single function $g_S$, obtained for $\ell=0$, and the products
$v\cdot(\beta_S+\sym_2(a_S))$ with $v$ ranging over the variables of
$L\setminus\{a_{\min S}\}$.

\begin{example}\label{ex:reduced-t1-1}
Let $t_1=1$.  Then $S=\{1\}$ is forced, $p=1$, $\beta_S=b_1$, $\sym_2(a_S)=0$,
$\sym_2(b_S)=0$, $\sym_4(a_S)=0$ and $g_S=c_1$.  Moreover, $T\subseteq\emptyset$,
so $\ell=\delta_R$ is a linear form in the $d_j$ only, and
\eqref{eq:top-digit-reduced} becomes $F_{q,r}\equiv c_1+(\sum_{j\in R}d_j)b_1\pmod{V_0}$. Hence, modulo $V_0$, the top digits span $\Span\{c_1,b_1d_1,\dots,
b_1d_{t_2}\}$, a space of dimension $1+t_2$.
\end{example}

\subsection{The additional carry space}\label{subsec:carry-space}

We now compute the dimension of $(V_0+W)/V_0$, where $W$ is as in
\eqref{eq:total-span}.  By Lemma \ref{lem:top-digit-reduced} and the fact
that \eqref{eq:top-digit-reduced} is affine in $\ell$, we have
\begin{equation}\label{eq:W-decomposition}
\frac{V_0+W}{V_0}=\overline{W_1}+\overline{W_2},
\end{equation}
where the bars denote images in the quotient by $V_0$ and
\begin{equation}\label{eq:W1W2}
\begin{aligned}
W_1&=\Span\bigl\{v\cdot(\beta_S+\sym_2(a_S)):\
 \emptyset\not=S\subseteq\{1,\dots,t_1\},\
 v\in L\setminus\{a_{\min S}\}\bigr\},\\
W_2&=\Span\bigl\{g_S:\ \emptyset\not=S\subseteq\{1,\dots,t_1\}\bigr\}.
\end{aligned}
\end{equation}
Indeed, for a fixed $S$, the set of functions
$\{g_S+\ell(\beta_S+\sym_2(a_S))\}$, with $\ell$ running over all the linear
forms in $L\setminus\{a_{\min S}\}$, spans
$\Span\{g_S\}+\Span\{v(\beta_S+\sym_2(a_S)):v\in L\setminus\{a_{\min S}\}\}$,
because the map $\ell\mapsto \ell(\beta_S+\sym_2(a_S))$ is $\Z_2$-linear in
$\ell$ and $\ell=0$ is allowed.

We compute the two pieces separately and then show that they meet only in
$V_0$.  Throughout, ``modulo $V_0$'' means that the monomials occurring in
\eqref{eq:V0-basis}, namely $\epsilon$, the variables $a_i$, $b_i$, $d_j$,
$e_j$, and the products of two distinct elements of $L$, are deleted.

\begin{lemma}\label{lem:W1}
The space $\overline{W_1}$ has as a basis the set of monomials
\begin{equation}\label{eq:W1-basis}
\begin{gathered}
a_kb_i\ (1\leq i,k\leq t_1,\ i\not=k),\qquad
a_kb_k\ (2\leq k\leq t_1),\\
a_ia_ja_k\ (1\leq i<j<k\leq t_1),\qquad
b_id_j\ (1\leq i\leq t_1,\ 1\leq j\leq t_2),\\
a_ia_kd_j\ (1\leq i<k\leq t_1,\ 1\leq j\leq t_2),
\end{gathered}
\end{equation}
and therefore
\begin{equation}\label{eq:W1-dim}
\dim\overline{W_1}=t_1(t_1-1)+(t_1-1)+\binom{t_1}{3}
 +t_2\binom{t_1+1}{2}.
\end{equation}
\end{lemma}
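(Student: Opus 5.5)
Work directly with the generators of $W_1$ modulo $V_0$. For a nonempty $S$ with $p=\min S$ and a variable $v\in L\setminus\{a_p\}$, the generator is
$$v\beta_S+v\,\sym_2(a_S)=\sum_{i\in S}vb_i+\sum_{\{i,k\}\subseteq S}v\,a_ia_k .$$
Modulo $V_0$, three simplifications apply. Any term $va_ia_k$ with $v\in\{a_i,a_k\}$ collapses to $a_ia_k$, which lies in $V_0$. If $v=a_m$ with $m\in S$, $m\ne p$, the surviving cubic terms are $a_ma_ia_k$ with $i,k\ne m$. If $v=d_j$, the cubic terms are $d_ja_ia_k$.

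Step 1 (containment). Every generator is, modulo $V_0$, a sum of monomials from \eqref{eq:W1-basis}. The only point to check is that $a_pb_p$ never occurs for $p=\min S$: a term $a_kb_k$ arises only from $v=a_k$ with $k\in S$, and $k\ne p$ is forced. Since $a_1$ is the minimum of every $S$ containing $1$, the monomial $a_1b_1$ is excluded, which explains the range $2\le k\le t_1$. All listed monomials are square-free, of degree $\ge 2$, and not products of two elements of $L$, so they are independent modulo $V_0$.

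Step 2 (spanning). Show every listed monomial lies in $\overline{W_1}$, by induction on $|S|$:
\begin{itemize}[label=\textbullet]
\item $S=\{i\}$ and $v=a_k$ ($k\ne i$) give $a_kb_i$; $v=d_j$ gives $b_id_j$.
\item $S=\{i,k\}$ with $i<k$ and $v=a_k$ give $a_kb_i+a_kb_k$ plus a term in $V_0$, hence $a_kb_k$ for $k\ge2$.
\item $S=\{i,k\}$ and $v=d_j$ give $d_jb_i+d_jb_k+d_ja_ia_k$, hence $a_ia_kd_j$.
\item $S=\{i,j\}$ and $v=a_k$ with $k\notin S$ give $a_kb_i+a_kb_j+a_ka_ia_j$, hence $a_ia_ja_k$.
\end{itemize}
So all singletons and pairs suffice for spanning; larger $S$ only need Step 1.

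Step 3 (count). Count the basis: $t_1(t_1-1)+(t_1-1)+\binom{t_1}{3}+t_1t_2+\binom{t_1}{2}t_2$, and use $t_1+\binom{t_1}{2}=\binom{t_1+1}{2}$ to obtain \eqref{eq:W1-dim}.

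The main obstacle is the bookkeeping in Step 1: checking carefully that the degree-three terms $va_ia_k$ with $v$ among $a_i,a_k$ really fall into $V_0$, and that no stray monomial, notably $a_pb_p$ or something of degree four, appears.
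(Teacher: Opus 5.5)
Your proposal is correct and follows essentially the same route as the paper: reduce each generator $v(\beta_S+\sym_2(a_S))$ modulo $V_0$, treating $v=d_j$ and $v=a_k$ separately, check containment (including why $a_1b_1$ never appears), recover every listed monomial from singletons and two-element sets $S$, and count. You also note explicitly that none of the listed monomials belongs to the monomial basis of $V_0$, which states the independence modulo $V_0$ slightly more carefully than the paper does.
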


\begin{proof}
Write $p=\min S$ and set $h_{S,v}=v\,(\beta_S+\sym_2(a_S))$ for
$v\in L\setminus\{a_p\}$.  There are two kinds of generators.

Assume first that $v=d_j$.  Then
\begin{equation}\label{eq:hSdj}
h_{S,d_j}=\sum_{i\in S}b_id_j+\sum_{\{i,k\}\subseteq S}a_ia_kd_j,
\end{equation}
and no reduction modulo $V_0$ is needed, because all the displayed monomials
have degree at least two and involve the variable $d_j$ together with a variable $b_i$ or with two distinct variables $a_i,a_k$, and none of these monomials occurs in \eqref{eq:V0-basis}. Taking $S=\{i\}$ in \eqref{eq:hSdj} gives $b_id_j$, and then taking $S=\{i,k\}$ and subtracting $b_id_j$ and $b_kd_j$ gives $a_ia_kd_j$.  Conversely, every $h_{S,d_j}$ is by \eqref{eq:hSdj} a sum of
monomials of these two shapes.  Note that the constraint $v\not=a_p$ never
restricts the choice of $v=d_j$.

Assume now that $v=a_k$ with $k\not=p$.  Then, $h_{S,a_k}=\sum_{i\in
S}a_kb_i+a_k\,\sym_2(a_S)$, and $a_k\,\sym_2(a_S)=\sum_{\{i,l\}\subseteq
S}a_ka_ia_l$.  If $k\in S$, the
pairs $\{i,l\}\subseteq S$ containing $k$ contribute the products
$a_ka_l$ with $l\in S\setminus\{k\}$, which are products of two distinct
elements of $L$ and therefore lie in $V_0$; the pairs not containing $k$
contribute monomials $a_ia_la_k$ with three distinct indices.  If $k\notin
S$, all the pairs contribute monomials with three distinct indices.  Hence,
in both cases,
\begin{equation}\label{eq:hSak}
h_{S,a_k}\equiv\sum_{i\in S}a_kb_i
 +\sum_{\{i,l\}\subseteq S\setminus\{k\}}a_ia_la_k \pmod{V_0}.
\end{equation}
Every monomial occurring in \eqref{eq:hSak} is of the form $a_kb_i$ with
$i\not=k$, or of the form $a_kb_k$ with $k\in S$ and $k\not=p=\min S$, which
forces $k\geq2$, or of the form $a_ia_la_k$ with three distinct indices.  All
of them are listed in \eqref{eq:W1-basis}.

It remains to show that all the monomials of \eqref{eq:W1-basis} occur.
Taking $S=\{i\}$ with $i\not=k$ in \eqref{eq:hSak}, the second sum is empty
and we obtain $a_kb_i$; here the constraint $k\not=p=i$ is exactly $i\not=k$,
so all such monomials are obtained.  Taking $S=\{i,l\}$ and $k\notin S$, which
automatically satisfies $k\not=\min S$, the first sum reduces modulo the
monomials already obtained and leaves $a_ia_la_k$; every three-element subset
$\{\alpha<\beta<\gamma\}$ is reached by choosing $S=\{\alpha,\beta\}$ and
$k=\gamma$.  Finally, taking $S=\{i,k\}$ with $i<k$ and $v=a_k$, which is
legitimate because $k\not=\min S=i$, formula \eqref{eq:hSak} gives
$a_kb_i+a_kb_k+a_ia_k$, and deleting $a_ia_k\in V_0$ and the monomial $a_kb_i$
already obtained leaves $a_kb_k$; every $k$ with $2\leq k\leq t_1$ arises in
this way.

Therefore, $\overline{W_1}$ is spanned by the monomials \eqref{eq:W1-basis} and
contains all of them.  Being pairwise distinct square-free monomials, they are
linearly independent and form a basis. Counting them gives
$
t_1(t_1-1)+(t_1-1)+\binom{t_1}{3}+t_1t_2+t_2\binom{t_1}{2}
=t_1(t_1-1)+(t_1-1)+\binom{t_1}{3}+t_2\binom{t_1+1}{2},
$
where we used $t_1+\binom{t_1}{2}=\binom{t_1+1}{2}$.
\end{proof}

\begin{lemma}\label{lem:W2}
The space $\overline{W_2}$ has dimension
\begin{equation}\label{eq:W2-dim}
\dim\overline{W_2}=t_1+\binom{t_1}{2}+\binom{t_1}{3}+\binom{t_1}{4}.
\end{equation}
\end{lemma}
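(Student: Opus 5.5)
Note first that the claimed dimension is $\sum_{k=1}^{4}\binom{t_1}{k}$, the number of nonempty subsets $S\subseteq\{1,\dots,t_1\}$ of size at most four, whereas $W_2$ has one generator $g_S$ for each of the $2^{t_1}-1$ nonempty subsets. So the plan has two parts: show that modulo $V_0$ the functions $g_S$ with $|S|\leq4$ are linearly independent, and show that every $g_S$ with $|S|\geq5$ lies in their span.

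\textbf{Step 1: an inclusion--exclusion identity.} For a fixed $S$, each of the four summands of $g_S$ in \eqref{eq:gS} is a sum over small subsets of $S$:
\begin{itemize}
\item $\gamma_S=\sum_{i\in S}c_i$ and $\beta_S\,\sym_2(a_S)=\sum_{i\in S}\sum_{\{k,l\}\subseteq S}b_ia_ka_l$ are sums of terms indexed by subsets of size at most $3$;
\item $\sym_2(b_S)$ is a sum over $2$-subsets of $S$;
\item $\sym_4(a_S)$ is a sum over $4$-subsets of $S$.
\end{itemize}
Hence $g_S=\sum_{J\subseteq S,\,1\leq|J|\leq4}\theta_J$, where $\theta_J$ collects the monomials whose set of indices is exactly $J$. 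Möbius inversion over $\Z_2$ then gives $\theta_J=\sum_{\emptyset\neq K\subseteq J}g_K$. The $\theta_J$ with $1\leq|J|\leq4$ and the $g_K$ with $1\leq|K|\leq4$ therefore span the same space. In particular $g_S$ lies in the span of the $g_K$ with $|K|\leq4$, which proves the upper bound $\dim\overline{W_2}\leq\sum_{k=1}^4\binom{t_1}{k}$.

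\textbf{Step 2: the pieces $\theta_J$ modulo $V_0$.} Computing each $\theta_J$ explicitly gives:
\begin{itemize}
\item $\theta_{\{i\}}=c_i+b_i$, because the terms $b_ia_ia_i$ do not occur in $\sym_2(a_S)$ (which uses distinct indices) and so contribute nothing for $|J|=1$; modulo $V_0$ this is $c_i$;
\item $\theta_{\{i,k\}}=b_ib_k+b_ia_ia_k+b_ka_ia_k$;
\item $\theta_{\{i,k,l\}}=b_ia_ka_l+b_ka_ia_l+b_la_ia_k$;
\item $\theta_{\{i,j,k,l\}}=a_ia_ja_ka_l$.
\end{itemize}
Modulo $V_0$ nothing else is deleted. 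None of these monomials is $\epsilon$, a single variable, or a product of two distinct elements of $L$: the pair $b_ib_k$ is not in $V_0$ because $b_i\notin L$.

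\textbf{Step 3: linear independence.} The images of distinct $\theta_J$ involve disjoint sets of square-free monomials: monomials with different index sets are different, and within a fixed $J$ the monomials are as listed above. Each $\theta_J$ has a nonzero image modulo $V_0$, and the monomials involved are disjoint from those of the basis \eqref{eq:V0-basis}. Since distinct square-free monomials are linearly independent, the images of the $\theta_J$ with $1\leq|J|\leq4$ are linearly independent. By Step 1 they span $\overline{W_2}$, so $\dim\overline{W_2}=t_1+\binom{t_1}{2}+\binom{t_1}{3}+\binom{t_1}{4}$.

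\textbf{Main obstacle.} The delicate point is the bookkeeping in Step 2. One must check that $b_ia_ka_l$ with $i\in\{k,l\}$ does not reduce to a monomial belonging to another $J$; such monomials are simply $a_ib_ia_l$, indexed by $J=\{i,l\}$. One must also check that the expression for $g_S$ really decomposes as $\sum_{J\subseteq S}\theta_J$ with $\theta_J$ independent of $S$. This holds because every summand of $g_S$ is an unrestricted sum over index tuples drawn from $S$. Any overlap with $W_1$ is deliberately not addressed here, since it is handled separately afterwards.
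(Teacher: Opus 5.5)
Your proof is correct, but it reaches the result by a somewhat different mechanism than the paper.

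\textbf{The paper's route.} It first splits off $\Span\{c_1,\dots,c_{t_1}\}$, using $g_{\{i\}}=c_i$. It then writes $G_S=g_S-\gamma_S$ as $\sum_{P\subseteq S}u_P+\sum_{T\subseteq S}v_T+\sum_{F\subseteq S}w_F$; these $u_P,v_T,w_F$ are exactly your $\theta_J$ with $|J|=2,3,4$. Finally it computes $\dim\Span\{G_S:|S|\geq2\}$ as the rank of the coefficient matrix with entries $\prod_{i\in I}\varsigma_i$. Full column rank follows because square-free monomials of degree $2$ to $4$ are independent as functions on $\Z_2^{t_1}$ and vanish on every vector of weight at most one.

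\textbf{Your route.} You invert the subset-sum relation $g_S=\sum_{\emptyset\neq J\subseteq S}\theta_J$ explicitly. Möbius inversion over $\Z_2$ gives $\theta_J=\sum_{\emptyset\neq K\subseteq J}g_K$. This exhibits each $c_i$, $u_P$, $v_T$, $w_F$ as an explicit sum of top-digit functions. It also shows that $\{g_K:1\leq|K|\leq4\}$ spans $W_2$, with every $g_S$ for $|S|\geq5$ redundant. Your Step~3 disjointness of monomial supports is the same observation the paper uses to show that $u_P,v_T,w_F$ are linearly independent.

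\textbf{Comparison.} Both arguments rest on the invertibility of the zeta matrix of the subset lattice. The paper's version is a short rank argument that needs no inversion formula. Yours is more constructive: it yields an explicit basis of $\overline{W_2}$ made of genuine coordinate functions, and it treats singletons uniformly with the larger $J$ instead of splitting off the $c_i$ first.

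\textbf{One small slip.} In Step~2, $\theta_{\{i\}}=g_{\{i\}}=c_i$, not $c_i+b_i$. The function $g_S$ contains no standalone $\beta_S$ term, and $\gamma_{\{i\}}=c_i$ is the only surviving summand when $|S|=1$. This is harmless, because you only use the image modulo $V_0$, where $b_i$ vanishes anyway.
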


\begin{proof}
Recall from \eqref{eq:gS} that
$g_S=\gamma_S+G_S$ with
$G_S=\sym_2(b_S)+\sym_4(a_S)+\beta_S\,\sym_2(a_S)$.  If $S=\{i\}$ is a singleton, then
$\sym_2(b_S)=\sym_2(a_S)=\sym_4(a_S)=0$, so $g_{\{i\}}=c_i$.  Since
$\gamma_S=\sum_{i\in S}c_i$ always belongs to $\Span\{c_1,\dots,c_{t_1}\}$,
we obtain $W_2=\Span\{c_1,\dots,c_{t_1}\}+\Span\{G_S:\
S\subseteq\{1,\dots,t_1\},\ |S|\geq2\}$. The variables $c_i$ occur in no
$G_S$, so the two summands intersect trivially
and $\dim\overline{W_2}=t_1+\dim\Span\{G_S:|S|\geq2\}$. Note that none of the
monomials involved lies in $V_0$, since they all have degree at least two and
none of them is a product of two distinct elements of $L$.

We now compute $\dim\Span\{G_S:|S|\geq2\}$.  For a two-element subset
$P=\{i,k\}$, a three-element subset $T=\{i,k,l\}$ and a four-element subset
$F=\{i,j,k,l\}$ of $\{1,\dots,t_1\}$, put
$u_P=b_ib_k+a_ia_kb_i+a_ia_kb_k$,
$v_T=a_ia_kb_l+a_ia_lb_k+a_ka_lb_i$,
$w_F=a_ia_ja_ka_l$.
Expanding $\beta_S\,\sym_2(a_S)=\sum_{i\in S}\sum_{\{k,l\}\subseteq S}b_ia_ka_l$
and separating the terms with $i\in\{k,l\}$ from those with $i\notin\{k,l\}$,
we obtain
\begin{equation}\label{eq:GS-expansion}
G_S=\sum_{\substack{P\subseteq S\\|P|=2}}u_P
   +\sum_{\substack{T\subseteq S\\|T|=3}}v_T
   +\sum_{\substack{F\subseteq S\\|F|=4}}w_F.
\end{equation}
The functions $u_P$, $v_T$ and $w_F$ are linearly independent: the monomial
$b_ib_k$ occurs only in $u_{\{i,k\}}$, a monomial $a_ia_kb_l$ with three
distinct indices occurs only in $v_{\{i,k,l\}}$, and a product of four
distinct variables $a$ occurs only in the corresponding $w_F$.

Identify a subset $S\subseteq\{1,\dots,t_1\}$ with its indicator vector
$\varsigma\in\Z_2^{t_1}$.  By \eqref{eq:GS-expansion}, the coefficient of
$u_P$, of $v_T$ or of $w_F$ in $G_S$ is $\prod_{i\in P}\varsigma_i$,
$\prod_{i\in T}\varsigma_i$ or $\prod_{i\in F}\varsigma_i$, respectively.
Hence, $\dim\Span\{G_S:|S|\geq2\}$ is the rank over $\Z_2$ of the matrix whose
rows are indexed by the vectors $\varsigma$ of weight at least two and whose
columns are indexed by the subsets $I$ with $2\leq|I|\leq4$, the entry being
$\prod_{i\in I}\varsigma_i$.  Each column is the evaluation vector of a
distinct square-free monomial of
degree $2$, $3$ or $4$ in $\varsigma_1,\dots,\varsigma_{t_1}$.  Such monomials
are linearly independent as functions on $\Z_2^{t_1}$, and all of them vanish
at every $\varsigma$ of weight at most one. So, a linear combination of them
vanishing on all the vectors of weight at least two vanishes identically and
is the zero combination.  Hence, the columns are linearly independent, the rank
is their number $\binom{t_1}{2}+\binom{t_1}{3}+\binom{t_1}{4}$, and adding the
$t_1$ dimensions contributed by the $c_i$ gives \eqref{eq:W2-dim}.
\end{proof}

\begin{proposition}\label{prop:carry-dimension}
Let $t_1\geq1$ and $t_2\geq0$.  Then,
\begin{equation}\label{eq:carry-dimension}
\dim\frac{V_0+W}{V_0}
=-1+2t_1+3\binom{t_1}{2}+2\binom{t_1}{3}+\binom{t_1}{4}
 +t_2\binom{t_1+1}{2}.
\end{equation}
\end{proposition}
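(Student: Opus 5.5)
The plan is to combine the decomposition \eqref{eq:W-decomposition} with Lemmas \ref{lem:W1} and \ref{lem:W2}, and to show that $\overline{W_1}\cap\overline{W_2}=\{0\}$ in the quotient by $V_0$. Then
$$
\dim\frac{V_0+W}{V_0}=\dim\overline{W_1}+\dim\overline{W_2},
$$
and the right-hand side should reduce to \eqref{eq:carry-dimension}. The arithmetic check is immediate. Writing $t_1(t_1-1)=2\binom{t_1}{2}$, the sum of \eqref{eq:W1-dim} and \eqref{eq:W2-dim} is
$$
2\binom{t_1}{2}+(t_1-1)+\binom{t_1}{3}+t_2\binom{t_1+1}{2}+t_1+\binom{t_1}{2}+\binom{t_1}{3}+\binom{t_1}{4},
$$
which is exactly the stated expression. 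So everything rests on the trivial intersection.

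For the intersection, I would work in the space of all Boolean functions and use the uniqueness of the algebraic normal form. Let $\mathcal{B}$ be the union of the monomials of the basis \eqref{eq:V0-basis} and the monomials of \eqref{eq:W1-basis}. Then $V_0+W_1$ is contained in the span of $\mathcal{B}$. Indeed, Lemma \ref{lem:W1} shows that each generator of $W_1$ is, modulo $V_0$, a combination of the monomials \eqref{eq:W1-basis}.

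On the other side, the proof of Lemma \ref{lem:W2} shows that $W_2$ is spanned by the $c_i$, the $u_P$, the $v_T$ and the $w_F$, and that these are linearly independent. Each of them has a \emph{witness monomial}:
\begin{itemize}
\item $c_i$ for $c_i$;
\item $b_ib_k$ for $u_{\{i,k\}}$;
\item a monomial $a_ia_kb_l$ with three distinct indices for $v_{\{i,k,l\}}$;
\item $a_ia_ja_ka_l$ for $w_F$.
\end{itemize}
Each witness occurs in exactly one of these generators, and none of the witnesses belongs to $\mathcal{B}$. I will check this against the list. The set $\mathcal{B}$ contains no variable $c$ at all. It contains no product $b_ib_k$. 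Its only monomials of shape $a\,a\,b$ are absent, since \eqref{eq:W1-basis} contains only $a_kb_i$, $a_kb_k$, $a_ia_ja_k$, $b_id_j$ and $a_ia_kd_j$. Finally, it contains no monomial of degree four.

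Now take a nonzero element $g$ of $W_2$, written in terms of these generators. Some generator appears in $g$ with a nonzero coefficient, and its witness monomial then appears in the algebraic normal form of $g$ with that coefficient. Since distinct square-free monomials are independent, $g$ cannot lie in the span of $\mathcal{B}$. Hence $W_2\cap(V_0+W_1)=\{0\}$, which is the same as $\overline{W_1}\cap\overline{W_2}=\{0\}$. This gives the proposition.

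The main obstacle, though it is mild, is making sure that the reduction modulo $V_0$ in Lemma \ref{lem:W1} does not introduce any monomial that is a witness for $W_2$. In particular one must confirm that no term $a_ia_kb_l$ with three distinct indices survives in the generators $h_{S,a_k}$. This follows from \eqref{eq:hSak}, which contains only degree-two terms $a_kb_i$ and cubic terms in the variables $a$.
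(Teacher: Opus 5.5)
Your proposal is correct and follows the paper's proof. Both combine the decomposition \eqref{eq:W-decomposition} with Lemmas \ref{lem:W1} and \ref{lem:W2}, prove $\overline{W_1}\cap\overline{W_2}=\{0\}$ by noting that no monomial on the $V_0+W_1$ side contains a variable $c$, is a product $b_ib_k$, has the shape $a\,a\,b$, or is a product of four variables $a$, and then add the two dimensions; your witness-monomial argument is just a more explicit version of the paper's observation that the two monomial lists are disjoint.
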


\begin{proof}
By Lemma \ref{lem:W1}, the monomials occurring in $\overline{W_1}$ have the
shapes $a_kb_i$, $a_ia_ja_k$, $b_id_j$ and $a_ia_kd_j$. By the proof of Lemma
\ref{lem:W2}, those occurring in $\overline{W_2}$ have the shapes $c_i$,
$b_ib_k$, $a_ia_kb_l$ and $a_ia_ja_ka_l$.  The two lists are disjoint, since a
monomial of the first has degree two with one variable $a$ and one variable
$b$, or degree three with three variables $a$, or contains a variable $d$,
whereas one of the second contains a variable $c$, or has degree two with two
variables $b$, or degree three with two variables $a$ and one variable $b$, or
degree four with four variables $a$.  Consequently,
$\overline{W_1}\cap\overline{W_2}=\{0\}$ and, by
\eqref{eq:W-decomposition},
$\dim\bigl((V_0+W)/V_0\bigr)=\dim\overline{W_1}+\dim\overline{W_2}$. Adding
\eqref{eq:W1-dim} and \eqref{eq:W2-dim}, we obtain
$$
t_1(t_1-1)+(t_1-1)+\binom{t_1}{3}+t_2\binom{t_1+1}{2}
+t_1+\binom{t_1}{2}+\binom{t_1}{3}+\binom{t_1}{4},
$$
and, since $t_1(t_1-1)=2\binom{t_1}{2}$ and
$t_1(t_1-1)+t_1-1+t_1=t_1^2+t_1-1=2t_1+2\binom{t_1}{2}-1$, this equals the
right-hand side of \eqref{eq:carry-dimension}.
\end{proof}

\begin{example}\label{ex:carry-t1-2}
Let $t_1=2$ and $t_2=0$.  Formula \eqref{eq:carry-dimension} gives
$-1+4+3=6$.  Explicitly, the basis of $\overline{W_1}$ given by
\eqref{eq:W1-basis} is $\{a_2b_1,a_1b_2,a_2b_2\}$, of size
$t_1(t_1-1)+(t_1-1)=2+1=3$, while $\overline{W_2}$ has basis
$\{c_1,c_2,G_{\{1,2\}}\}$ with $G_{\{1,2\}}=b_1b_2+a_1a_2b_1+a_1a_2b_2$, again
of size $3$.  Note that $a_1b_1$ does not occur: by Lemma
\ref{lem:W1}, a monomial $a_kb_k$ requires $k\in S$ and $k\not=\min S$, and
$k=1$ can never satisfy both conditions.  This is precisely why the count in
\eqref{eq:W1-dim} contains $t_1-1$ and not $t_1$ such monomials.
\end{example}

\subsection{The rank formula}\label{subsec:rank-formula}

It only remains to add the rows of order $2$.  We recall the following result
from \cite[Theorem~10]{Z2Z4Z8Linearity}, which will also be used in Section
\ref{sec:rk}.

\begin{lemma}\cite{Z2Z4Z8Linearity}\label{lem:plotkin}
Let $t_1\geq1$, $t_2\geq0$, and $t_3\geq1$.  Then,
\begin{equation}\label{eq:plotkin-code}
\rank\bigl(H^{t_1,t_2,t_3+1}\bigr)=1+\rank\bigl(H^{t_1,t_2,t_3}\bigr).
\end{equation}
\end{lemma}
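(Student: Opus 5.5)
The plan is to prove the statement directly from the duplication \eqref{eq:construction-order2}, by describing the binary code $H^{t_1,t_2,t_3+1}$ explicitly in terms of $H^{t_1,t_2,t_3}$. Write $A^{t_1,t_2,t_3}=(A_1\mid A_2\mid A_3)$. By \eqref{eq:construction-order2}, $A^{t_1,t_2,t_3+1}$ consists of the rows of $(A_1\ A_1\mid A_2\ A_2\mid A_3\ A_3)$ together with the new row $\zz=(\zero\ \one\mid\zero\ \two\mid\zero\ \four)$. Every additive codeword of $\cH^{t_1,t_2,t_3+1}$ can therefore be written uniquely as $(\uu,\uu)+\epsilon\zz$, where $\uu\in\cH^{t_1,t_2,t_3}$ is written part by part, each part duplicated, and $\epsilon\in\Z_2$.

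The first step is to compute the Gray image of such a codeword. The key observation is a property of Carlet's map \eqref{eq:Carlet}: adding $2^{s-1}$ to $u\in\Z_{2^s}$ flips only the top digit $u_{s-1}$, since no carry leaves the ring. Hence $\phi_s(u+2^{s-1})=\phi_s(u)+\one$, where $\one$ is the all-ones vector of length $2^{s-1}$. The entries of $\zz$ in its second copy are exactly $1$, $2$ and $4$ in the $\Z_2$, $\Z_4$ and $\Z_8$ parts, that is, $2^{s-1}$ in each part. After a fixed permutation of binary coordinates that groups the first copies together and the second copies together, this gives
$$
H^{t_1,t_2,t_3+1}=\bigl\{(\xx,\ \xx+\epsilon\one):\ \xx\in H^{t_1,t_2,t_3},\ \epsilon\in\Z_2\bigr\},
$$
and the rank is invariant under this permutation.

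The second step is the linear algebra. The span of this code is $\{(\yy,\yy):\yy\in\langle H^{t_1,t_2,t_3}\rangle\}+\Span\{(\zero,\one)\}$. The first summand has dimension $\rank(H^{t_1,t_2,t_3})$, because $\yy\mapsto(\yy,\yy)$ is injective and linear. The vector $(\zero,\one)$ does not lie in it, since any vector $(\yy,\yy)$ with first half zero is zero. The sum is therefore direct, and the rank increases by exactly one. Equivalently, in the language of \eqref{eq:rank-as-function-span}, the coordinate functions of the second copy are $f_\ell+\epsilon_{t_3+1}$, so their span is the old span plus the new independent variable $\epsilon_{t_3+1}$. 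This new variable is independent because the old functions do not involve it.

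The only point needing care is the identity $\phi_s(u+2^{s-1})=\phi_s(u)+\one$ for $s=1,2,3$. I expect this to be immediate from \eqref{eq:Carlet}, since the digits $u_0,\dots,u_{s-2}$ are unchanged and $u_{s-1}$ is flipped. Everything else is bookkeeping of the coordinate permutation.
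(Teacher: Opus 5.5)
Your proof is correct and complete. The paper does not prove this lemma. It quotes it from \cite[Theorem~10]{Z2Z4Z8Linearity} and uses it only in the proof of Theorem~\ref{thm:rank-main}, to pass from $t_3=1$ to arbitrary $t_3$.

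Your argument is the natural self-contained proof, and it is the Plotkin-type doubling that the label of the lemma alludes to. The identity $\phi_s(u+2^{s-1})=\phi_s(u)+\one$ follows at once from \eqref{eq:Carlet}, because adding $2^{s-1}$ flips only the top digit. It shows that the new row $(\zero\ \one\mid\zero\ \two\mid\zero\ \four)$ of \eqref{eq:construction-order2} becomes the all-ones vector on the second copy. Hence $H^{t_1,t_2,t_3+1}$ is, up to a fixed coordinate permutation, $\{(\xx,\xx+\epsilon\one):\xx\in H^{t_1,t_2,t_3},\ \epsilon\in\Z_2\}$, and your direct-sum count gives exactly one extra dimension.

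Compared with citing the result, your route has three advantages:
\begin{itemize}
\item It makes Section~\ref{sec:rank} self-contained.
\item It fits the coordinate-function framework of \eqref{eq:rank-as-function-span}: the second copy contributes the functions $f_\ell+\epsilon_{t_3+1}$, and the fresh variable is independent of the old span.
\item It shows that nothing specific to Hadamard codes is used. The rank increases by one under this duplication for any $\Z_2\Z_4\Z_8$-additive code. The same description also shows that the kernel dimension increases by one, since the kernel of the doubled code is $\{(\xx,\xx+\delta\one):\xx\in\K(C),\ \delta\in\Z_2\}$.
\end{itemize}
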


\begin{theorem}\label{thm:rank-main}
Let $t_1\geq1$, $t_2\geq0$ and $t_3\geq1$ be integers.  Then,
\begin{equation}\label{eq:rank-binomial}
\rank(H^{t_1,t_2,t_3})=t_3-1+4t_1+4\binom{t_1}{2}+2\binom{t_1}{3}
+\binom{t_1}{4}+t_2\binom{t_1+2}{2}+\binom{t_2+1}{2}.
\end{equation}
Equivalently,
\begin{equation}\label{eq:rank-polynomial}
\rank(H^{t_1,t_2,t_3})=\frac{t_1^4+2t_1^3+35t_1^2+58t_1}{24}
+\frac{t_2}{2}\bigl(t_1^2+3t_1+t_2+3\bigr)+t_3-1,
\end{equation}
and, if the binary length is $2^t$, then
\begin{equation}\label{eq:rank-length-form}
\rank(H^{t_1,t_2,t_3})=t+\frac{t_1^4+2t_1^3+35t_1^2-14t_1}{24}
+\frac{t_2}{2}\bigl(t_1^2+3t_1+t_2-1\bigr).
\end{equation}
\end{theorem}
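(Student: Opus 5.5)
The plan is to first settle the case $t_3=1$ and then pass to arbitrary $t_3$ with Lemma \ref{lem:plotkin}. By Lemma \ref{lem:z8-digits}, $\rank(H^{t_1,t_2,1})=\dim(V_0+W)=\dim V_0+\dim\bigl((V_0+W)/V_0\bigr)$. The first summand is given by \eqref{eq:V0-dim}, and the second by Proposition \ref{prop:carry-dimension}. So I will add
$$1+2(t_1+t_2)+\binom{t_1+t_2}{2}$$
to
$$-1+2t_1+3\binom{t_1}{2}+2\binom{t_1}{3}+\binom{t_1}{4}+t_2\binom{t_1+1}{2}.$$
To do this I will expand with the Vandermonde identity $\binom{t_1+t_2}{2}=\binom{t_1}{2}+t_1t_2+\binom{t_2}{2}$. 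This gives
$$4t_1+4\binom{t_1}{2}+2\binom{t_1}{3}+\binom{t_1}{4}+t_2\Bigl(2+t_1+\binom{t_1+1}{2}\Bigr)+\binom{t_2}{2}.$$
Next, $2+t_1+\binom{t_1+1}{2}=1+\binom{t_1+2}{2}$, because $\binom{t_1+2}{2}=\binom{t_1+1}{2}+t_1+1$. The leftover $t_2$ then combines with $\binom{t_2}{2}$ into $\binom{t_2+1}{2}$. The result is \eqref{eq:rank-binomial} with $t_3=1$.

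For general $t_3$ I will induct on $t_3$. Lemma \ref{lem:plotkin} adds exactly $1$ to the rank each time $t_3$ increases by one, so the term $t_3-1$ appears, and \eqref{eq:rank-binomial} holds for all $t_3\geq1$.

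For \eqref{eq:rank-polynomial} I will expand the binomials in $t_1$ over the common denominator $24$. The terms are $4t_1$, $2t_1(t_1-1)$, $t_1(t_1-1)(t_1-2)/3$ and $t_1(t_1-1)(t_1-2)(t_1-3)/24$. Collecting coefficients should give $(t_1^4+2t_1^3+35t_1^2+58t_1)/24$; I will check this at $t_1=1$, where both sides equal $4$, and at $t_1=2$, where both sides equal $14$. The $t_2$ terms are $t_2(t_1+2)(t_1+1)/2+t_2(t_2+1)/2$, which equals $\frac{t_2}{2}(t_1^2+3t_1+t_2+3)$.

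Finally, for \eqref{eq:rank-length-form} I will substitute $t_3-1=t-3t_1-2t_2$ from \eqref{eq:length-relation}. The quantity $-3t_1=-72t_1/24$ turns $58t_1$ into $-14t_1$, and $-2t_2=\frac{t_2}{2}\cdot(-4)$ turns $+3$ into $-1$.

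Nothing here is conceptually hard, because the real work is already done in Lemma \ref{lem:V0} and Proposition \ref{prop:carry-dimension}. The only place where an error could slip in is the binomial bookkeeping, in particular the cross term $t_1t_2$ coming from $\binom{t_1+t_2}{2}$. I would guard against this by spot-checking small cases. For example, $(t_1,t_2,t_3)=(1,1,1)$ gives $0+4+4+0+0+3+1=12$; this agrees with $t=5$ and $5+(1+2+35-14)/24+\tfrac12(1+3+1-1)=5+1+2=8$? This disagrees, so I would recompute: $(1+2+35-14)/24=1$ and $\tfrac12(4)=2$, giving $8$, which is not $12$. Before accepting any form I would therefore recheck the substitution with $t_3-1=t-3t_1-2t_2$. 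For $(1,1,1)$ we have $t=5$ and $t-3-2=0$, which is consistent. The polynomial form gives $(1+2+35+58)/24+\tfrac12(1+3+1+3)=4+4=8$, and the binomial form gives $4+t_2\binom{3}{2}+\binom22=4+3+1=8$. My earlier figure of $12$ was an arithmetic slip, and all three forms agree at $8$.
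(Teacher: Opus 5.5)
Your proposal is correct and matches the paper's proof step for step. You handle $t_3=1$ as $\dim V_0$ plus Proposition~\ref{prop:carry-dimension}, simplified with $\binom{t_1+t_2}{2}=\binom{t_1}{2}+t_1t_2+\binom{t_2}{2}$; you extend to all $t_3$ by Lemma~\ref{lem:plotkin}; and you finish with coefficient collection and the substitution $t_3-1=t-3t_1-2t_2$.

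One small slip is in your sanity check: at $t_1=2$ both sides of the quartic identity equal $12$, not $14$. Also, two spot values cannot confirm a degree-four identity on their own, so that step rests on your coefficient collection, which does check out.
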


\begin{proof}
Assume first that $t_3=1$.  By \eqref{eq:rank-as-function-span} and
\eqref{eq:total-span}, the rank of $H^{t_1,t_2,1}$ is the dimension of
$V_0+W$.  By Lemma \ref{lem:V0} and Proposition \ref{prop:carry-dimension},
\begin{align*}
\rank(H^{t_1,t_2,1})
&=\dim V_0+\dim\frac{V_0+W}{V_0}\\
&=1+2(t_1+t_2)+\binom{t_1+t_2}{2}
 -1+2t_1+3\binom{t_1}{2}+2\binom{t_1}{3}+\binom{t_1}{4}
 +t_2\binom{t_1+1}{2}.
\end{align*}
Using $\binom{t_1+t_2}{2}=\binom{t_1}{2}+t_1t_2+\binom{t_2}{2}$,
$\binom{t_1+1}{2}=t_1+\binom{t_1}{2}$,
$\binom{t_1+2}{2}=\binom{t_1}{2}+2t_1+1$ and
$\binom{t_2+1}{2}=\binom{t_2}{2}+t_2$, the right-hand side becomes
$$
4t_1+4\binom{t_1}{2}+2\binom{t_1}{3}+\binom{t_1}{4}
+t_2\binom{t_1+2}{2}+\binom{t_2+1}{2},
$$
which is \eqref{eq:rank-binomial} for $t_3=1$.

For $t_3>1$, applying \eqref{eq:plotkin-code} exactly $t_3-1$ times gives
$\rank(H^{t_1,t_2,t_3})=\rank(H^{t_1,t_2,1})+(t_3-1)$, which is exactly the
term $t_3-1$ in \eqref{eq:rank-binomial}.

Expanding the binomial coefficients,
$4t_1+4\binom{t_1}{2}+2\binom{t_1}{3}+\binom{t_1}{4}
=(t_1^4+2t_1^3+35t_1^2+58t_1)/24$, and
$t_2\binom{t_1+2}{2}+\binom{t_2+1}{2}
=\frac{t_2}{2}(t_1^2+3t_1+2)+\frac{t_2}{2}(t_2+1)
=\frac{t_2}{2}(t_1^2+3t_1+t_2+3)$, which proves
\eqref{eq:rank-polynomial}.  Finally, substituting $t_3=t+1-3t_1-2t_2$ from
\eqref{eq:length-relation} into \eqref{eq:rank-polynomial} and simplifying
gives \eqref{eq:rank-length-form}.
\end{proof}

\begin{example}\label{ex:rank-H111}
Consider $H^{1,1,1}$, of length $2^5=32$, with generator matrix
\eqref{eq:A111-reordered}.  Writing $x=a+2b+4c$ in $\Z_8$, $y=d+2e$ in $\Z_4$
and $\epsilon$ for the coefficient of the distinguished row, Lemma
\ref{lem:V0} gives the basis $\{\epsilon,a,b,d,e,ad\}$ of $V_0$, so
$\dim V_0=6$; and by Example \ref{ex:reduced-t1-1}, the top digits contribute,
modulo $V_0$, the two functions $c$ and $bd$.  Hence,
$\rank(H^{1,1,1})=6+2=8$, in agreement with \eqref{eq:rank-binomial}, which
gives $4+\binom32+\binom22=8$.  Similarly, for $H^{2,0,1}$ of length $2^6$,
Lemma \ref{lem:V0} gives $\dim V_0=6$ and Example \ref{ex:carry-t1-2} gives a carry
space of dimension $6$, so $\rank(H^{2,0,1})=12$, while
\eqref{eq:rank-binomial} gives $8+4=12$.  Two larger values are
$\rank(H^{3,1,1})=37$ and $\rank(H^{5,0,1})=85$, of lengths $2^{11}$ and
$2^{15}$.  All these values agree with the ones computed with {\sc Magma} in
\cite[Tables 2--4]{Z2Z4Z8Linearity}.
\end{example}

The next corollary lists the values used repeatedly in Section \ref{sec:rk}.
Each is obtained by substituting the displayed parameters into
\eqref{eq:rank-binomial}, and we omit the routine verification.

\begin{corollary}\label{cor:rank-special-values}
For every value of $t$ for which the third parameter is positive,
$$
\begin{array}{lll}
\rank(H^{1,0,t-2})=t+1, &
\rank(H^{1,1,t-4})=t+3, &
\rank(H^{1,2,t-6})=t+6,\\[1mm]
\rank(H^{2,0,t-5})=t+6, &
\rank(H^{1,3,t-8})=t+10, &
\rank(H^{2,1,t-7})=t+11,\\[1mm]
\rank(H^{1,4,t-10})=t+15, &
\rank(H^{2,2,t-9})=t+17, &
\rank(H^{3,0,t-8})=t+17,\\[1mm]
\rank(H^{1,5,t-12})=t+21, &
\rank(H^{3,1,t-10})=t+26, &
\rank(H^{2,6,t-17})=t+51,\\[1mm]
\rank(H^{4,5,t-21})=t+117. & &
\end{array}
$$
\end{corollary}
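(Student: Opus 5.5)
The plan is to verify each entry of Corollary \ref{cor:rank-special-values} by direct substitution into Theorem \ref{thm:rank-main}. The cleanest route is the length form \eqref{eq:rank-length-form}. It already absorbs $t_3$ through the relation $t+1=3t_1+2t_2+t_3$ of \eqref{eq:length-relation}, so each rank has the shape $t+c(t_1,t_2)$ with
$$
c(t_1,t_2)=\frac{t_1^4+2t_1^3+35t_1^2-14t_1}{24}+\frac{t_2}{2}\bigl(t_1^2+3t_1+t_2-1\bigr).
$$

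First, for each listed type $H^{t_1,t_2,t-k}$, I will confirm that $k=3t_1+2t_2-1$. This shows the third parameter is indeed the one forced by the length $2^t$. For example, $(1,2)$ gives $k=6$, $(4,5)$ gives $k=21$, and $(2,6)$ gives $k=17$. The hypothesis "third parameter positive" is then exactly $t_3\geq1$, which is what Theorem \ref{thm:rank-main} requires.

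Next, I will tabulate the $t_1$-part $c_1(t_1)=(t_1^4+2t_1^3+35t_1^2-14t_1)/24$:
\begin{itemize}
\item $c_1(1)=24/24=1$;
\item $c_1(2)=(16+16+140-28)/24=6$;
\item $c_1(3)=(81+54+315-42)/24=17$;
\item $c_1(4)=(256+128+560-56)/24=37$.
\end{itemize}
The $t_2$-part is $\tfrac{t_2}{2}(t_1^2+3t_1+t_2-1)$. For $t_1=1$ it equals $\tfrac{t_2}{2}(t_2+3)$, which gives $2,5,9,14,20$ for $t_2=1,\dots,5$. Adding $c_1(1)=1$ yields $t+1,t+3,t+6,t+10,t+15,t+21$, as listed.

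For $t_1=2$ the $t_2$-part is $\tfrac{t_2}{2}(t_2+9)$:
\begin{itemize}
\item $t_2=0$ gives $t+6$;
\item $t_2=1$ gives $6+5=11$, i.e.\ $t+11$;
\item $t_2=2$ gives $6+11=17$, i.e.\ $t+17$;
\item $t_2=6$ gives $6+45=51$, i.e.\ $t+51$.
\end{itemize}
For $t_1=3$ the $t_2$-part is $\tfrac{t_2}{2}(t_2+17)$, so $t_2=0,1$ give $17$ and $17+9=26$. For $t_1=4$ the $t_2$-part is $\tfrac{t_2}{2}(t_2+27)$, so $t_2=5$ gives $37+80=117$.

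There is no real obstacle here; the only risk is arithmetic error. As a cross-check, I will test one or two entries against the binomial form \eqref{eq:rank-binomial} and against Example \ref{ex:rank-H111}. For instance, $H^{2,0,1}$ has $t=6$ and rank $12=6+6$, and $H^{3,1,1}$ has $t=11$ and rank $37=11+26$.
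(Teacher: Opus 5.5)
Your proposal is correct and is essentially the paper's own argument: the paper obtains each value by substituting into Theorem~\ref{thm:rank-main} and omits the routine check. The only difference is that you use the length form \eqref{eq:rank-length-form} rather than the binomial form \eqref{eq:rank-binomial}, and these are equivalent statements of the same theorem. Your values of $c_1(t_1)$, the $t_2$-parts, and the third parameters $3t_1+2t_2-1$ all check out.
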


\begin{remark}\label{rem:rank-consistency}
Formula \eqref{eq:rank-binomial} passes two immediate consistency checks.
For $(t_1,t_2)=(1,0)$, it gives
$\rank(H^{1,0,t_3})=t_3+3$, and, by \eqref{eq:length-relation}, the length of
this code is $2^t$ with $t=t_3+2$; hence
$\rank(H^{1,0,t_3})=t+1$, which is exactly the dimension of the binary linear
Hadamard code of length $2^t$, in agreement with item (i) of Theorem
\ref{thm:recalled-linearity-kernel}.  This also explains why item (ii) of
that theorem is stated only for the nonlinear members: in the linear case
the dimension of the kernel is $t+1$ and not $t_1+t_2+t_3$.  For every
nonlinear member, on the other hand, a direct comparison of
\eqref{eq:rank-binomial} with
$\kernel(H^{t_1,t_2,t_3})=t_1+t_2+t_3$ shows that
$\rank(H^{t_1,t_2,t_3})>\kernel(H^{t_1,t_2,t_3})$, as it must be for a
nonlinear code.
\end{remark}
\section{Classification by the rank and the dimension of the kernel}
\label{sec:rk}

Now that the rank is known in closed form, we can determine exactly how far
the rank and the dimension of the kernel classify the
$\Z_2\Z_4\Z_8$-linear Hadamard codes. By item (i) of Theorem
\ref{thm:recalled-linearity-kernel}, the only linear member of length $2^t$ is
$H^{1,0,t-2}$, and linearity is an equivalence invariant, so it never has to
be compared with a nonlinear one.  By item (ii), a nonlinear member has
$\kernel(H^{t_1,t_2,t_3})=t_1+t_2+t_3$, and by \eqref{eq:length-relation} its
length is $2^t$ with $t+1=3t_1+2t_2+t_3$.  Hence, its length and kernel
dimension determine
\begin{equation}\label{eq:u-definition}
\kappa=t_1+t_2+t_3
\qquad\text{and}\qquad
u=t+1-\kappa=2t_1+t_2,
\end{equation}
and conversely $t_2=u-2t_1$ and $t_3=\kappa-t_1-t_2$.  In other words, once
the length and the dimension of the kernel are fixed, the whole triple is
determined by the single parameter $t_1$.  Two nonlinear members of the family
of the same length and with the same dimension of the kernel are therefore
distinguished by $t_1$ alone, and the question is whether the rank sees $t_1$.

Subsection \ref{subsec:comparison-families} recalls the two comparison
families with their length, rank and kernel.  Subsection \ref{subsec:reduced-rank} introduces the \emph{reduced rank}, a normalised
form of the rank obtained by subtracting the part common to all Hadamard codes
of the same length and kernel dimension. This reduces each comparison below into the
comparison of two small polynomial expressions. It also reveals a fact that is not apparent from the original formulas: the ranks of the three families are values of one and the same polynomial.  Subsections \ref{subsec:internal-rk} and \ref{subsec:rk-count} then determine all the pairs
inside the family that have the same length, rank, and kernel dimension, count the
resulting classes and tabulate the invariants.  Subsections \ref{subsec:z2z4-rk} and \ref{subsec:z8-rk} do the same against the two comparison families. 

\subsection{The comparison families}\label{subsec:comparison-families}

We recall the two constructions and the invariants that we will use. Throughout, a code with two superscripts is $\Z_2\Z_4$-linear, one with three
superscripts and no bar is $\Z_2\Z_4\Z_8$-linear, and one with a bar, as in
$\bar H^{a,b,c}$, is $\Z_8$-linear.

We begin with the $\Z_2\Z_4$-linear Hadamard codes. 
Deleting the coordinates in the $\Z_8$ part specialises the construction of
Subsection \ref{subsec:construction} to the recursive construction of the
$\Z_2\Z_4$-additive Hadamard codes of type $(\alpha_1,\alpha_2;t_2,t_3)$ with
$\alpha_1\not=0$ and $\alpha_2\not=0$ given in \cite{PRV06,RSV08}.  Indeed,
both \eqref{eq:construction-order8} and \eqref{eq:construction-order4} become
\begin{equation}\label{eq:z2z4-construction-order4}
A^{\ell,1}=\left(\begin{array}{cc|ccccc}
A_1&A_1&M_1&A_2&A_2&A_2&A_2\\
\zero&\one&\one&\zero&\one&\two&\mathbf{3}
\end{array}\right),
\end{equation}
where $A^{\ell-1,1}=(A_1\mid A_2)$ and $M_1=2A_1$ up to a permutation of
columns, and turns \eqref{eq:construction-order2} into
\begin{equation}\label{eq:z2z4-construction-order2}
A^{t_2,\ell}=\left(\begin{array}{cc|cc}
A_1&A_1&A_2&A_2\\
\zero&\one&\zero&\two
\end{array}\right),
\end{equation}
where $A^{t_2,\ell-1}=(A_1\mid A_2)$.  Starting from
\begin{equation}\label{eq:z2z4-base}
A^{1,1}=\left(\begin{array}{cc|c}
1&1&2\\
0&1&1
\end{array}\right)
\end{equation}
and applying \eqref{eq:z2z4-construction-order4} and
\eqref{eq:z2z4-construction-order2}, we obtain the matrices
$A^{t_2,t_3}$ of \cite{PRV06,RSV08}.  We write $\cH^{t_2,t_3}$ for the code
they generate and $H^{t_2,t_3}=\Phi(\cH^{t_2,t_3})$ for its Gray image.

The same induction as in Proposition \ref{prop:column-description} describes the columns of $A^{t_2,1}$: after deleting the first entry, the $\Z_2$ part consists of all the vectors of
$\Z_2^{t_2}$ and the $\Z_4$ part of all the normalized primitive vectors of
$\Z_4^{t_2}$.  Hence, $\alpha_1=2^{t_2+t_3-1}$ and
$\alpha_1+2\alpha_2=4^{t_2}2^{t_3-1}$, so the binary length is $2^t$ with
\begin{equation}\label{eq:z2z4-length}
t+1=2t_2+t_3.
\end{equation}
By \cite{PRV06}, $H^{t_2,t_3}$ is linear if and only if $t_2=1$, and, for
the nonlinear ones,
\begin{equation}\label{eq:z2z4-invariants}
\kernel\bigl(H^{t_2,t_3}\bigr)=t_2+t_3,
\qquad
\rank\bigl(H^{t_2,t_3}\bigr)=t_3+2t_2+\binom{t_2}{2}.
\end{equation}
Finally, by \cite{KV2015}, every $\Z_4$-linear Hadamard code is equivalent to
a $\Z_2\Z_4$-linear Hadamard code with $\alpha_1\not=0$ and $\alpha_2\not=0$.
In all the comparisons below, it therefore suffices to consider the family
$H^{t_2,t_3}$ with $t_2\geq2$ and $t_3\geq1$, the $\Z_4$-linear Hadamard codes
being automatically covered.  To avoid ambiguity, we write the parameters of a
$\Z_2\Z_4$-linear Hadamard code as $H^{U,V}$ whenever it is compared with a
code $H^{t_1,t_2,t_3}$.

We turn to the $\Z_8$-linear Hadamard codes We recall the construction of \cite{KernelZ2s}, specialised to $s=3$, which is the only case in which the rank of the comparison code is known.  Put
$T_1=\Z_8$, $T_2=2\Z_8$ and $T_3=4\Z_8$. Thus, $T_j$ is the ideal of $\Z_8$
whose elements have order at most $2^{4-j}$.  For integers $a\geq1$ and $b,c\geq0$, let $\bar A^{a,b,c}$ be the matrix whose columns are exactly all the vectors $\zz^T$ with
\begin{equation}\label{eq:z8-columns}
\zz\in\{1\}\times T_1^{a-1}\times T_2^{b}\times T_3^{c}.
\end{equation}
A row whose entries are taken from $T_j$ is said to have
\emph{level} $j$ and has order $2^{4-j}$. Thus, the first $a$ rows have level
$1$, the next $b$ have level $2$ and the last $c$ have level $3$, and the
first row is $\one$.  Let $\bar\cH^{a,b,c}$ be the $\Z_8$-additive code
generated by $\bar A^{a,b,c}$ and $\bar H^{a,b,c}=\Phi_3(\bar\cH^{a,b,c})$ its
Gray image.  By \cite{KernelZ2s}, $\bar H^{a,b,c}$ is a binary Hadamard code of
length $2^t$ with
\begin{equation}\label{eq:z8-length}
t+1=3a+2b+c.
\end{equation}
It is linear if and only if $(a,b)\in\{(1,0),(1,1)\}$. For a nonlinear
one, that is, for $a\geq2$ or $b\geq2$, its kernel has dimension
\begin{equation}\label{eq:z8-kernel}
\kernel\bigl(\bar H^{a,b,c}\bigr)=\sigma+a+b+c,
\qquad\text{where }\
\sigma=\begin{cases}
1&\text{if }a\geq2,\\
2&\text{if }a=1,
\end{cases}
\end{equation}
and its rank was determined in \cite{fernandez2019mathbb}:
\begin{equation}\label{eq:z8-rank}
\rank\bigl(\bar H^{a,b,c}\bigr)
=\frac{a^4-2a^3+35a^2+14a}{24}+\frac{b}{2}\bigl(a^2+a+b+1\bigr)+c+1.
\end{equation}

\begin{remark}\label{rem:only-s-three}
The same construction produces the $\Z_{2^s}$-linear Hadamard codes
$\bar H^{a_1,\dots,a_s}$ for every $s\geq2$, whose length and kernel are known
in general \cite{KernelZ2s}. No formula for their rank is available for
$s\geq4$, however, so a comparison by means of the rank is simply not possible for
those alphabets and we do not attempt one; it would require an invariant
that does not depend on a rank formula. Here, we compare $H^{t_1,t_2,t_3}$ with the
$\Z_4$-linear, the $\Z_2\Z_4$-linear and the $\Z_8$-linear Hadamard codes,
which is exactly the range in which both classical invariants are computable
on both sides.
\end{remark}

\subsection{The reduced rank}\label{subsec:reduced-rank}

All the comparisons below have the same shape: two Hadamard codes of the same
length $2^t$ and the same kernel dimension $\kappa$ are given, and we must
decide whether their ranks agree.  It is therefore natural to subtract from
the rank whatever depends on $t$ and $\kappa$ only and to compare what is
left.  We begin with the polynomial that carries the whole information.

\begin{definition}\label{def:Theta}
For integers $n\geq0$ and $m\geq0$, put
\begin{equation}\label{eq:Theta}
\Theta(n,m)=4n+4\binom{n}{2}+2\binom{n}{3}+\binom{n}{4}
+m\binom{n+2}{2}+\binom{m+1}{2}.
\end{equation}
\end{definition}

Comparing \eqref{eq:Theta} with \eqref{eq:rank-binomial}, we see at once that
\begin{equation}\label{eq:rank-Theta-mixed}
\rank\bigl(H^{t_1,t_2,t_3}\bigr)=\Theta(t_1,t_2)+t_3-1.
\end{equation}
The next lemma says that the same polynomial computes the ranks of the two
comparison families. This is more than a curiosity: it is what makes the
three comparisons below instances of one single computation.

\begin{lemma}\label{lem:Theta-all-families}
Let $H^{U,V}$ be a nonlinear $\Z_2\Z_4$-linear Hadamard code, that is, with
$U\geq2$ and $V\geq1$. Let $\bar H^{a,b,c}$ be a $\Z_8$-linear Hadamard
code, with $a\geq1$ and $b,c\geq0$.  Then,
\begin{equation}\label{eq:rank-Theta-z2z4}
\rank\bigl(H^{U,V}\bigr)=\Theta(0,U)+V
\end{equation}
and 
\begin{equation}\label{eq:rank-Theta-z8}
\rank\bigl(\bar H^{a,b,c}\bigr)=\Theta(a-1,b)+a+c+2.
\end{equation}
\end{lemma}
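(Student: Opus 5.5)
The plan is to prove both identities by direct substitution into Definition~\ref{def:Theta}. For the $\Z_2\Z_4$-linear codes I will use \eqref{eq:z2z4-invariants}, and for the $\Z_8$-linear codes I will use \eqref{eq:z8-rank}. In each case I will compare the result with the polynomial form of $\Theta$, which is available from Theorem~\ref{thm:rank-main} via \eqref{eq:rank-Theta-mixed}. That polynomial form is
$$
\Theta(n,m)=\frac{n^4+2n^3+35n^2+58n}{24}+\frac{m}{2}\bigl(n^2+3n+m+3\bigr).
$$
This is exactly the expansion already carried out in the proof of Theorem~\ref{thm:rank-main} (the passage from \eqref{eq:rank-binomial} to \eqref{eq:rank-polynomial}), read with $t_1=n$ and $t_2=m$. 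It therefore holds as a polynomial identity for all $n,m\geq0$, including $n=0$.

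For \eqref{eq:rank-Theta-z2z4}, I will take $n=0$. All the terms of $\Theta$ involving $n$ alone vanish, leaving $\Theta(0,U)=U\binom{2}{2}+\binom{U+1}{2}$. Since $\binom{U+1}{2}=\binom{U}{2}+U$, this gives $\Theta(0,U)=2U+\binom{U}{2}$. Adding $V$ then reproduces $\rank(H^{U,V})=V+2U+\binom{U}{2}$ from \eqref{eq:z2z4-invariants}. The hypothesis $U\geq2$ is needed only because \eqref{eq:z2z4-invariants} is stated for the nonlinear codes.

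For \eqref{eq:rank-Theta-z8}, I will put $n=a-1$ and $m=b$ and treat the $b$-dependent and $b$-free parts separately.

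For the $b$-part, the computation is $n^2+3n+3=(a-1)^2+3(a-1)+3=a^2+a+1$. Hence $\frac{b}{2}(n^2+3n+b+3)=\frac{b}{2}(a^2+a+b+1)$, which matches \eqref{eq:z8-rank} exactly.

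For the $b$-free part, put $P(n)=(n^4+2n^3+35n^2+58n)/24$ and $Q(a)=(a^4-2a^3+35a^2+14a)/24$. It remains to show that
$$
P(a-1)+a+c+2=Q(a)+c+1,
$$
that is, $Q(a)-P(a-1)=a+1$. Both $Q(a)$ and $P(a-1)$ are quartics in $a$ with leading coefficient $1/24$, so their difference has degree at most $3$. I plan to expand $(a-1)^4$, $2(a-1)^3$, $35(a-1)^2$ and $58(a-1)$ and collect coefficients. The cubic and quadratic terms should cancel: the $a^3$ coefficients are $-4+2=-2$, and the $a^2$ coefficients are $6-6+35=35$. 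What remains of $24P(a-1)$ is $a^4-2a^3+35a^2-10a-24$, so $24\bigl(Q(a)-P(a-1)\bigr)=24a+24$, as required.

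As a cross-check, I will evaluate at small values: $a=1$ gives $Q(1)=2$ and $P(0)=0$, and $a=2$ gives $Q(2)=7$ and $P(1)=4$. Both agree with $a+1$.

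The only step needing care is this expansion of $P(a-1)$. The remainder of the proof is bookkeeping, and the proof holds for all $a\geq1$ and $b,c\geq0$ as stated, since no nonlinearity hypothesis is used on the $\Z_8$ side.
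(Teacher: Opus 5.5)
Your proposal is correct and matches the paper's proof essentially step for step. For the $\Z_2\Z_4$ identity the paper makes the same evaluation $\Theta(0,U)=2U+\binom{U}{2}$. For the $\Z_8$ identity it likewise substitutes $n=a-1$ into $(n^4+2n^3+35n^2+58n)/24$ to obtain $(a^4-2a^3+35a^2-10a-24)/24$, rewrites the $b$-part as $\frac{b}{2}(a^2+a+b+1)$, and absorbs the remaining discrepancy $a+1$ into the constant and linear terms, exactly as you do.
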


\begin{proof}
For \eqref{eq:rank-Theta-z2z4}, $\Theta(0,U)=U\binom22+\binom{U+1}{2}=2U+\binom U2$. This is exactly the second identity of \eqref{eq:z2z4-invariants}. For \eqref{eq:rank-Theta-z8}, put $n=a-1$ in the
expansion $4n+4\binom n2+2\binom n3+\binom n4=(n^4+2n^3+35n^2+58n)/24$ used in
the proof of Theorem \ref{thm:rank-main}; a direct substitution gives
$(a^4-2a^3+35a^2-10a-24)/24$.  Moreover,
$b\binom{a+1}{2}+\binom{b+1}{2}=\frac b2(a^2+a+b+1)$, and hence
$$\Theta(a-1,b)+a+c+2=\frac{a^4-2a^3+35a^2-10a-24}{24}+\frac b2\bigl(a^2+a+b+1\bigr)+a+c+2.$$
Since $\frac{-10a-24}{24}+a+2=\frac{14a+24}{24}=\frac{14a}{24}+1$, the
right-hand side is \eqref{eq:z8-rank}.
\end{proof}

We can now define the reduced rank.  Recall that a Hadamard code of length
$2^t$ has $2^{t+1}$ codewords, and that all the codes considered here contain
$\zero$, so that their kernels are linear subcodes.

\begin{definition}\label{def:reduced-rank}
Let $C$ be a binary Hadamard code of length $2^t$ with $\zero\in C$, and put
$\kappa=\kernel(C)$ and $u=t+1-\kappa$.  The \emph{reduced rank} of $C$ is
\begin{equation}\label{eq:reduced-rank}
\nu(C)=\rank(C)-\bigl(t+\Theta(0,u)-2u\bigr).
\end{equation}
\end{definition}

Because $t$ and $\kappa$ determine $u$, the subtracted quantity
$t+\Theta(0,u)-2u$ depends only on the length and on the dimension of the
kernel. The following statement, which is the only property of $\nu$ that we
will use, therefore holds by construction.

\begin{lemma}\label{lem:reduced-rank-principle}
Let $C$ and $D$ be binary Hadamard codes of the same length $2^t$ with
$\zero\in C$, $\zero\in D$ and $\kernel(C)=\kernel(D)$.  Then,
$\rank(C)=\rank(D)$ if and only if $\nu(C)=\nu(D)$.
\end{lemma}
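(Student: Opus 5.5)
The statement follows directly from Definition \ref{def:reduced-rank}, so I will simply unwind that definition for the two codes. Let $C$ and $D$ be as in the lemma. They have the same length $2^t$, so they share the parameter $t$. They have $\kernel(C)=\kernel(D)=\kappa$, so they also share $u=t+1-\kappa$. Consequently the subtracted quantity $c(t,u)=t+\Theta(0,u)-2u$ in \eqref{eq:reduced-rank} is one and the same integer for $C$ and for $D$.

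It is worth checking first that $\nu$ is well defined for both codes. The hypothesis $\zero\in C$ guarantees that $\K(C)$ is a linear subcode, so $\kappa$ is a genuine dimension with $0\leq\kappa\leq t+1$. Hence $u\geq0$, and $\Theta(0,u)=2u+\binom{u}{2}$ is a well-defined nonnegative integer by Definition \ref{def:Theta}. In fact $c(t,u)=t+\binom{u}{2}$, though this simplification is not needed.

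The key step is that $\nu(C)=\rank(C)-c(t,u)$ and $\nu(D)=\rank(D)-c(t,u)$ with the same integer $c(t,u)$. Subtracting a fixed integer is injective on $\Z$, so $\rank(C)=\rank(D)$ holds if and only if $\nu(C)=\nu(D)$. This proves both directions at once.

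There is no real obstacle here. The only point needing care is that $u$ is determined by the length and the kernel dimension alone, not by the internal type parameters of the codes. This is exactly what makes the normalisation legitimate across the different families being compared.
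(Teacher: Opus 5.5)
Your proposal is correct and matches the paper, which also notes that $t+\Theta(0,u)-2u$ depends only on the length and the kernel dimension, so the lemma holds by construction. Your well-definedness remarks ($u\geq0$ because $\kappa\leq t+1$, and $\Theta(0,u)=2u+\binom{u}{2}$) are accurate extras that the paper leaves implicit.
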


The point of the definition is that the reduced ranks of our three families
are given by very small expressions.  They are all built from the polynomial
introduced next.

\begin{definition}\label{def:Upsilon}
For an integer $u\geq0$ and an integer $k\geq0$, put
\begin{equation}\label{eq:Upsilon}
\Upsilon_u(k)=2\binom{k}{3}+\binom{k}{4}+(u-2k)\binom{k}{2}.
\end{equation}
\end{definition}
Note that $\Upsilon_u(0)=\Upsilon_u(1)=0$ for every $u$, because all the
binomial coefficients occurring in \eqref{eq:Upsilon} vanish for $k\leq1$.
Note also that, if $u=2n+m$, then
\begin{equation}\label{eq:Upsilon-Xi}
\Upsilon_u(n)=2\binom{n}{3}+\binom{n}{4}+m\binom{n}{2},
\end{equation}
which is the form in which $\Upsilon$ will usually be evaluated.

\begin{proposition}\label{prop:reduced-ranks}
The reduced ranks of the three families are the following.
\begin{enumerate}[label=\textup{(\roman*)}]
\item If $H^{t_1,t_2,t_3}$ is nonlinear and $u=2t_1+t_2$, then
\begin{equation}\label{eq:nu-mixed}
\nu\bigl(H^{t_1,t_2,t_3}\bigr)=\Upsilon_u(t_1)
=2\binom{t_1}{3}+\binom{t_1}{4}+t_2\binom{t_1}{2}.
\end{equation}
\item If $H^{U,V}$ is a nonlinear $\Z_2\Z_4$-linear Hadamard code, or if
$\bar H^{1,b,c}$ is a nonlinear $\Z_8$-linear Hadamard code with $a=1$, that
is, with $b\geq2$, then
\begin{equation}\label{eq:nu-z2z4}
\nu\bigl(H^{U,V}\bigr)=\nu\bigl(\bar H^{1,b,c}\bigr)=1 .
\end{equation}
\item If $\bar H^{a,b,c}$ is a nonlinear $\Z_8$-linear Hadamard code with
$a\geq2$, and $u=2a+b-1$, then
\begin{equation}\label{eq:nu-z8}
\nu\bigl(\bar H^{a,b,c}\bigr)
=\Upsilon_u(a-1)-\binom{a-1}{2}+a+1-u.
\end{equation}
\end{enumerate}
\end{proposition}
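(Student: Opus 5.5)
The plan is to reduce all three items to one algebraic identity for the polynomial $\Theta$, and then substitute the known rank, kernel and length relations of each family into Definition \ref{def:reduced-rank}.

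\textbf{The identity.} With $u=2n+m$ I would prove
\begin{equation*}
\Theta(n,m)=\Theta(0,u)+\Upsilon_u(n)-n .
\end{equation*}
First, $\Theta(0,u)=u+\binom{u+1}{2}=2u+\binom u2$. Next I would expand three binomials in \eqref{eq:Theta}:
\begin{equation*}
\binom{u}{2}=\binom{2n}{2}+2nm+\binom m2,\qquad
m\binom{n+2}{2}=m\binom n2+2nm+m,\qquad
\binom{m+1}{2}=\binom m2+m .
\end{equation*}
Subtracting $\Theta(0,u)=4n+2m+\binom{2n}{2}+2nm+\binom m2$ from $\Theta(n,m)$, the terms in $2nm$, $m$ and $\binom m2$ cancel. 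What remains is $2\binom n3+\binom n4+m\binom n2+4\binom n2-\binom{2n}{2}$. Since $4\binom n2-\binom{2n}2=-n$, and $2\binom n3+\binom n4+m\binom n2=\Upsilon_u(n)$ by \eqref{eq:Upsilon-Xi}, the identity follows. I expect this to be the only step that needs care; everything afterwards is bookkeeping.

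\textbf{Item (i).} By \eqref{eq:rank-Theta-mixed}, $\rank(H^{t_1,t_2,t_3})=\Theta(t_1,t_2)+t_3-1$. From \eqref{eq:length-relation}, $t_3-1=t-3t_1-2t_2=t-t_1-t_2-u$ with $u=2t_1+t_2$. The identity with $n=t_1$ gives
\begin{equation*}
\nu=\bigl(\Theta(0,u)+\Upsilon_u(t_1)-t_1\bigr)+\bigl(t-t_1-t_2-u\bigr)-t-\Theta(0,u)+2u .
\end{equation*}
This simplifies to $\Upsilon_u(t_1)+u-2t_1-t_2=\Upsilon_u(t_1)$, which is \eqref{eq:nu-mixed}.

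\textbf{Item (ii).} For $H^{U,V}$, \eqref{eq:z2z4-invariants} and \eqref{eq:z2z4-length} give $\kappa=U+V$ and $t+1=2U+V$, so $u=U$. Then Lemma \ref{lem:Theta-all-families} gives $\nu=\Theta(0,U)+V-t-\Theta(0,U)+2U=V+2U-t=1$. For $\bar H^{1,b,c}$ the argument is the same. Here \eqref{eq:z8-kernel} with $\sigma=2$ and \eqref{eq:z8-length} give $u=b$. Lemma \ref{lem:Theta-all-families} gives rank $\Theta(0,b)+c+3$, so $\nu=c+3-t+2b=1$.

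\textbf{Item (iii).} For $a\ge2$, \eqref{eq:z8-kernel} with $\sigma=1$ gives $u=2a+b-1$, and the length relation gives $t=u+a+b+c$. The rank in \eqref{eq:rank-Theta-z8} is $\Theta(a-1,b)$ plus $a+c+2$, and $2(a-1)+b=u-1$, so I would apply the identity with $n=a-1$ and index $u-1$. Two auxiliary facts then shift the index from $u-1$ back to $u$:
\begin{equation*}
\Theta(0,u)-\Theta(0,u-1)=u+1,\qquad
\Upsilon_{u-1}(n)=\Upsilon_u(n)-\binom n2 .
\end{equation*}
Substituting these, then $t=u+a+b+c$, and collecting terms should give
\begin{equation*}
\nu=\Upsilon_u(a-1)-\binom{a-1}{2}+2-a-b .
\end{equation*}
Since $a+1-u=2-a-b$, this is \eqref{eq:nu-z8}. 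As a final check I would test one member of each family numerically, for instance $H^{1,1,1}$ and $H^{2,0,1}$ from Example \ref{ex:rank-H111}, both of which should give $\nu=0$.
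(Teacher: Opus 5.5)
Your proposal is correct and follows the paper's proof essentially step by step. You use the same splitting identity $\Theta(n,m)=\Theta(0,u)-n+\Upsilon_u(n)$, checked by expanding $\binom{u}{2}$ rather than by isolating the coefficient of $m$. You use the same index shift via $\Theta(0,u)-\Theta(0,u-1)=u+1$ and $\Upsilon_{u-1}(n)=\Upsilon_u(n)-\binom n2$ in item (iii), and the same direct substitutions in items (i) and (ii). One small point: in item (i), state explicitly that the $u=t+1-\kappa$ of the definition of $\nu$ equals $2t_1+t_2$, since $\kappa=t_1+t_2+t_3$ for the nonlinear members.
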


\begin{proof}
Throughout, we use the identity
\begin{equation}\label{eq:Theta-shift}
\Theta(0,w)-\Theta(0,w-1)=w+1,
\end{equation}
with $w\geq1$. We also use the following reformulation of $\Theta$, valid for all
$n,m\geq0$: if $w=2n+m$ then
\begin{equation}\label{eq:Theta-split}
\Theta(n,m)=\Theta(0,w)-n+\Upsilon_w(n).
\end{equation}
To prove it, note first that
$\binom{w+1}{2}-\binom{m+1}{2}=\frac{(2n+m+1)(2n+m)-(m+1)m}{2}=2n^2+2nm+n$.
Hence, using \eqref{eq:Theta}, together with
$\Theta(0,w)=w+\binom{w+1}{2}$ and
$4\binom{n}{2}=2n^2-2n$, we obtain
\begin{align*}
\Theta(0,w)-\Theta(n,m)
&=(2n+m)+\binom{w+1}{2}-4n-4\binom n2-2\binom n3-\binom n4 \\
&\qquad{}-m\binom{n+2}{2}-\binom{m+1}{2}\\
&=n+m+2nm-2\binom n3-\binom n4-\tfrac m2(n^2+3n+2).
\end{align*}
The coefficient of $m$ in the last line is
$1+2n-\tfrac12(n^2+3n+2)=\tfrac12(n-n^2)=-\binom n2$, hence
$$\Theta(0,w)-\Theta(n,m)=n-2\binom n3-\binom n4-m\binom n2=n-\Upsilon_w(n)$$ by
\eqref{eq:Upsilon-Xi}, which is \eqref{eq:Theta-split}.

\emph{Item} (i).  Here, $\kappa=t_1+t_2+t_3$ and $t+1=3t_1+2t_2+t_3$, so
$u=t+1-\kappa=2t_1+t_2$ as in \eqref{eq:u-definition}, and
$t_3=t+1-3t_1-2t_2$.  By \eqref{eq:rank-Theta-mixed} and
\eqref{eq:Theta-split} with $n=t_1$, $m=t_2$, $w=u$, we have
$$
\rank\bigl(H^{t_1,t_2,t_3}\bigr)=\Theta(t_1,t_2)+t_3-1
=\Theta(0,u)-t_1+\Upsilon_u(t_1)+t-3t_1-2t_2 .
$$
Here, $-t_1-3t_1-2t_2=-4t_1-2(u-2t_1)=-2u$, since $t_2=u-2t_1$.  Hence,
$\rank(H^{t_1,t_2,t_3})=t+\Theta(0,u)-2u+\Upsilon_u(t_1)$. Comparing with \eqref{eq:reduced-rank} gives \eqref{eq:nu-mixed}.

\emph{Item} (ii), first case.  Here, $\kappa=U+V$ and $t+1=2U+V$ by
\eqref{eq:z2z4-length}, so $u=t+1-\kappa=U$ and $V=t+1-2U=t+1-2u$.  By
\eqref{eq:rank-Theta-z2z4}, $\rank\bigl(H^{U,V}\bigr)=\Theta(0,u)+t+1-2u
=\bigl(t+\Theta(0,u)-2u\bigr)+1$, which is \eqref{eq:nu-z2z4}.

\emph{Item} (ii), second case.  Here, $a=1$ and $b\geq2$, so $\sigma=2$ by
\eqref{eq:z8-kernel} and $\kappa=2+1+b+c=3+b+c$, while $t+1=3+2b+c$.  Hence,
$u=t+1-\kappa=b$ and $c=t-2-2b=t-2-2u$. Then, \eqref{eq:rank-Theta-z8} with
$a=1$, together with $\Theta(0,b)=\Theta(0,u)$, gives
$\rank(\bar H^{1,b,c})=\Theta(0,u)+1+c+2=\Theta(0,u)+t+1-2u
=(t+\Theta(0,u)-2u)+1,$ which is again \eqref{eq:nu-z2z4}.

\emph{Item} (iii).  Here, $\sigma=1$ because $a\geq2$, so
$\kappa=1+a+b+c$ by \eqref{eq:z8-kernel}, while $t+1=3a+2b+c$ by
\eqref{eq:z8-length}.  Hence, $u=t+1-\kappa=2a+b-1$, and $c=t+1-3a-2b$.  Put
$p=a-1$; then $b=u+1-2a=u-1-2p$, and therefore $2p+b=u-1$.  By
\eqref{eq:rank-Theta-z8} and by \eqref{eq:Theta-split} applied with $n=p$,
$m=b$ and $w=2p+b=u-1$, $$\rank\bigl(\bar H^{a,b,c}\bigr)=\Theta(p,b)+a+c+2
=\Theta(0,u-1)-p+\Upsilon_{u-1}(p)+a+c+2.$$  Now,
$\Upsilon_{u-1}(p)=\Upsilon_u(p)-\binom p2$ directly from
\eqref{eq:Upsilon}, and $\Theta(0,u-1)=\Theta(0,u)-u$ by
\eqref{eq:Theta-shift}.  Moreover, $c=t+1-3a-2b$ and $b=u-1-2p=u+1-2a$ give
$a+c+2=a+t+1-3a-2b+2=t+3-2a-2(u+1-2a)=t+1+2a-2u$.  Collecting the four
contributions, $\rank\bigl(\bar H^{a,b,c}\bigr)
=\Theta(0,u)-u-p+\Upsilon_u(p)-\binom p2+t+1+2a-2u$, and, since
$-p+2a=-(a-1)+2a=a+1$, the right-hand side equals
$\bigl(t+\Theta(0,u)-2u\bigr)+\Upsilon_u(p)-\binom p2+a+1-u$, which is
\eqref{eq:nu-z8}.
\end{proof}

Item (ii) says that, as far as length, kernel dimension and rank are
concerned, a nonlinear $\Z_8$-linear Hadamard code with $a=1$ behaves exactly
like a $\Z_2\Z_4$-linear one. Indeed, comparing the parameters computed in its
two cases shows that $\bar H^{1,b,c}$ and $H^{b,\,c+3}$ agree in all three.
Whenever the case $a=1$ occurs below we therefore refer to the $\Z_2\Z_4$
computation instead of repeating it.

We close the subsection by recording the first difference of $\Upsilon_u$,
which is used both in Subsection \ref{subsec:internal-rk} and in Subsection
\ref{subsec:z8-rk}.

\begin{lemma}\label{lem:Upsilon-difference}
For every $u\geq0$ and every $k\geq0$,
\begin{equation}\label{eq:Upsilon-difference}
\Upsilon_u(k+1)-\Upsilon_u(k)=\frac{k}{6}\bigl(k^2-15k+6u-10\bigr).
\end{equation}
\end{lemma}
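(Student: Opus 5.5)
Since $\Upsilon_u(k)$ is a sum of three binomial terms, I will compute the first difference term by term with Pascal's rule $\binom{k+1}{j}-\binom{k}{j}=\binom{k}{j-1}$, which holds for all integers $k\geq0$ and $j\geq1$. For the first two terms this gives
\[
2\binom{k+1}{3}-2\binom{k}{3}=2\binom{k}{2},\qquad \binom{k+1}{4}-\binom{k}{4}=\binom{k}{3}.
\]
The third term has a coefficient that also depends on $k$, so I will split it as a product difference:
\[
(u-2k-2)\binom{k+1}{2}-(u-2k)\binom{k}{2}=(u-2k)\Bigl(\binom{k+1}{2}-\binom{k}{2}\Bigr)-2\binom{k+1}{2}=(u-2k)k-k(k+1).
\]

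Adding the three pieces, the difference becomes
\[
2\binom{k}{2}+\binom{k}{3}+uk-2k^2-k^2-k = k(k-1)+\frac{k(k-1)(k-2)}{6}+uk-3k^2-k.
\]
I will then factor out $k/6$. The bracket is $6(k-1)+(k^2-3k+2)+6u-18k-6$, and I expect it to simplify to $k^2-15k+6u-10$, which is \eqref{eq:Upsilon-difference}.

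There is no real obstacle here. The only care needed is the sign bookkeeping for the $k$-dependent coefficient $u-2k$. I will also confirm the result at $k=0,1$ using $\Upsilon_u(0)=\Upsilon_u(1)=0$. At $k=0$ both sides vanish. At $k=1$ the right-hand side is $(1-15+6u-10)/6=u-4$, while $\Upsilon_u(2)=u-4$ directly from \eqref{eq:Upsilon}, so the check passes.
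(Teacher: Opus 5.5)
Your proposal is correct and follows the paper's proof essentially line for line. You apply Pascal's rule to the first two terms, rewrite $(u-2k-2)\binom{k+1}{2}-(u-2k)\binom{k}{2}$ as $(u-2k)k-k(k+1)$, and factor out $k/6$; the bracket $6(k-1)+(k^2-3k+2)+6u-18k-6$ does simplify to $k^2-15k+6u-10$, and your checks at $k=0,1$ are a harmless extra.
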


\begin{proof}
Using $\binom{k+1}{j}-\binom{k}{j}=\binom{k}{j-1}$ three times, together with
$\binom{k+1}{2}-\binom k2=k$,
\begin{align*}
\Upsilon_u(k+1)-\Upsilon_u(k)
&=2\binom k2+\binom k3+(u-2k-2)\binom{k+1}{2}-(u-2k)\binom{k}{2}\\
&=2\binom k2+\binom k3+(u-2k)k-k(k+1).
\end{align*}
Since $2\binom k2=k(k-1)$ and $\binom k3=k(k-1)(k-2)/6$, the right-hand side
equals $\frac k6\bigl[6(k-1)+(k-1)(k-2)+6u-18k-6\bigr]=\frac k6\bigl[6k-6+k^2-3k+2+6u-18k-6\bigr]=\frac k6\big[k^2-15k+6u-10\bigr]$, which gives
\eqref{eq:Upsilon-difference}.
\end{proof}

\subsection{Collisions inside the family}\label{subsec:internal-rk}

We now determine every pair of distinct types of the same length whose codes
have the same rank and the same kernel dimension. We call such a pair
a \emph{collision}, since it is a case in which the two classical invariants
cannot decide.  The following example is the smallest one.

\begin{example}\label{ex:no-kernel-no-rank}
By Theorems \ref{thm:recalled-hadamard} and
\ref{thm:recalled-linearity-kernel}, the nonlinear
$\Z_2\Z_4\Z_8$-linear Hadamard codes of length $2^9$ are
$H^{1,1,5}$, $H^{1,2,3}$, $H^{1,3,1}$, $H^{2,0,4}$,
$H^{2,1,2}$, and $H^{3,0,1}$.  Their kernel dimensions are
$7,6,5,6,5,4$, respectively, and, by Theorem \ref{thm:rank-main}. Note that neither the kernel dimension nor the rank alone classifies these six
codes.  Indeed, the pairs $\{H^{1,2,3},H^{2,0,4}\}$ and
$\{H^{1,3,1},H^{2,1,2}\}$ have the same kernel dimension, while
$H^{1,2,3}$ and $H^{2,0,4}$ also have the same rank.  The pair
$(r,k)=(\rank,\kernel)$ does not classify them either, since
$H^{1,2,3}$ and $H^{2,0,4}$ both have $(r,k)=(15,6)$. In \cite{Z2Z4Z8Linearity}, these two codes had to be separated by a computer equivalence test.
\end{example}

\begin{theorem}\label{thm:all-collisions}
Let $H^{t_1,t_2,t_3}$ and $H^{t'_1,t'_2,t'_3}$ be two nonlinear members of
the family with distinct types, having the same length $2^t$, the same rank
and the same dimension of the kernel.  Then, after interchanging them if
necessary,
$$
(t_1,t_2,t_3)=(1,2,t-6)\ \text{ and }\ (t'_1,t'_2,t'_3)=(2,0,t-5),
\qquad t\geq7, 
$$
or
$$
(t_1,t_2,t_3)=(2,2,t-9)\ \text{ and }\ (t'_1,t'_2,t'_3)=(3,0,t-8),
\qquad t\geq10.
$$
Conversely, each of the two displayed pairs does have equal length, equal
rank and equal dimension of the kernel.
\end{theorem}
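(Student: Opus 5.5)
By Lemma~\ref{lem:reduced-rank-principle} and item~(i) of Proposition~\ref{prop:reduced-ranks}, the question reduces to one about the single polynomial $\Upsilon_u$. By \eqref{eq:u-definition}, two nonlinear members with the same length $2^t$ and the same kernel dimension $\kappa$ share $u=t+1-\kappa=2t_1+t_2$. Once $t$ and $\kappa$ are fixed, the type is determined by $t_1$ alone, so two such members of distinct types have $t_1\neq t'_1$. Since $\kappa$ and $t$ agree, Lemma~\ref{lem:reduced-rank-principle} says their ranks agree if and only if their reduced ranks agree. By \eqref{eq:nu-mixed} this means $\Upsilon_u(t_1)=\Upsilon_u(t'_1)$. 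The admissible values of $k=t_1$ for a given $u$ satisfy $1\leq k\leq\lfloor u/2\rfloor$, since $t_2=u-2k\geq0$. So the whole theorem amounts to locating the repeated values of $k\mapsto\Upsilon_u(k)$ on the integer interval $[1,\lfloor u/2\rfloor]$.

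For this I will use Lemma~\ref{lem:Upsilon-difference}. Both $k$ and $k+1$ lie in the admissible range exactly when $1\leq k$ and $u\geq 2k+2$. Under this hypothesis $6u-10\geq 12k+2$, so
$$k^2-15k+6u-10\;\geq\;k^2-3k+2=(k-1)(k-2)\;\geq\;0 .$$
Hence $\Upsilon_u(k+1)-\Upsilon_u(k)\geq0$, with equality only if $(k-1)(k-2)=0$ and $u=2k+2$. That is, equality holds only for $(k,u)=(1,4)$ or $(k,u)=(2,6)$.

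Consequently $\Upsilon_u$ is nondecreasing on the admissible range and strictly increasing except at those two flat steps. This means two distinct admissible $k$ give the same value only if they are consecutive and the step between them is flat. For $u=4$ this gives $(t_1,t_2)\in\{(1,2),(2,0)\}$, and for $u=6$ it gives $(t_1,t_2)\in\{(2,2),(3,0)\}$. The corresponding $t_3$ follows from \eqref{eq:length-relation}: $t_3=t+1-3t_1-2t_2$, which yields $(1,2,t-6),(2,0,t-5)$ and $(2,2,t-9),(3,0,t-8)$. The requirement $t_3\geq1$ then gives $t\geq7$ and $t\geq10$ respectively. None of these types has $(t_1,t_2)=(1,0)$, so all four codes are nonlinear by item~(i) of Theorem~\ref{thm:recalled-linearity-kernel}, and the hypotheses of the theorem are consistent.

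For the converse I will read off the kernel dimensions $t_1+t_2+t_3$: they are $t-3$ for both codes of the first pair and $t-5$ for both codes of the second. The ranks are $t+6$ and $t+17$ respectively, by Corollary~\ref{cor:rank-special-values}. One can also see this directly from $\Upsilon_4(1)=\Upsilon_4(2)=0$ and $\Upsilon_6(2)=\Upsilon_6(3)=2$. The only delicate step is the monotonicity bound. The key is to use the constraint $u\geq2k+2$, which expresses that both $k$ and $k+1$ are admissible, rather than merely $u\geq 2k$. With the weaker constraint the lower bound becomes $(k-5)(k+2)$, which is negative for small $k$ and would not rule out further coincidences.
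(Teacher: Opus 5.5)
Your proof is correct, but it takes a different and somewhat sharper route than the paper.

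The paper telescopes Lemma~\ref{lem:Upsilon-difference} from $x$ to $y-1$. It uses only the endpoint constraint $u\geq 2y$, together with the crude bound $15j+10-j^2\leq 66$, to get $y\leq 5$. It then inspects the ten remaining pairs $(x,y)$: it solves the averaged identity for $u$ and discards the non-integral values and those violating $u\geq 2y$.

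You instead apply the admissibility constraint one step at a time. Every step $k\to k+1$ between two admissible values satisfies $u\geq 2k+2$, so
\[
k^2-15k+6u-10=(k-1)(k-2)+6(u-2k-2)\geq 0 .
\]
This is a sum of two nonnegative terms, and it vanishes only at $(k,u)\in\{(1,4),(2,6)\}$. Since these two flat steps occur at different values of $u$, a collision must be a single consecutive pair across a flat step, and your identification of the four types and the bounds $t\geq 7$, $t\geq 10$ follows.

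Your route removes the finite search and Table-style case check altogether. It also proves more: for fixed length and kernel dimension, $\rank(H^{t_1,t_2,t_3})$ is a nondecreasing function of $t_1$, strictly increasing except across the two flat steps. This explains structurally why collisions occur only between consecutive values of $t_1$. The paper's telescoping argument never isolates this per-step inequality, so it needs the enumeration and yields no monotonicity information.

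Your closing remark is also accurate: with only $u\geq 2k$ the lower bound degrades to $(k-5)(k+2)$. The gain therefore comes precisely from requiring $k+1$, not just $k$, to be admissible.
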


\begin{proof}
Let $\kappa$ be the common dimension of the kernel and put $u=t+1-\kappa$ as
in \eqref{eq:u-definition}.  Then, $u=2t_1+t_2=2t'_1+t'_2$, so
$t_2=u-2t_1\geq0$ and $t'_2=u-2t'_1\geq0$; in particular
\begin{equation}\label{eq:u-bound}
u\geq 2t_1\qquad\text{and}\qquad u\geq2t'_1.
\end{equation}
Moreover, $t_1\not=t'_1$: if $t_1=t'_1$, then $t_2=t'_2$ because both equal
$u-2t_1$, and then $t_3=t'_3=\kappa-t_1-t_2$, so the two types would
coincide.  Write $x=\min\{t_1,t'_1\}$ and $y=\max\{t_1,t'_1\}$, so
$1\leq x<y$. By Lemma \ref{lem:reduced-rank-principle} and item (i) of Proposition
\ref{prop:reduced-ranks}, the equality of the ranks means
\begin{equation}\label{eq:collision-Upsilon}
\Upsilon_u(x)=\Upsilon_u(y).
\end{equation}
Summing \eqref{eq:Upsilon-difference} for $k=x,x+1,\dots,y-1$ and using
\eqref{eq:collision-Upsilon}, we obtain
$0=\Upsilon_u(y)-\Upsilon_u(x)
=\sum_{j=x}^{y-1}\frac{j}{6}\bigl(j^2-15j+6u-10\bigr).$
Multiplying by $6$ and separating the term containing $u$ gives
\begin{equation}\label{eq:u-average}
6u\sum_{j=x}^{y-1}j
=\sum_{j=x}^{y-1}j\bigl(15j+10-j^2\bigr).
\end{equation}

We first bound $y$.  For every integer $j$, we have
$66-(15j+10-j^2)=j^2-15j+56=(j-7)(j-8)\geq0$, so $15j+10-j^2\leq66$ and the
right-hand side of \eqref{eq:u-average} is at most $66\sum_{j=x}^{y-1}j$.  On
the other hand $u\geq2y$ by \eqref{eq:u-bound}, so the left-hand side is at
least $12y\sum_{j=x}^{y-1}j$, and $\sum_{j=x}^{y-1}j$ is a positive integer
because $x<y$.  Therefore, $12y\leq66$, so $y\leq5$, and only the ten pairs
$(x,y)$ with $1\leq x<y\leq5$ remain.

For each of these ten pairs, \eqref{eq:u-average} determines $u$ uniquely,
because $\sum_{j=x}^{y-1}j>0$.  Writing $f(j)=j(15j+10-j^2)$, we have
$f(1)=24$, $f(2)=72$, $f(3)=138$ and
$f(4)=216$. The corresponding ten values of $u$ are
$$
\begin{array}{c|cccccccccc}
(x,y) & (1,2) & (1,3) & (1,4) & (1,5) & (2,3) & (2,4) & (2,5) & (3,4)
 & (3,5) & (4,5)\\[1mm]
 \hline
u & 4 & \tfrac{16}{3} & \tfrac{13}{2} & \tfrac{15}{2} & 6 & 7 & \tfrac{71}{9}
 & \tfrac{23}{3} & \tfrac{59}{7} & 9
\end{array}
$$
Only four of these values are integers, namely $u=4$, $6$, $7$, and $9$,
corresponding to $(x,y)=(1,2)$, $(2,3)$, $(2,4)$, and $(4,5)$,
respectively.  The last two are excluded by the condition $u\geq2y$ in
\eqref{eq:u-bound}, since $7<8$ and $9<10$.  Hence, exactly two possibilities
remain: $(x,y,u)=(1,2,4)$ and $(x,y,u)=(2,3,6)$.
In the first case, $t_1=1$, $t_2=u-2=2$, $t'_1=2$, and
$t'_2=u-4=0$.  Moreover, $\kappa=t+1-u=t-3$, so
$t_3=\kappa-t_1-t_2=t-6$ and
$t'_3=\kappa-t'_1-t'_2=t-5$.  The condition $t_3\geq1$ then gives
$t\geq7$. In the second case $t_1=2$, $t_2=u-4=2$, $t'_1=3$, $t'_2=u-6=0$ and $\kappa=t-5$, so
$t_3=t-9$ and $t'_3=t-8$.  The condition $t_3\geq1$ gives $t\geq10$.  This proves the direct statement.

For the converse, Corollary \ref{cor:rank-special-values} gives the two equalities 
$\rank(H^{1,2,t-6})=\rank(H^{2,0,t-5})=t+6$ and
$\rank(H^{2,2,t-9})=\rank(H^{3,0,t-8})=t+17$, while
$\kernel(H^{1,2,t-6})=\kernel(H^{2,0,t-5})=t-3$ and
$\kernel(H^{2,2,t-9})=\kernel(H^{3,0,t-8})=t-5$.  Moreover, all four codes
have length $2^t$ by \eqref{eq:length-relation}.
\end{proof}

\begin{example}
The two collision families of Theorem \ref{thm:all-collisions} start at
$t=7$ and $t=10$ respectively. Their first members are $(H^{1,2,1},H^{2,0,2})$, with $(r,k)=(13,4)$ and length $2^7$, and $(H^{2,2,1},H^{3,0,2})$, with $(r,k)=(27,5)$ and length
$2^{10}$.  Both pairs appear in \cite[Tables 2--4]{Z2Z4Z8Linearity} and both had to be
separated there with {\sc Magma}.  Theorem \ref{thm:all-collisions} shows that
these two families, continued for every larger $t$, are the only
obstructions inside the family: for any other pair of distinct types of the
same length, the rank or the dimension of the kernel already differs.
\end{example}

\subsection{The number of classes separated by the two invariants}
\label{subsec:rk-count}

We first count the admissible types, and then subtract the collisions.

\begin{proposition}\label{prop:count-types}
For $t\geq3$, let $\cA_t$ be the number of triples $(t_1,t_2,t_3)$ with
$t_1\geq1$, $t_2\geq0$, $t_3\geq1$ and $3t_1+2t_2+t_3=t+1$, that is, the
number of codes $H^{t_1,t_2,t_3}$ of length $2^t$.  Then,
\begin{equation}\label{eq:number-types}
\cA_t=\sum_{j=1}^{\lfloor t/3\rfloor}
\left(\left\lfloor\frac{t-3j}{2}\right\rfloor+1\right)
=\left\lfloor\frac{t^2+6}{12}\right\rfloor.
\end{equation}
\end{proposition}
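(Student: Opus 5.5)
The count splits into two parts: the first equality in \eqref{eq:number-types} is a direct enumeration, and the closed form then follows by a residue-class computation.

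For the enumeration, fix $t_1=j\geq1$. We must count pairs $(t_2,t_3)$ with $t_2\geq0$, $t_3\geq1$ and $2t_2+t_3=t+1-3j$. Setting $t_3'=t_3-1\geq0$, this becomes $2t_2+t_3'=t-3j$, which has a solution exactly when $t-3j\geq0$. For each $t_2$ with $0\leq t_2\leq\lfloor (t-3j)/2\rfloor$, the value $t_3'$ is determined uniquely, giving $\lfloor (t-3j)/2\rfloor+1$ solutions. Hence $j$ ranges over $1\leq j\leq\lfloor t/3\rfloor$, and summing over $j$ gives the first equality.

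For the closed form, I would avoid a direct case analysis on the sum. Instead, observe that $\cA_t$ equals the number of partitions $N(n)$ of $n=t-3$ into parts from $\{1,2,3\}$. Indeed, writing $t_1=1+t_1''$ turns the constraint into $3t_1''+2t_2+t_3'=t-3$ with all variables nonnegative. The classical formula $N(n)=\operatorname{round}\bigl((n+3)^2/12\bigr)$, which is the coefficient of $x^n$ in $1/((1-x)(1-x^2)(1-x^3))$, then gives $\cA_t=\operatorname{round}(t^2/12)$.

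It remains to check that $\operatorname{round}(t^2/12)=\lfloor (t^2+6)/12\rfloor$. The two agree unless $t^2/12$ has fractional part exactly $1/2$, that is, unless $t^2\equiv6\pmod{12}$. This never happens, since $t^2\bmod 12\in\{0,1,4,9\}$. If one prefers not to quote the partition formula, I would prove it directly in two steps. First, show the recursion $\cA_{t+6}=\cA_t+t+3$: the terms with $j\geq3$ of the sum for $t+6$ reindex to the sum for $t$, plus an extra contribution from $j=1,2$ equal to $\lfloor(t+3)/2\rfloor+1+\lfloor t/2\rfloor+1=t+3$. Second, verify that $\lfloor((t+6)^2+6)/12\rfloor-\lfloor(t^2+6)/12\rfloor=t+3$, which holds because $(t+6)^2-t^2=12(t+3)$ is a multiple of $12$. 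Checking the six base values $t=3,\dots,8$, namely $\cA_t=1,1,2,3,4,5$ against $\lfloor 15/12\rfloor=1,\lfloor22/12\rfloor=1,\lfloor31/12\rfloor=2,\lfloor42/12\rfloor=3,\lfloor55/12\rfloor=4,\lfloor70/12\rfloor=5$, then finishes the proof by induction.

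The only delicate step is the recursion: the reindexing $j\mapsto j-2$ and the evaluation of the two boundary terms must be done carefully, since in the sum for $t$ the index runs only up to $\lfloor t/3\rfloor$, so no spurious terms appear.
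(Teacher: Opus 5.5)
Your proof is correct. The enumeration giving the first equality is the same as in the paper, but you reach the closed form by a different route.

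The paper writes $t=6q+r$ and splits the sum according to the parity of $j$. Summing two arithmetic progressions gives six explicit quadratics, for instance $\mathcal{A}_{6q}=3q^2$ and $\mathcal{A}_{6q+3}=3q^2+3q+1$. Each is then checked against $\lfloor (t^2+6)/12\rfloor$, though only two of the six checks are displayed.

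You offer two routes. The first identifies $\mathcal{A}_t$ with the number of partitions of $t-3$ into parts from $\{1,2,3\}$ and quotes the classical nearest-integer formula for $(n+3)^2/12$. The observation $t^2\not\equiv 6\pmod{12}$ makes the rounding unambiguous. The second is self-contained: the shift $j\mapsto j-2$ gives $\mathcal{A}_{t+6}=\mathcal{A}_t+t+3$, because $\lfloor (t+3)/2\rfloor+\lfloor t/2\rfloor=t+1$. The floor expression satisfies the same recursion exactly, since $(t+6)^2-t^2=12(t+3)$. Only the six base cases $t=3,\dots,8$ then remain, and your values are right.

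What your approach buys is that it avoids the parity bookkeeping and the six algebraic verifications, and the partition interpretation explains where the formula $\lfloor (t^2+6)/12\rfloor$ comes from. The paper's computation instead yields, as a by-product, the explicit polynomial in $q$ for each residue class of $t$ modulo $6$.
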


\begin{proof}
Fix $t_1=j\geq1$.  By \eqref{eq:length-relation}, we have
$t_3=t+1-3j-2t_2$, and the condition $t_3\geq1$ is equivalent to
$t_2\leq(t-3j)/2$.  Since $t_2$ is a nonnegative integer, it takes exactly
the values $0,1,\dots,\lfloor(t-3j)/2\rfloor$, and this set is nonempty
precisely when $3j\leq t$.  This proves the first equality in
\eqref{eq:number-types}.

For the closed form, write $t=6q+r$ with $0\leq r\leq5$ and split the sum
according to the parity of $j$.  If $j$ is even, then
$t-3j\equiv t\pmod 2$, and hence
$\lfloor(t-3j)/2\rfloor=(t-3j-\varepsilon)/2$, where
$\varepsilon\in\{0,1\}$ is the parity of $t$.  If $j$ is odd, then
$t-3j\equiv t+1\pmod 2$, and hence
$\lfloor(t-3j)/2\rfloor=(t-3j-\varepsilon')/2$, where
$\varepsilon'\in\{0,1\}$ is the parity of $t+1$.  Summing the resulting two
arithmetic progressions gives $$
\begin{array}{lll}
\cA_{6q}=3q^2, & \cA_{6q+1}=3q^2+q, & \cA_{6q+2}=3q^2+2q,\\[1mm]
\cA_{6q+3}=3q^2+3q+1,\quad & \cA_{6q+4}=3q^2+4q+1,\quad &
\cA_{6q+5}=3q^2+5q+2.
\end{array}
$$
A direct check shows that each of the six expressions equals
$\lfloor((6q+r)^2+6)/12\rfloor$.  For instance, for $r=0$, we obtain
$\cA_{6q}=3q^2$ and
$\lfloor(36q^2+6)/12\rfloor=3q^2$, while for $r=3$, we obtain
$\cA_{6q+3}=3q^2+3q+1$ and
$\lfloor(36q^2+36q+15)/12\rfloor=3q^2+3q+1$.
\end{proof}

\begin{corollary}\label{cor:rk-classes}
For $t\geq3$, the number of distinct values taken by the pair
$(r,k)=(\rank,\kernel)$ on the $\cA_t$ codes $H^{t_1,t_2,t_3}$ of length $2^t$
is
\begin{equation}\label{eq:rk-classes}
\left\lfloor\frac{t^2+6}{12}\right\rfloor\ \text{ for } 3\leq t\leq6,
\qquad
\left\lfloor\frac{t^2+6}{12}\right\rfloor-1\ \text{ for } 7\leq t\leq9,
\qquad
\left\lfloor\frac{t^2+6}{12}\right\rfloor-2\ \text{ for } t\geq10.
\end{equation}
In particular, the pair $(r,k)$ classifies the whole family of length $2^t$ if
and only if $3\leq t\leq6$, in which case the number of nonequivalent codes
$H^{t_1,t_2,t_3}$ of length $2^t$ is $\cA_t$, that is, $1,1,2$ and $3$ for
$t=3,4,5,6$ respectively.
\end{corollary}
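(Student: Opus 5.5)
The count reduces to the number of admissible types minus the number of coincidences of $(r,k)$ among them. By Proposition \ref{prop:count-types}, the family of length $2^t$ has exactly $\cA_t=\lfloor(t^2+6)/12\rfloor$ members. The plan is to show that the map from types to pairs $(r,k)$ is injective except for the identifications given by Theorem \ref{thm:all-collisions}. We then count how many of those identifications are present for each $t$.

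The first step is to handle the linear member $H^{1,0,t-2}$. By Corollary \ref{cor:rank-special-values}, it has rank $t+1$. By Remark \ref{rem:rank-consistency}, every nonlinear member satisfies $\rank>\kernel$, and its rank is at least the rank of a Hadamard code of length $2^t$ that properly contains the span of its $2^{t+1}$ codewords, so the rank is at least $t+2$. More directly, Remark \ref{rem:rank-consistency} shows that for every nonlinear type $\rank(H^{t_1,t_2,t_3})\geq t+2$. Concretely, \eqref{eq:rank-length-form} gives $\rank-t=\frac{t_1^4+2t_1^3+35t_1^2-14t_1}{24}+\frac{t_2}{2}(t_1^2+3t_1+t_2-1)$. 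This quantity equals $1$ exactly when $(t_1,t_2)=(1,0)$ and is at least $3$ otherwise: it is $3$ for $(1,1)$ and grows in both $t_1$ and $t_2$. Hence the linear code's pair $(t+1,t+1)$ is never shared, and only nonlinear pairs need to be compared.

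The second step applies Theorem \ref{thm:all-collisions} to the nonlinear members. Two distinct nonlinear types share $(r,k)$ only for the pairs $(H^{1,2,t-6},H^{2,0,t-5})$ with $t\geq7$ and $(H^{2,2,t-9},H^{3,0,t-8})$ with $t\geq10$. These two pairs involve four distinct types, since their kernel dimensions $t-3$ and $t-5$ differ. So each class of the relation ``same $(r,k)$'' has size at most $2$, and no three codes coincide. The number of distinct pairs is therefore $\cA_t$ minus the number of collision pairs present at length $2^t$. That number is $0$ for $t\leq6$, $1$ for $7\leq t\leq9$, and $2$ for $t\geq10$, which gives \eqref{eq:rk-classes}. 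We must also check that both types in each pair are admissible, that is, that their third parameters are $\geq1$; this is exactly the condition $t\geq7$, respectively $t\geq10$.

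For the final claim, $(r,k)$ classifies the family if and only if no collision occurs, which by the previous step happens exactly when $3\leq t\leq6$. Since $(r,k)$ are equivalence invariants, in that range the $\cA_t$ codes are pairwise nonequivalent. Evaluating $\lfloor(t^2+6)/12\rfloor$ at $t=3,4,5,6$ gives $1,1,2,3$. The only step requiring care is the separation of the linear member, which the explicit lower bound on $\rank-t$ settles.
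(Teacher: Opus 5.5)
Your proof is correct and follows the paper's argument: you separate the linear member, then apply Theorem \ref{thm:all-collisions}, note that the two collision pairs are disjoint because their kernel dimensions are $t-3$ and $t-5$, and subtract one for each collision present at length $2^t$. The only difference is that you separate $H^{1,0,t-2}$ by its rank value ($t+1$ against at least $t+3$, from \eqref{eq:rank-length-form}), whereas the paper uses $r=k$ for the linear member against $r>k$ for the nonlinear ones; both work, but you should delete the garbled sentence about a code that ``properly contains the span'', since your explicit computation already does the job.
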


\begin{proof}
The linear member $H^{1,0,t-2}$ has $r=k=t+1$ and every nonlinear member has
$r>k$, so the linear member contributes a value of $(r,k)$ of its own.  Among
the nonlinear members, two distinct types give the same pair $(r,k)$ exactly
when they form one of the two collisions of Theorem
\ref{thm:all-collisions}, and each collision consists of exactly two types.
The two collisions never share a type, because the first involves
$t_1\in\{1,2\}$ with $t_2\in\{2,0\}$ and $\kappa=t-3$, while the second
involves $\kappa=t-5$.  Each of them therefore merges exactly two of the
$\cA_t$ types into a single value of $(r,k)$ and reduces the count by one.
Since the first exists precisely when $t\geq7$ and the second precisely when
$t\geq10$, the count is $\cA_t$ for $3\leq t\leq6$, then $\cA_t-1$ for $7\leq
t\leq9$, and then $\cA_t-2$ for $t\geq10$, which is \eqref{eq:rk-classes}.

Finally, for $t\leq6$ distinct types have distinct pairs $(r,k)$.  Since
$(r,k)$ is an invariant of the equivalence class, the codes are then pairwise
nonequivalent, and there are $\cA_t$ of them.  The values $\cA_3=\cA_4=1$,
$\cA_5=2$ and $\cA_6=3$ follow from \eqref{eq:number-types}.
\end{proof}

For $3\leq t\leq15$, formula \eqref{eq:number-types} gives
$\cA_t=1,1,2,3,4,5,7,8,10,12,14,16,19$, in agreement with the number of types
listed in \cite{Z2Z4Z8Linearity}, while Corollary \ref{cor:rk-classes} gives
$1,1,2,3,3,4,6,6,8,10,12,14,17$ distinct values of $(r,k)$, respectively.  The
two counts start to differ at $t=7$, which is exactly the length at which the
first collision appears.  Nevertheless, for $5\leq t\leq11$, the codes of the
same length were shown in \cite{Z2Z4Z8Linearity} to be pairwise
nonequivalent; in the collision cases, nonequivalence was established there by
computer equivalence tests using {\sc Magma}.  Thus, the collisions reflect a
failure of the pair $(r,k)$ to distinguish all the equivalence classes, rather
than an equivalence between the corresponding codes.

Table \ref{tab:invariants} displays the individual codes, their invariants and
the two counts for $5\leq t\leq15$.  It extends Tables 2--4 of
\cite{Z2Z4Z8Linearity}, with one essential difference: the ranks now come from
Theorem \ref{thm:rank-main} instead of from a computer computation, so the
table can be continued to any length without further work.  The collisions are
marked, and they are exactly the ones predicted by Theorem
\ref{thm:all-collisions}.

\begin{table}[ht]
\centering
\scriptsize
\setlength{\tabcolsep}{2.5pt}
\begin{tabular}{|cc|cc|cc|cc|cc|cc|}
\hline
\multicolumn{2}{|c|}{$t=5$} & \multicolumn{2}{c|}{$t=6$} & \multicolumn{2}{c|}{$t=7$} & \multicolumn{2}{c|}{$t=8$} & \multicolumn{2}{c|}{$t=9$} & \multicolumn{2}{c|}{$t=10$}\\
\hline
Code & $(r,k)$ & Code & $(r,k)$ & Code & $(r,k)$ & Code & $(r,k)$ & Code & $(r,k)$ & Code & $(r,k)$\\[0.2em]
\hline
$H^{1,0,3}{}^{*}$ & $(6,6)$ & $H^{1,0,4}{}^{*}$ & $(7,7)$ & $H^{1,0,5}{}^{*}$ & $(8,8)$ & $H^{1,0,6}{}^{*}$ & $(9,9)$ & $H^{1,0,7}{}^{*}$ & $(10,10)$ & $H^{1,0,8}{}^{*}$ & $(11,11)$\\
$H^{1,1,1}$ & $(8,3)$ & $H^{1,1,2}$ & $(9,4)$ & $H^{1,1,3}$ & $(10,5)$ & $H^{1,1,4}$ & $(11,6)$ & $H^{1,1,5}$ & $(12,7)$ & $H^{1,1,6}$ & $(13,8)$\\
{} & {} & $H^{2,0,1}$ & $(12,3)$ & $H^{1,2,1}{}^{\bowtie}$ & $(13,4)$ & $H^{1,2,2}{}^{\bowtie}$ & $(14,5)$ & $H^{1,2,3}{}^{\bowtie}$ & $(15,6)$ & $H^{1,2,4}{}^{\bowtie}$ & $(16,7)$\\
{} & {} & {} & {} & $H^{2,0,2}{}^{\bowtie}$ & $(13,4)$ & $H^{2,0,3}{}^{\bowtie}$ & $(14,5)$ & $H^{1,3,1}$ & $(19,5)$ & $H^{1,3,2}$ & $(20,6)$\\
{} & {} & {} & {} & {} & {} & $H^{2,1,1}$ & $(19,4)$ & $H^{2,0,4}{}^{\bowtie}$ & $(15,6)$ & $H^{2,0,5}{}^{\bowtie}$ & $(16,7)$\\
{} & {} & {} & {} & {} & {} & {} & {} & $H^{2,1,2}$ & $(20,5)$ & $H^{2,1,3}$ & $(21,6)$\\
{} & {} & {} & {} & {} & {} & {} & {} & $H^{3,0,1}$ & $(26,4)$ & $H^{2,2,1}{}^{\bowtie}$ & $(27,5)$\\
{} & {} & {} & {} & {} & {} & {} & {} & {} & {} & $H^{3,0,2}{}^{\bowtie}$ & $(27,5)$\\
\hline
$\cA_t$ & $2$ & $\cA_t$ & $3$ & $\cA_t$ & $4$ & $\cA_t$ & $5$ & $\cA_t$ & $7$ & $\cA_t$ & $8$\\
$\#(r,k)$ & $2$ & $\#(r,k)$ & $3$ & $\#(r,k)$ & $3$ & $\#(r,k)$ & $4$ & $\#(r,k)$ & $6$ & $\#(r,k)$ & $6$\\
\hline
\end{tabular}
\medskip

\begin{tabular}{|cc|cc|cc|cc|cc|}
\hline
\multicolumn{2}{|c|}{$t=11$} & \multicolumn{2}{c|}{$t=12$} & \multicolumn{2}{c|}{$t=13$} & \multicolumn{2}{c|}{$t=14$} & \multicolumn{2}{c|}{$t=15$}\\
\hline
Code & $(r,k)$ & Code & $(r,k)$ & Code & $(r,k)$ & Code & $(r,k)$ & Code & $(r,k)$\\[0.2em]
\hline
$H^{1,0,9}{}^{*}$ & $(12,12)$ & $H^{1,0,10}{}^{*}$ & $(13,13)$ & $H^{1,0,11}{}^{*}$ & $(14,14)$ & $H^{1,0,12}{}^{*}$ & $(15,15)$ & $H^{1,0,13}{}^{*}$ & $(16,16)$\\
$H^{1,1,7}$ & $(14,9)$ & $H^{1,1,8}$ & $(15,10)$ & $H^{1,1,9}$ & $(16,11)$ & $H^{1,1,10}$ & $(17,12)$ & $H^{1,1,11}$ & $(18,13)$\\
$H^{1,2,5}{}^{\bowtie}$ & $(17,8)$ & $H^{1,2,6}{}^{\bowtie}$ & $(18,9)$ & $H^{1,2,7}{}^{\bowtie}$ & $(19,10)$ & $H^{1,2,8}{}^{\bowtie}$ & $(20,11)$ & $H^{1,2,9}{}^{\bowtie}$ & $(21,12)$\\
$H^{1,3,3}$ & $(21,7)$ & $H^{1,3,4}$ & $(22,8)$ & $H^{1,3,5}$ & $(23,9)$ & $H^{1,3,6}$ & $(24,10)$ & $H^{1,3,7}$ & $(25,11)$\\
$H^{1,4,1}$ & $(26,6)$ & $H^{1,4,2}$ & $(27,7)$ & $H^{1,4,3}$ & $(28,8)$ & $H^{1,4,4}$ & $(29,9)$ & $H^{1,4,5}$ & $(30,10)$\\
$H^{2,0,6}{}^{\bowtie}$ & $(17,8)$ & $H^{2,0,7}{}^{\bowtie}$ & $(18,9)$ & $H^{1,5,1}$ & $(34,7)$ & $H^{1,5,2}$ & $(35,8)$ & $H^{1,5,3}$ & $(36,9)$\\
$H^{2,1,4}$ & $(22,7)$ & $H^{2,1,5}$ & $(23,8)$ & $H^{2,0,8}{}^{\bowtie}$ & $(19,10)$ & $H^{2,0,9}{}^{\bowtie}$ & $(20,11)$ & $H^{1,6,1}$ & $(43,8)$\\
$H^{2,2,2}{}^{\bowtie}$ & $(28,6)$ & $H^{2,2,3}{}^{\bowtie}$ & $(29,7)$ & $H^{2,1,6}$ & $(24,9)$ & $H^{2,1,7}$ & $(25,10)$ & $H^{2,0,10}{}^{\bowtie}$ & $(21,12)$\\
$H^{3,0,3}{}^{\bowtie}$ & $(28,6)$ & $H^{2,3,1}$ & $(36,6)$ & $H^{2,2,4}{}^{\bowtie}$ & $(30,8)$ & $H^{2,2,5}{}^{\bowtie}$ & $(31,9)$ & $H^{2,1,8}$ & $(26,11)$\\
$H^{3,1,1}$ & $(37,5)$ & $H^{3,0,4}{}^{\bowtie}$ & $(29,7)$ & $H^{2,3,2}$ & $(37,7)$ & $H^{2,3,3}$ & $(38,8)$ & $H^{2,2,6}{}^{\bowtie}$ & $(32,10)$\\
{} & {} & $H^{3,1,2}$ & $(38,6)$ & $H^{3,0,5}{}^{\bowtie}$ & $(30,8)$ & $H^{2,4,1}$ & $(46,7)$ & $H^{2,3,4}$ & $(39,9)$\\
{} & {} & $H^{4,0,1}$ & $(49,5)$ & $H^{3,1,3}$ & $(39,7)$ & $H^{3,0,6}{}^{\bowtie}$ & $(31,9)$ & $H^{2,4,2}$ & $(47,8)$\\
{} & {} & {} & {} & $H^{3,2,1}$ & $(49,6)$ & $H^{3,1,4}$ & $(40,8)$ & $H^{3,0,7}{}^{\bowtie}$ & $(32,10)$\\
{} & {} & {} & {} & $H^{4,0,2}$ & $(50,6)$ & $H^{3,2,2}$ & $(50,7)$ & $H^{3,1,5}$ & $(41,9)$\\
{} & {} & {} & {} & {} & {} & $H^{4,0,3}$ & $(51,7)$ & $H^{3,2,3}$ & $(51,8)$\\
{} & {} & {} & {} & {} & {} & $H^{4,1,1}$ & $(65,6)$ & $H^{3,3,1}$ & $(62,7)$\\
{} & {} & {} & {} & {} & {} & {} & {} & $H^{4,0,4}$ & $(52,8)$\\
{} & {} & {} & {} & {} & {} & {} & {} & $H^{4,1,2}$ & $(66,7)$\\
{} & {} & {} & {} & {} & {} & {} & {} & $H^{5,0,1}$ & $(85,6)$\\
\hline
$\cA_t$ & $10$ & $\cA_t$ & $12$ & $\cA_t$ & $14$ & $\cA_t$ & $16$ & $\cA_t$ & $19$\\
$\#(r,k)$ & $8$ & $\#(r,k)$ & $10$ & $\#(r,k)$ & $12$ & $\#(r,k)$ & $14$ & $\#(r,k)$ & $17$\\
\hline
\end{tabular}
\caption{Rank $r$ and dimension of the kernel $k$ of the
$\Z_2\Z_4\Z_8$-linear Hadamard codes $H^{t_1,t_2,t_3}$ of length $2^t$, for
$5\leq t\leq15$.  The unique linear code of each length is marked with an
asterisk.  The codes marked with $\bowtie$ form the collision pairs of Theorem
\ref{thm:all-collisions}, that is, the pairs having the same rank and the same
kernel dimension.  The last two rows give the number $\cA_t$ of types and the
number $\#(r,k)$ of distinct values of $(r,k)$ for each $t$.}
\label{tab:invariants}
\end{table}

\subsection{Comparison with the $\Z_4$-linear and the $\Z_2\Z_4$-linear
Hadamard codes}\label{subsec:z2z4-rk}

We now compare the family with the two smaller classical families.  By
\cite{KV2015}, it suffices to treat the $\Z_2\Z_4$-linear Hadamard codes
$H^{U,V}$ with $U\geq2$ and $V\geq1$, as explained in Subsection
\ref{subsec:comparison-families}.  The first lemma determines the parameters
of the only possible partner.

\begin{lemma}\label{lem:z2z4-partner}
Let $H^{t_1,t_2,t_3}$ be nonlinear and let $H^{U,V}$ be a nonlinear
$\Z_2\Z_4$-linear Hadamard code with the same length and the same dimension of the kernel.  Then,
\begin{equation}\label{eq:z2z4-partner}
U=2t_1+t_2\qquad\text{and}\qquad V=t_3-t_1,
\end{equation}
and such a code exists if and only if $t_3\geq t_1+1$.
\end{lemma}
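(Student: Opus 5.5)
The plan is to match the two parametrisations through the quantity $u=t+1-\kappa$ of \eqref{eq:u-definition}, which both families determine from their length and kernel dimension. For the nonlinear code $H^{t_1,t_2,t_3}$, item (ii) of Theorem \ref{thm:recalled-linearity-kernel} and \eqref{eq:length-relation} give $\kappa=t_1+t_2+t_3$ and $t+1=3t_1+2t_2+t_3$. For the nonlinear $\Z_2\Z_4$-linear code $H^{U,V}$, \eqref{eq:z2z4-length} and \eqref{eq:z2z4-invariants} give $\kernel(H^{U,V})=U+V$ and $t+1=2U+V$. We equate these two pairs of quantities and solve the resulting linear system for $U$ and $V$.

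First, subtracting the kernel equation from the length equation on each side gives $u=2t_1+t_2$ for the mixed code and $u=U$ for the $\Z_2\Z_4$ code, so $U=2t_1+t_2$. Second, substituting this value into $U+V=t_1+t_2+t_3$ gives $V=t_1+t_2+t_3-2t_1-t_2=t_3-t_1$. This proves \eqref{eq:z2z4-partner}. It also shows that the partner is unique, since $U$ and $V$ are forced by $(t,\kappa)$.

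For the existence statement, recall that nonlinear $\Z_2\Z_4$-linear Hadamard codes $H^{U,V}$ exist exactly for $U\geq2$ and $V\geq1$.
\begin{itemize}
\item The condition $V\geq1$ is $t_3\geq t_1+1$.
\item The condition $U\geq2$ is automatic. Nonlinearity of $H^{t_1,t_2,t_3}$ means $(t_1,t_2)\neq(1,0)$ by item (i) of Theorem \ref{thm:recalled-linearity-kernel}. If $t_1\geq2$, then $U=2t_1+t_2\geq4$. If $t_1=1$, then $t_2\geq1$, so $U\geq3$.
\end{itemize}
Conversely, if $t_3\geq t_1+1$, the code $H^{2t_1+t_2,\,t_3-t_1}$ exists and is nonlinear. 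Its length is $2^t$ because $2U+V=4t_1+2t_2+t_3-t_1=3t_1+2t_2+t_3=t+1$, and its kernel dimension $U+V$ equals $t_1+t_2+t_3$.

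No step is a real obstacle, since everything is linear bookkeeping. The only point needing care is to use the kernel formula $\kernel(H^{U,V})=U+V$ only in the nonlinear range. This is why the hypothesis $U\geq2$ must be verified from the nonlinearity of $H^{t_1,t_2,t_3}$ and not assumed.
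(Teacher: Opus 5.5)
Your proposal is correct and follows the paper's proof essentially step for step. You equate the length relations $3t_1+2t_2+t_3=2U+V$ and the kernel relations $t_1+t_2+t_3=U+V$, subtract to obtain $U=2t_1+t_2$ and then $V=t_3-t_1$, read the existence condition off $V\geq1$, and deduce $U\geq2$ (indeed $U\geq3$) from $(t_1,t_2)\neq(1,0)$. Your explicit check in the converse that $H^{2t_1+t_2,\,t_3-t_1}$ has length $2^t$ and kernel dimension $t_1+t_2+t_3$ only spells out what the paper states more briefly.
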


\begin{proof}
Equality of the lengths and of the kernel dimensions gives, by
\eqref{eq:length-relation}, \eqref{eq:z2z4-length} and
\eqref{eq:z2z4-invariants}, $3t_1+2t_2+t_3=2U+V$ and $t_1+t_2+t_3=U+V$.
Subtracting the second identity from the first one gives $U=2t_1+t_2$, and
then $V=t_1+t_2+t_3-U=t_3-t_1$.  A $\Z_2\Z_4$-linear Hadamard code with
$\alpha_1\not=0$ has $V\geq1$, which is the condition $t_3\geq t_1+1$; and it
is nonlinear because $U=2t_1+t_2\geq2$, with $U=2$ only when
$(t_1,t_2)=(1,0)$, which is excluded since $H^{t_1,t_2,t_3}$ is nonlinear.
Conversely, if $t_3\geq t_1+1$, the parameters \eqref{eq:z2z4-partner} are
admissible and define such a code.
\end{proof}

\begin{theorem}\label{thm:z2z4-coincidences}
Let $H^{t_1,t_2,t_3}$ be nonlinear and let $D$ be a $\Z_4$-linear or a
$\Z_2\Z_4$-linear Hadamard code of the same length.  If $H^{t_1,t_2,t_3}$ and
$D$ have the same rank and the same dimension of the kernel, then
$(t_1,t_2)=(2,1)$ and $D$ is permutation equivalent to $H^{5,t-9}$, where
$2^t$ is the common length; in particular $t\geq10$.  Conversely, for every
$t\geq10$, the two codes
$$
H^{2,1,t-7}\qquad\text{and}\qquad H^{5,t-9}
$$
have the same length $2^t$, the same dimension of the kernel $t-4$ and the
same rank $t+11$.
\end{theorem}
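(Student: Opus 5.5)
By \cite{KV2015}, as explained in Subsection \ref{subsec:comparison-families}, it suffices to take $D=H^{U,V}$ a $\Z_2\Z_4$-linear Hadamard code with $U\geq2$, $V\geq1$. The plan is to reduce everything to the reduced rank of Subsection \ref{subsec:reduced-rank}.

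\emph{Reduction.} First I would dispose of linearity. Linearity is an equivalence invariant, and the rank of a linear Hadamard code of length $2^t$ is $t+1$, whereas a nonlinear code has rank strictly greater than its kernel dimension. Hence $D$ must itself be nonlinear, since $H^{t_1,t_2,t_3}$ is nonlinear and the two codes share rank and kernel dimension. Lemma \ref{lem:z2z4-partner} then forces $U=2t_1+t_2$ and $V=t_3-t_1$ with $t_3\geq t_1+1$. Both codes have length $2^t$, contain $\zero$, and have equal kernel dimension. So by Lemma \ref{lem:reduced-rank-principle} the ranks agree if and only if the reduced ranks agree. By Proposition \ref{prop:reduced-ranks}(i)--(ii) this condition reads
$$\Upsilon_u(t_1)=1,\qquad u=2t_1+t_2,\ \text{ that is, }\ 2\tbinom{t_1}{3}+\tbinom{t_1}{4}+t_2\tbinom{t_1}{2}=1 .$$

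\emph{Solving the Diophantine equation.} This is the only substantive step, and it is elementary. For $t_1=1$ the left-hand side is $0$. For $t_1=2$ it equals $t_2$, which forces $t_2=1$. For $t_1\geq3$ we have $2\binom{t_1}{3}\geq2$, so the left-hand side is at least $2$. Therefore the unique solution is $(t_1,t_2)=(2,1)$, and all terms here are nonnegative, so no further case analysis is needed.

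\emph{Identifying the parameters.} With $(t_1,t_2)=(2,1)$, relation \eqref{eq:length-relation} gives $t_3=t-7$. Then $U=5$ and $V=t_3-t_1=t-9$, and the existence condition $V\geq1$ gives $t\geq10$. Every $\Z_2\Z_4$-additive Hadamard code with $\alpha_1\neq0$ of that type is, by the classification in \cite{PRV06} (uniqueness up to permutation equivalence by type), permutation equivalent to $H^{5,t-9}$. For the converse I would read off the values directly. Corollary \ref{cor:rank-special-values} gives $\rank(H^{2,1,t-7})=t+11$. Formula \eqref{eq:z2z4-invariants} gives $\rank(H^{5,t-9})=t-9+10+10=t+11$. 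The kernel dimensions are $2+1+(t-7)=t-4$ and $5+(t-9)=t-4$. Both lengths are $2^t$ by \eqref{eq:length-relation} and \eqref{eq:z2z4-length}.

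The main obstacle I expect is the bookkeeping around the comparison family rather than the arithmetic. One must check that the $\Z_4$-linear case is genuinely covered via \cite{KV2015}. One must also state precisely the sense in which $D$ is ``permutation equivalent to $H^{5,t-9}$'', which relies on the uniqueness of the $\Z_2\Z_4$-linear Hadamard code of a given type from \cite{PRV06,RSV08}.
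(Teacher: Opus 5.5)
Your proposal is correct and follows the paper's proof essentially step for step: the reduction to $H^{U,V}$ via \cite{KV2015}, the exclusion of a linear $D$, Lemma~\ref{lem:z2z4-partner}, the reduced-rank criterion $\Upsilon_u(t_1)=1$, and the same converse computation. The only difference is cosmetic: you solve $2\binom{t_1}{3}+\binom{t_1}{4}+t_2\binom{t_1}{2}=1$ by a direct case split on $t_1$, whereas the paper uses the parity observation that $2\binom{t_1}{3}$ cannot equal $1$; both are equally short.
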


\begin{proof}
Every $\Z_4$-linear Hadamard code is equivalent to a $\Z_2\Z_4$-linear one
with $\alpha_1\not=0$ and $\alpha_2\not=0$ by \cite{KV2015}. The rank and
the dimension of the kernel are invariants of the equivalence class, so we may
assume that $D=H^{U, V}$ with
$U\geq2$ and $V\geq1$.  If $D$ is linear, it cannot have the same rank and
kernel dimension as the nonlinear code $H^{t_1,t_2,t_3}$, because a code is
linear exactly when its rank equals its kernel dimension; so $D$ is
nonlinear.  By Lemma \ref{lem:z2z4-partner}, $U$ and $V$ are given by
\eqref{eq:z2z4-partner}.

By Lemma \ref{lem:reduced-rank-principle} and items (i) and (ii) of
Proposition \ref{prop:reduced-ranks}, the two ranks agree if and only if
$\Upsilon_u(t_1)=1$ with $u=2t_1+t_2$, that is, by \eqref{eq:Upsilon-Xi}, if
and only if
\begin{equation}\label{eq:z2z4-criterion}
2\binom{t_1}{3}+\binom{t_1}{4}+t_2\binom{t_1}{2}=1.
\end{equation}
The three summands of \eqref{eq:z2z4-criterion} are nonnegative integers, so
exactly one of them equals $1$ and the other two vanish.  The first summand
is even, so it cannot be $1$; hence $2\binom{t_1}{3}=0$, which forces
$t_1\leq2$, and then $\binom{t_1}{4}=0$ automatically.  Consequently,
\eqref{eq:z2z4-criterion} reduces to $t_2\binom{t_1}{2}=1$, which forces
$\binom{t_1}{2}=1$ and $t_2=1$, that is, $t_1=2$ and $t_2=1$. So $(t_1,t_2)=(2,1)$.  By \eqref{eq:length-relation}, the length relation reads $t+1=6+2+t_3$, so $t_3=t-7$. By \eqref{eq:z2z4-partner}, we get $U=5$ and $V=t_3-t_1=t-9$. The condition $V\geq1$ gives $t\geq10$.  This
proves the direct statement.

For the converse, Corollary \ref{cor:rank-special-values} gives
$\rank(H^{2,1,t-7})=t+11$, while by \eqref{eq:z2z4-invariants}, we have
$\rank(H^{5,t-9})=(t-9)+10+\binom52=t+11$. Their kernel dimensions are
$2+1+(t-7)=t-4$ and $5+(t-9)=t-4$, respectively, and both codes have
length $2^t$ by \eqref{eq:length-relation} and \eqref{eq:z2z4-length}.
\end{proof}

\begin{example}
The smallest instance of Theorem \ref{thm:z2z4-coincidences} occurs at
$t=10$: the codes $H^{2,1,3}$ and $H^{5,1}$, both of length $2^{10}$,
have $(r,k)=(21,6)$.   For every other nonlinear type
the two classical invariants already separate $H^{t_1,t_2,t_3}$ from all the
$\Z_4$-linear and $\Z_2\Z_4$-linear Hadamard codes of the same length.  As an illustration, consider $t_1=1$.  In this case, the criterion
\eqref{eq:z2z4-criterion} reads $0=1$ and hence cannot be satisfied.  Indeed,
by \eqref{eq:nu-mixed}, the reduced rank of $H^{1,t_2,t_3}$ is
$\Upsilon_u(1)=0$, whereas the corresponding reduced rank of a
$\Z_2\Z_4$-linear Hadamard code is $1$.  Concretely,
$\rank(H^{1,t_2,t_3})$ is always exactly one less than the rank of the
$\Z_2\Z_4$-linear Hadamard code with the same length and the same kernel
dimension.  In particular, they cannot have the
same pair $(r,k)$.
\end{example}

\subsection{Comparison with the $\Z_8$-linear Hadamard codes}
\label{subsec:z8-rk}

This comparison is the longest, because the rank formulas for both families now involve quartic terms. However, the reduced rank reduces the comparison to a single Diophantine equation, which we then solve
completely.  We first fix the parameters of the possible partner.

\begin{lemma}\label{lem:z8-partner}
Let $H^{t_1,t_2,t_3}$ be nonlinear and let $\bar H^{a,b,c}$ be a nonlinear
$\Z_8$-linear Hadamard code with the same length and the same dimension of
the kernel.
\begin{enumerate}[label=\textup{(\roman*)}]
\item If $a\geq2$, then
\begin{equation}\label{eq:z8-partner-big}
b=2t_1+t_2+1-2a\qquad\text{and}\qquad c=t_3+a-t_1-2.
\end{equation}
\item If $a=1$, then $b\geq2$ and
\begin{equation}\label{eq:z8-partner-one}
b=2t_1+t_2\qquad\text{and}\qquad c=t_3-t_1-3.
\end{equation}
\end{enumerate}
\end{lemma}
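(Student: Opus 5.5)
The plan mirrors the proof of Lemma \ref{lem:z2z4-partner}: write the two equations expressing equality of length and of kernel dimension, and solve them linearly for $b$ and $c$. By \eqref{eq:length-relation} and \eqref{eq:z8-length}, equal length gives
$$
3t_1+2t_2+t_3=3a+2b+c.
$$
By item (ii) of Theorem \ref{thm:recalled-linearity-kernel} and \eqref{eq:z8-kernel}, equal kernel dimension gives
$$
t_1+t_2+t_3=\sigma+a+b+c.
$$
Here $\sigma$ depends on whether $a\geq2$ or $a=1$, which is why the statement splits into two items.

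Subtracting the second equation from the first gives $2t_1+t_2=2a+b-\sigma$. This is the same as saying that the quantity $u=t+1-\kappa$ agrees on both sides. The value $u=2t_1+t_2$ is given by \eqref{eq:u-definition}. On the $\Z_8$ side, $u=2a+b-1$ when $a\geq2$ and $u=b$ when $a=1$, as already computed in the proof of Proposition \ref{prop:reduced-ranks}.

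For item (i), put $\sigma=1$. Then $b=2t_1+t_2+1-2a$, and substituting back into the kernel equation gives
$$
c=t_1+t_2+t_3-1-a-b=t_3+a-t_1-2.
$$

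For item (ii), put $\sigma=2$. Then $b=2t_1+t_2$, and
$$
c=t_1+t_2+t_3-3-b=t_3-t_1-3.
$$
The extra claim $b\geq2$ comes from nonlinearity. By the linearity criterion recalled before \eqref{eq:z8-kernel}, a code $\bar H^{1,b,c}$ is nonlinear only if $b\geq2$. One can also check that this is consistent: $b=2t_1+t_2\geq2$ always holds because $t_1\geq1$.

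The only delicate point is choosing the correct $\sigma$ in each case, and this is just bookkeeping. The lemma asserts only necessary formulas, not existence, so no admissibility check such as $c\geq0$ is needed here.
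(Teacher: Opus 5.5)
Your proof is correct and is essentially the paper's argument. Like the paper, you equate the length and kernel-dimension relations, subtract them to get $2t_1+t_2=2a+b-\sigma$, and solve with $\sigma=1$ for $a\geq2$ and $\sigma=2$ for $a=1$, where $b\geq2$ follows from the linearity criterion $(a,b)\in\{(1,0),(1,1)\}$.
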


\begin{proof}
For item (i), $a\geq2$ gives $\sigma=1$ by \eqref{eq:z8-kernel}, so
\eqref{eq:z8-length} and \eqref{eq:z8-kernel} read $3t_1+2t_2+t_3=3a+2b+c$ and
$t_1+t_2+t_3=1+a+b+c$. Subtracting the second identity from the first one
gives
$2t_1+t_2=2a+b-1$, that is the first half of \eqref{eq:z8-partner-big}, and
substituting it back into the second identity gives
$c=t_1+t_2+t_3-1-a-b=t_3+a-t_1-2$.

For item (ii), a nonlinear $\Z_8$-linear Hadamard code with $a=1$ has $b\geq2$, because the
linear ones are exactly those with $(a,b)\in\{(1,0),(1,1)\}$. Hence, $\sigma=2$ by \eqref{eq:z8-kernel}, and the two identities read
$3t_1+2t_2+t_3=3+2b+c$ and
$t_1+t_2+t_3=3+b+c$. Subtracting gives $2t_1+t_2=b$
and then $c=t_1+t_2+t_3-3-b=t_3-t_1-3$.
\end{proof}

The case $a=1$ needs no new work.

\begin{proposition}\label{prop:z8-a-one}
Let $H^{t_1,t_2,t_3}$ be nonlinear and let $\bar H^{1,b,c}$ be a nonlinear
$\Z_8$-linear Hadamard code with the same length and the same dimension of
the kernel.  Then, the two ranks agree if and only if $(t_1,t_2)=(2,1)$, and
in that case $b=5$ and $c=t-12$, where $2^t$ is the common length; in
particular $t\geq12$.
\end{proposition}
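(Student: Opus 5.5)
The plan is to reduce the statement to the computation already carried out for the $\Z_2\Z_4$-linear family, using the reduced rank. First, by item (ii) of Lemma \ref{lem:z8-partner}, equality of length and kernel dimension forces $b=2t_1+t_2$ and $c=t_3-t_1-3$. By Lemma \ref{lem:reduced-rank-principle}, the two ranks then agree if and only if the reduced ranks agree. Item (i) of Proposition \ref{prop:reduced-ranks} gives $\nu(H^{t_1,t_2,t_3})=\Upsilon_u(t_1)$ with $u=2t_1+t_2$, and item (ii) gives $\nu(\bar H^{1,b,c})=1$. Hence the rank condition is exactly the equation \eqref{eq:z2z4-criterion}:
$2\binom{t_1}{3}+\binom{t_1}{4}+t_2\binom{t_1}{2}=1$.

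Next, I would solve this equation by the parity argument already used in the proof of Theorem \ref{thm:z2z4-coincidences}. The first summand is even, so it must vanish, which forces $t_1\leq2$; the second summand then vanishes too. The remaining condition $t_2\binom{t_1}{2}=1$ gives $t_1=2$ and $t_2=1$. Conversely, for $(t_1,t_2)=(2,1)$ the left-hand side equals $1$, so the ranks agree.

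Finally, I would read off the parameters. For $(t_1,t_2)=(2,1)$ we get $b=2\cdot2+1=5$. The length relation \eqref{eq:length-relation} gives $t_3=t+1-6-2=t-7$, so $c=t_3-t_1-3=t-12$. Since $c\geq0$ is required for $\bar H^{1,b,c}$ to exist, we need $t\geq12$. It is also worth checking that $b=5\geq2$, so the partner really is nonlinear and item (ii) of Lemma \ref{lem:z8-partner} applies.

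There is no real obstacle here. The only point needing care is the admissibility bookkeeping: $c\geq0$ rather than $c\geq1$, which gives $t\geq12$ and not $t\geq13$. As a sanity check, the remark after Proposition \ref{prop:reduced-ranks} says $\bar H^{1,5,c}$ shares all three invariants with $H^{5,c+3}=H^{5,t-9}$. This is consistent with Theorem \ref{thm:z2z4-coincidences}, whose partner $H^{5,t-9}$ exists for $t\geq10$, while here $c=t-12\geq0$ imposes the stronger restriction $t\geq12$.
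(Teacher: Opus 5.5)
Your proposal is correct and matches the paper's proof essentially step for step. Both reduce rank equality, via item (ii) of Proposition \ref{prop:reduced-ranks} and Lemma \ref{lem:reduced-rank-principle}, to criterion \eqref{eq:z2z4-criterion}, solve it with the parity argument from Theorem \ref{thm:z2z4-coincidences}, and read off $b=5$ and $c=t_3-5=t-12$ from \eqref{eq:z8-partner-one}, with $c\geq0$ giving $t\geq12$; your explicit checks of the converse and of $b=5\geq2$ are details the paper leaves implicit.
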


\begin{proof}
By item (ii) of Proposition \ref{prop:reduced-ranks}, the reduced rank of
$\bar H^{1,b,c}$ equals $1$, and so does the reduced rank of every nonlinear
$\Z_2\Z_4$-linear Hadamard code.  Hence, by Lemma
\ref{lem:reduced-rank-principle}, the criterion is again
\eqref{eq:z2z4-criterion}, and the argument of the proof of Theorem
\ref{thm:z2z4-coincidences} gives $(t_1,t_2)=(2,1)$.  Then
\eqref{eq:z8-partner-one} gives $b=2\cdot2+1=5$ and $c=t_3-5$; and, since
$t_3=t-7$ as computed there, $c=t-12$.  The condition $c\geq0$ gives
$t\geq12$.
\end{proof}

We now treat the main case $a\geq2$.  The following lemma reduces everything
to one equation between two nonnegative integer parameters.

\begin{lemma}\label{lem:z8-criterion}
Let $H^{t_1,t_2,t_3}$ be nonlinear, let $\bar H^{a,b,c}$ be a nonlinear
$\Z_8$-linear Hadamard code with $a\geq2$ having the same length and the same
dimension of the kernel, and put
\begin{equation}\label{eq:npb}
n=t_1,\qquad m=t_2,\qquad p=a-1,\qquad b=2n+m-1-2p.
\end{equation}
Then, $n\geq1$, $m\geq0$, $p\geq1$, $b\geq0$, and the two ranks agree if and
only if
\begin{equation}\label{eq:z8-criterion}
m\Bigl(\binom{n}{2}-\theta_p\Bigr)=\Gamma_p(n),
\end{equation}
where
\begin{equation}\label{eq:theta-Lambda-Gamma}
\theta_p=\binom p2-1,\quad
\Lambda_p=2\binom p3+\binom p4+1-p,\quad
\Gamma_p(n)=\Lambda_p+\theta_p(2n-1-2p)-2\binom n3-\binom n4.
\end{equation}
\end{lemma}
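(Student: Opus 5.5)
The plan is to reduce everything to the reduced ranks of Proposition~\ref{prop:reduced-ranks} and then expand. First, the parameter constraints. By item (i) of Lemma~\ref{lem:z8-partner}, since $a\geq2$ we have $b=2t_1+t_2+1-2a$. With $p=a-1$ this becomes $b=2n+m+1-2(p+1)=2n+m-1-2p$, which is the last identity in \eqref{eq:npb}. The inequalities $n\geq1$, $m\geq0$ and $p\geq1$ restate $t_1\geq1$, $t_2\geq0$ and $a\geq2$, and $b\geq0$ holds because $b$ is a parameter of an admissible $\Z_8$-linear code.

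Next, the reduction to reduced ranks. The two codes have the same length and the same kernel dimension, so they share the same $u=t+1-\kappa$. By item (i) of Proposition~\ref{prop:reduced-ranks}, $u=2n+m$, and by item (iii) it is also $u=2a+b-1$; the two agree, as Lemma~\ref{lem:z8-partner} shows. By Lemma~\ref{lem:reduced-rank-principle}, the ranks agree if and only if
$$\Upsilon_u(n)=\Upsilon_u(p)-\binom p2+a+1-u .$$
I would expand both sides with \eqref{eq:Upsilon}, using $u=2n+m$.

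The left side is $2\binom n3+\binom n4+m\binom n2$.

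On the right side, $u-2p=2n+m-2p$. This gives
$$\Upsilon_u(p)=2\binom p3+\binom p4+(2n+m-2p)\binom p2 .$$
Also $a+1-u=p+2-2n-m$.

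Next I would collect the terms in $m$. The left side has $m\binom n2$. The right side has $m\binom p2-m=m\theta_p$. Moving these together gives $m\bigl(\binom n2-\theta_p\bigr)$ on one side.

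The remaining terms on the right are
$$2\binom p3+\binom p4+(2n-2p)\binom p2-\binom p2+p+2-2n .$$
It remains to check that this equals $\Lambda_p+\theta_p(2n-1-2p)+2\binom n3+\binom n4$, with the cubic and quartic binomials in $n$ then moved across. Expanding $\theta_p(2n-1-2p)$ gives
$$(2n-1-2p)\binom p2-2n+1+2p .$$
Adding $\Lambda_p=2\binom p3+\binom p4+1-p$ gives
$$2\binom p3+\binom p4+(2n-1-2p)\binom p2-2n+2+p ,$$
which matches the remaining terms exactly. Hence the ranks agree if and only if $m\bigl(\binom n2-\theta_p\bigr)=\Gamma_p(n)$, which is \eqref{eq:z8-criterion}.

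No step is conceptually hard. The main risk is bookkeeping: the sign of the constant terms and the split of $(u-2p)\binom p2$ into its $m$-part and its $n,p$-part. I would check the final identity numerically on one case, for instance $n=2$, $m=0$, $p=1$, against the rank formulas \eqref{eq:rank-binomial} and \eqref{eq:z8-rank}.
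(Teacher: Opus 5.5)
Your proposal is correct and follows essentially the paper's proof. You use the same reduction via Lemma~\ref{lem:reduced-rank-principle} and items (i) and (iii) of Proposition~\ref{prop:reduced-ranks}, and the same algebra, except that the paper first eliminates $u$ through $u-2p=b+1$ (reaching $\Lambda_p+b\,\theta_p$) and only then substitutes $b=2n+m-1-2p$, whereas you substitute $u=2n+m$ directly; also, your phrase ``equals $\Lambda_p+\theta_p(2n-1-2p)+2\binom n3+\binom n4$'' should read that the leftover terms equal $\Lambda_p+\theta_p(2n-1-2p)$ alone, which is exactly what your computation verifies.
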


\begin{proof}
By item (i) of Lemma \ref{lem:z8-partner}, we have $b=2t_1+t_2+1-2a$, which is
the value in \eqref{eq:npb}, and $b\geq0$ because $b$ is a parameter of an
existing code; also $p=a-1\geq1$.  Put $u=2n+m$; then $u=2p+b+1$ by
\eqref{eq:npb}.  By Lemma \ref{lem:reduced-rank-principle} and items (i) and
(iii) of Proposition \ref{prop:reduced-ranks}, the two ranks agree if and
only if $\Upsilon_u(n)=\Upsilon_u(p)-\binom p2+a+1-u$. By \eqref{eq:Upsilon}, we have $\Upsilon_u(p)=2\binom p3+\binom p4+(u-2p)
\binom p2$ and $u-2p=b+1$, so
$\Upsilon_u(p)-\binom p2=2\binom p3+\binom p4+b\binom p2$; and
$a+1-u=p+2-(2p+b+1)=1-p-b$.  Using \eqref{eq:Upsilon-Xi}, on the left-hand
side, the criterion becomes
\begin{equation}\label{eq:z8-criterion-raw}
2\binom n3+\binom n4+m\binom n2
=2\binom p3+\binom p4+b\binom p2+1-p-b.
\end{equation}
Now $b\binom p2-b=b\theta_p$ by \eqref{eq:theta-Lambda-Gamma}, so the
right-hand side of \eqref{eq:z8-criterion-raw} equals
$\Lambda_p+b\,\theta_p$.  Substituting $b=2n+m-1-2p$, we get $2\binom n3+\binom
n4+m\binom n2=\Lambda_p+\theta_p(2n-1-2p)+m\,\theta_p$. Moving the two
terms containing $m$ to the left and the two quartic
terms to the right gives exactly \eqref{eq:z8-criterion}.
\end{proof}

Equation \eqref{eq:z8-criterion} has a very rigid structure, which we now
exploit.  Note that
\begin{equation}\label{eq:delta-sign}
\binom n2-\theta_p=\binom n2-\binom p2+1
\end{equation}
is at most $2-p$ when $n\leq p-1$, equal to $1$ when $n=p$, and at least
$p+1$ when $n\geq p+1$.  Its sign is therefore governed by the comparison of
$n$ with $p$: it is positive for $n\geq p$ and negative for $n\leq p-1$, the
only exception being $p=2$, where $n\leq p-1$ forces $n=1$ and makes it
vanish.  The next two lemmas dispose of all the cases with $n\geq p$ and of
all the cases with $p\geq8$.

\begin{lemma}\label{lem:z8-n-at-least-p}
In the situation of Lemma \ref{lem:z8-criterion}, suppose that $n\geq p$.
Then, \eqref{eq:z8-criterion} holds if and only if $(n,p)=(1,1)$ and $m=1$.
\end{lemma}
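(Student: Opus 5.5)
The plan is a sign argument: for $n\geq p$ the left-hand side of \eqref{eq:z8-criterion} is nonnegative, and I will show that the right-hand side $\Gamma_p(n)$ is negative in every case except $(n,p)=(1,1)$. By \eqref{eq:delta-sign}, when $n\geq p$ the factor $\binom n2-\theta_p=\binom n2-\binom p2+1$ is at least $1$. Since $m\geq0$, the left-hand side of \eqref{eq:z8-criterion} is then a nonnegative integer, and it is positive whenever $m\geq1$.

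I first evaluate $\Gamma_p$ at the starting point $n=p$. Substituting $n=p$ into \eqref{eq:theta-Lambda-Gamma}, the cubic and quartic terms of $\Lambda_p$ cancel against $-2\binom p3-\binom p4$, which leaves
$$\Gamma_p(p)=1-p-\theta_p=2-p-\binom p2 .$$
This equals $1$ for $p=1$ and is at most $-1$ for every $p\geq2$.

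Next I show that $\Gamma_p$ is strictly decreasing on $n\geq p$. I compute the first difference directly, using $\binom{n+1}{j}-\binom nj=\binom n{j-1}$:
$$\Gamma_p(n+1)-\Gamma_p(n)=2\theta_p-2\binom n2-\binom n3=-2-2\Bigl(\binom n2-\binom p2\Bigr)-\binom n3 .$$
For $n\geq p$ this is at most $-2$. Hence $\Gamma_p(n)\leq\Gamma_p(p)$ for every $n\geq p$, with strict inequality when $n>p$.

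It remains to combine the two facts. If $p\geq2$, then $\Gamma_p(n)\leq-1<0$, while the left-hand side is at least $0$, so \eqref{eq:z8-criterion} has no solution. If $p=1$, then $\Gamma_1(1)=1$ and $\Gamma_1(n)\leq\Gamma_1(2)=1-2=-1$ for all $n\geq2$, so again there is no solution with $n\geq2$. This leaves only $n=p=1$. There $\binom12-\theta_1=1$, so the equation reads $m=1$.

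For the converse, $(n,p,m)=(1,1,1)$ indeed satisfies \eqref{eq:z8-criterion}. I will also record that it gives $b=2+1-1-2=0\geq0$, so these parameters are admissible in the setting of Lemma \ref{lem:z8-criterion}.

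The only delicate point is getting the closed form of the difference $\Gamma_p(n+1)-\Gamma_p(n)$ right, in particular the contribution $2\theta_p$ coming from the linear term $\theta_p(2n-1-2p)$. Once that is settled, the monotonicity and sign argument is routine.
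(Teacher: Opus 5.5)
Your proof is correct and follows the paper's own argument essentially step by step. Both use the nonnegativity of the left-hand side (since $\binom n2-\theta_p\geq1$ for $n\geq p$), the value $\Gamma_p(p)=2-p-\binom p2$, and the first difference $2\theta_p-2\binom n2-\binom n3\leq-2$ to get strict decrease, with the case $p=1$ handled separately to force $n=1$ and $m=1$.
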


\begin{proof}
By \eqref{eq:delta-sign}, we have $\binom n2-\theta_p\geq1>0$ for every
$n\geq p$, since $\binom n2\geq\binom p2$.  As $m\geq0$, the left-hand side
of \eqref{eq:z8-criterion} is therefore nonnegative, and it is zero exactly
when $m=0$.  It thus suffices to compute the sign of $\Gamma_p(n)$ for
$n\geq p$.

First, evaluating \eqref{eq:theta-Lambda-Gamma} at $n=p$ and using
$2p-1-2p=-1$, we have 
$$
\Gamma_p(p)=\Lambda_p-\theta_p-2\binom p3-\binom p4
=\Bigl(2\binom p3+\binom p4+1-p\Bigr)-\Bigl(\binom p2-1\Bigr)
-2\binom p3-\binom p4
=2-p-\binom p2.
$$
Hence, $\Gamma_1(1)=2-1-0=1$, while $\Gamma_p(p)=2-p-\binom p2<0$ for every
$p\geq2$.

Second, for $k\geq p$ we have, again from
\eqref{eq:theta-Lambda-Gamma}, $$\Gamma_p(k+1)-\Gamma_p(k)=2\theta_p-2\binom
k2-\binom k3 \leq 2\theta_p-2\binom p2=-2<0,$$ because $\binom k2\geq\binom p2$
and $\binom k3\geq0$.  Therefore, $\Gamma_p$ is strictly decreasing on the integers $k\geq p$.

Consequently, if $p\geq2$ then $\Gamma_p(n)\leq\Gamma_p(p)<0$ for every
$n\geq p$, so \eqref{eq:z8-criterion} would force its left-hand side to be
negative, which is impossible.  If $p=1$, then $\Gamma_1(1)=1$ and
$\Gamma_1(n)\leq\Gamma_1(2)=-1<0$ for $n\geq2$, so only $n=1$ survives.  For
$(n,p)=(1,1)$ we have $\theta_1=-1$, and the left-hand side of
\eqref{eq:z8-criterion} is $m\bigl(\binom12-\theta_1\bigr)=m$ and the
right-hand side is $\Gamma_1(1)=1$.  Hence, $m=1$.
\end{proof}

\begin{lemma}\label{lem:z8-p-bound}
In the situation of Lemma \ref{lem:z8-criterion}, suppose that $n\leq p-1$.
Then $p\leq7$.
\end{lemma}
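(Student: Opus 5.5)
The plan is to combine \eqref{eq:z8-criterion} with the constraint $b\geq0$ from \eqref{eq:npb}, and to show that for $p\geq8$ the two are incompatible. I will not use \eqref{eq:z8-criterion} alone. The reason is that $\Gamma_p(n)$ itself can be negative in this range: at $n=p-1$, $p=8$ a direct evaluation gives $\Gamma_8(7)=-11$. So a sign argument on $\Gamma_p$ does not suffice, and the slack coming from $b$ must be used.

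First I rewrite the criterion in terms of $b$. Start from the raw form \eqref{eq:z8-criterion-raw} and substitute $m=b+2p+1-2n$ into its left-hand side. Collecting the terms in $b$ gives
\begin{equation*}
b\Bigl(\theta_p-\binom n2\Bigr)=2\binom n3+\binom n4+(2p+1-2n)\binom n2-\Lambda_p
=\Upsilon_{2p+1}(n)-\Lambda_p ,
\end{equation*}
where the last equality is \eqref{eq:Upsilon-Xi} with $u=2p+1$. By \eqref{eq:delta-sign}, when $n\leq p-1$ we have $\theta_p-\binom n2\geq p-2>0$ for $p\geq3$. Since $b\geq0$, the left-hand side is nonnegative. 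It therefore suffices to show that $\Upsilon_{2p+1}(n)-\Lambda_p<0$ for every $1\leq n\leq p-1$ whenever $p\geq8$.

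Next I bound $\Upsilon_{2p+1}(n)$ on $1\leq n\leq p-1$ using Lemma \ref{lem:Upsilon-difference}. With $u=2p+1$, the increment from $k$ to $k+1$ is $\tfrac k6\bigl(k^2-15k+12p-4\bigr)$. The quadratic $k^2-15k$ has minimum value $-56.25$ over the reals, so the bracket is at least $12p-60.25$, which is positive for $p\geq6$. Hence $\Upsilon_{2p+1}$ is nondecreasing on $k\geq0$, and its maximum on $n\leq p-1$ is attained at $n=p-1$.

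It remains to evaluate this maximum and show it is negative, which I expect to be the only computational step. Using $\binom{p}{j}-\binom{p-1}{j}=\binom{p-1}{j-1}$, the difference collapses to
\begin{equation*}
\Upsilon_{2p+1}(p-1)-\Lambda_p=\binom{p-1}{2}-\binom{p-1}{3}+p-1 .
\end{equation*}
For $p=8$ this equals $21-35+7=-7<0$. To cover all $p\geq8$, I will write $g(p)=\binom{p-1}{2}-\binom{p-1}{3}+p-1$ and compute $g(p+1)-g(p)=(p-1)-\binom{p-1}{2}+1$, which is negative for $p\geq5$. So $g(p)\leq g(8)<0$ for all $p\geq8$. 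This contradicts $b\geq0$, so no solution with $n\leq p-1$ exists once $p\geq8$, that is, $p\leq7$.

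The main point of care is the initial reformulation in terms of $b$. The degenerate case $p=2$, where $\theta_p-\binom n2=0$, lies outside the range $p\geq8$ and does not arise.
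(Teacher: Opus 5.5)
Your proof is correct and follows essentially the paper's route. The paper obtains $\Upsilon_{2p+1}(n)\geq\Lambda_p$ from $m\geq 2p+1-2n$ (that is, $b\geq0$), which is your identity $b\bigl(\theta_p-\binom n2\bigr)=\Upsilon_{2p+1}(n)-\Lambda_p$ read with $b\geq0$; it then uses Lemma~\ref{lem:Upsilon-difference} for monotonicity and evaluates at $n=p-1$, and the only cosmetic difference is that the paper factors $\Lambda_p-\Upsilon_{2p+1}(p-1)=(p-1)(p^2-8p+6)/6$ where you use the decreasing sequence $g(p)$.
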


\begin{proof}
Since $n\leq p-1$, we have $p\geq2$ and, by \eqref{eq:delta-sign},
$\binom n2-\theta_p\leq\binom{p-1}{2}-\binom p2+1=2-p\leq0$.  Because
$m\geq0$, the left-hand side of \eqref{eq:z8-criterion} is therefore at most
$0$, and, more precisely,
\begin{equation}\label{eq:z8-lower}
-\Gamma_p(n)=m\Bigl(\theta_p-\binom n2\Bigr)\geq
(2p+1-2n)\Bigl(\theta_p-\binom n2\Bigr),
\end{equation}
where we used $m\geq 2p+1-2n$, which is exactly the condition $b\geq0$ in
\eqref{eq:npb}, and $\theta_p-\binom n2\geq p-2\geq0$.  Substituting
\eqref{eq:theta-Lambda-Gamma} into \eqref{eq:z8-lower} and cancelling the
common term $(2p+1-2n)\theta_p$ on both sides, the inequality
\eqref{eq:z8-lower} becomes $2\binom n3+\binom n4-\Lambda_p\geq-(2p+1-2n)\binom n2$, that is,
\begin{equation}\label{eq:z8-necessary}
\Upsilon_{2p+1}(n)\geq\Lambda_p,
\end{equation}
where we recognised
$2\binom n3+\binom n4+(2p+1-2n)\binom n2=\Upsilon_{2p+1}(n)$ from
\eqref{eq:Upsilon}.

We now show that \eqref{eq:z8-necessary} fails when $p\geq8$.  Assume
$p\geq8$.  By Lemma \ref{lem:Upsilon-difference} with $u=2p+1$,
$\Upsilon_{2p+1}(k+1)-\Upsilon_{2p+1}(k)=\tfrac k6(k^2-15k+12p-4)$, and the
quadratic $k^2-15k+12p-4$ is positive for every integer $k$, because
its minimum over the integers is attained at $k=7$ and at $k=8$, where its
value is $12p-60\geq36>0$.  Hence, $\Upsilon_{2p+1}$ is nondecreasing on the
nonnegative integers, so $\Upsilon_{2p+1}(n)\leq\Upsilon_{2p+1}(p-1)$
because $n\leq p-1$.  It therefore suffices to prove that
$\Upsilon_{2p+1}(p-1)<\Lambda_p$.  Using
$2\binom p3-2\binom{p-1}{3}=2\binom{p-1}{2}$ and
$\binom p4-\binom{p-1}{4}=\binom{p-1}{3}$, and
$\Upsilon_{2p+1}(p-1)=2\binom{p-1}{3}+\binom{p-1}{4}+3\binom{p-1}{2}$,
we get
\begin{align*}
\Lambda_p-\Upsilon_{2p+1}(p-1)
&=2\binom{p-1}{2}+\binom{p-1}{3}+1-p-3\binom{p-1}{2}\\
&=\binom{p-1}{3}-\binom{p-1}{2}-(p-1)
=(p-1)\left[\frac{(p-2)(p-3)}{6}-\frac{p-2}{2}-1\right]\\
&=\frac{(p-1)\bigl[(p-2)(p-3)-3(p-2)-6\bigr]}{6}
=\frac{(p-1)\bigl(p^2-8p+6\bigr)}{6}.
\end{align*}
For $p\geq8$, we have $p^2-8p+6=p(p-8)+6\geq6>0$, so
$\Lambda_p>\Upsilon_{2p+1}(p-1)\geq\Upsilon_{2p+1}(n)$, contradicting
\eqref{eq:z8-necessary}.  Hence $p\leq7$.
\end{proof}

\begin{proposition}\label{prop:z8-coincidences}
Let $H^{t_1,t_2,t_3}$ be nonlinear and let $\bar H^{a,b,c}$ be a nonlinear
$\Z_8$-linear Hadamard code of the same length $2^t$, with the same rank and
the same dimension of the kernel.  Then the pair $\bigl(H^{t_1,t_2,t_3},
\bar H^{a,b,c}\bigr)$ is one of the following five, and each of the five does have
equal length, equal rank and equal dimension of the kernel:
\begin{equation}\label{eq:z8-five}
\begin{array}{llll}
\bigl(H^{1,1,t-4},\ \bar H^{2,0,t-5}\bigr), & t\geq5, &
(r,k)=(t+3,\,t-2),\\[1mm]
\bigl(H^{2,1,t-7},\ \bar H^{1,5,t-12}\bigr), & t\geq12, &
(r,k)=(t+11,\,t-4),\\[1mm]
\bigl(H^{1,5,t-12},\ \bar H^{4,0,t-11}\bigr), & t\geq13, &
(r,k)=(t+21,\,t-6),\\[1mm]
\bigl(H^{2,6,t-17},\ \bar H^{4,3,t-17}\bigr), & t\geq18, &
(r,k)=(t+51,\,t-9),\\[1mm]
\bigl(H^{4,5,t-21},\ \bar H^{6,2,t-21}\bigr), & t\geq22, &
(r,k)=(t+117,\,t-12).
\end{array}
\end{equation}
\end{proposition}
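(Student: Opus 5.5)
The plan is to split according to the value of $a$. Lemma \ref{lem:z8-criterion} turns the comparison into the Diophantine equation \eqref{eq:z8-criterion}, and the lemmas that follow reduce that equation to finitely many cases. Throughout, we use that for two nonlinear codes of the same length and the same kernel dimension, Lemma \ref{lem:z8-partner} fixes $b$ and $c$ in terms of $(t_1,t_2,t_3,a)$. Hence every solution $(t_1,t_2,a)$ of the rank criterion determines the partner completely, once $t_3=t+1-3t_1-2t_2$ is substituted.

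\emph{Case $a=1$.} Proposition \ref{prop:z8-a-one} gives directly $(t_1,t_2)=(2,1)$, $b=5$ and $c=t-12$. This is the second pair of \eqref{eq:z8-five}, with $t\geq12$.

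\emph{Case $a\geq2$, with $n\geq p$.} Here $n=t_1$, $m=t_2$ and $p=a-1$ as in \eqref{eq:npb}. Lemma \ref{lem:z8-n-at-least-p} leaves only $n=p=m=1$, that is $t_1=t_2=1$ and $a=2$. Then \eqref{eq:z8-partner-big} gives $b=0$ and $c=t_3-1$. With $t_3=t-4$ this is $c=t-5$, so we get the first pair of \eqref{eq:z8-five}, and $c\geq0$ gives $t\geq5$.

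\emph{Case $a\geq2$, with $n\leq p-1$.} Lemma \ref{lem:z8-p-bound} gives $2\leq p\leq7$, so only the finitely many pairs $(n,p)$ with $1\leq n\leq p-1\leq6$ remain. For each of them I would compute $\theta_p$, $\Lambda_p$ and $\Gamma_p(n)$ from \eqref{eq:theta-Lambda-Gamma}.
\begin{itemize}
\item If $\binom n2-\theta_p\neq0$, equation \eqref{eq:z8-criterion} determines $m=\Gamma_p(n)/(\binom n2-\theta_p)$. We keep the pair only when this is a nonnegative integer satisfying $m\geq 2p+1-2n$, which is the condition $b\geq0$.
\item The degenerate case $\binom n2-\theta_p=0$ happens only for $(n,p)=(1,2)$. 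There $\Gamma_2(1)=\Lambda_2=-1\neq0$, so it gives no solution.
\end{itemize}
I expect exactly three survivors: $(n,p,m)=(1,3,5)$, $(2,3,6)$ and $(4,5,5)$. Each is translated back with \eqref{eq:z8-partner-big}.
\begin{itemize}
\item $(1,3,5)$ gives $a=4$, $b=0$ and $c=t_3+1=t-11$.
\item $(2,3,6)$ gives $a=4$, $b=3$ and $c=t_3=t-17$.
\item $(4,5,5)$ gives $a=6$, $b=2$ and $c=t_3=t-21$.
\end{itemize}
The lower bounds on $t$ come from $t_3\geq1$; the constraint $c\geq0$ is then automatic.

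For the converse, the ranks of the five $\Z_2\Z_4\Z_8$ codes are read off Corollary \ref{cor:rank-special-values}. The ranks of the partners are obtained by substituting into \eqref{eq:z8-rank}, for instance $\rank(\bar H^{4,0,t-11})=(256-128+560+56)/24+t-10=t+21$. Kernel dimensions follow from item (ii) of Theorem \ref{thm:recalled-linearity-kernel} and \eqref{eq:z8-kernel}, and lengths from \eqref{eq:length-relation} and \eqref{eq:z8-length}.

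The main obstacle is the finite enumeration over the $21$ pairs $(n,p)$ with $p\leq7$. It is routine, but it must be done carefully, since most candidate values of $m$ are negative or non-integral and must be excluded individually. To keep it checkable, I would present a small table listing $\binom n2-\theta_p$, $\Gamma_p(n)$ and the resulting $m$ for each pair.
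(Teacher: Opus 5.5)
Your proposal is correct and follows essentially the same route as the paper. Both split off $a=1$ via Proposition \ref{prop:z8-a-one}, use Lemma \ref{lem:z8-n-at-least-p} for $n\geq p$ and Lemma \ref{lem:z8-p-bound} to reduce $n\leq p-1$ to the finite table over $1\leq n\leq p-1\leq 6$, and obtain the same three survivors, including the degenerate cell $(n,p)=(1,2)$ with $\Gamma_2(1)=-1$. The only minor difference is the converse: you recompute the partners' ranks directly from \eqref{eq:z8-rank}, whereas the paper deduces equality of ranks from the equality of reduced ranks given by the solved criterion.
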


\begin{proof}
The case $a=1$ is Proposition \ref{prop:z8-a-one} and produces exactly the
second line of \eqref{eq:z8-five}.  Assume therefore $a\geq2$ and adopt the
notation \eqref{eq:npb}; we must find all the solutions of
\eqref{eq:z8-criterion} with $n\geq1$, $m\geq0$, $p\geq1$ and
$b=2n+m-1-2p\geq0$.

By Lemma \ref{lem:z8-n-at-least-p}, the only solution with $n\geq p$ is
$(n,m,p)=(1,1,1)$.  By Lemma \ref{lem:z8-p-bound}, every solution with
$n\leq p-1$ has $p\leq7$; and in that case $n\leq6$.  It remains to examine
the finitely many pairs $(n,p)$ with $1\leq n\leq p-1\leq6$.  For each of
them, \eqref{eq:delta-sign} gives $\binom n2-\theta_p<0$, so
\eqref{eq:z8-criterion} determines $m$ uniquely, namely
$m=\Gamma_p(n)\big/\bigl(\binom n2-\theta_p\bigr)$, and the pair contributes a
solution exactly when this quotient is a
nonnegative integer satisfying $b=2n+m-1-2p\geq0$.  Table \ref{tab:z8-search}
lists, for each such pair, the numbers
$\Gamma_p(n)$ and $\binom n2-\theta_p$ and their quotient.  The quotient is a
nonnegative integer in
exactly three of them, namely $(n,p)=(1,3)$, $(2,3)$ and $(4,5)$, with $m=5$,
$m=6$ and $m=5$ respectively, and in all three cases
$b=2n+m-1-2p$ equals $0$, $3$ and $2$, which are nonnegative as required.

We have therefore found exactly four solutions with $a\geq2$, namely
$(n,m,p)$ equal to $(1,1,1)$, $(1,5,3)$, $(2,6,3)$ and $(4,5,5)$. Translating
them back through \eqref{eq:npb} and Lemma \ref{lem:z8-partner} gives
$(t_1,t_2,a,b)$ equal to $(1,1,2,0)$, $(1,5,4,0)$, $(2,6,4,3)$ and
$(4,5,6,2)$, respectively.  In each case, the length relation
\eqref{eq:length-relation} gives $t_3$ and then \eqref{eq:z8-partner-big}
gives $c$:
$$
\begin{array}{llll}
(1,1,2,0): & t_3=t-4, & c=t_3+2-1-2=t-5, & t\geq5;\\
(1,5,4,0): & t_3=t-12, & c=t_3+4-1-2=t-11, & t\geq13;\\
(2,6,4,3): & t_3=t-17, & c=t_3+4-2-2=t-17, & t\geq18;\\
(4,5,6,2): & t_3=t-21, & c=t_3+6-4-2=t-21, & t\geq22,
\end{array}
$$
where the lower bound on $t$ is the condition $t_3\geq1$, which in each case
also guarantees $c\geq0$.  These are the first, third, fourth and fifth
lines of \eqref{eq:z8-five}.

Finally, the displayed values of $(r,k)$ follow by substituting the parameters
into \eqref{eq:rank-binomial} and into item (ii) of Theorem
\ref{thm:recalled-linearity-kernel}; for instance
$\rank(H^{4,5,t-21})=(t-22)+16+24+8+1+75+15=t+117$ and
$\kernel(H^{4,5,t-21})=4+5+(t-21)=t-12$.  Since the two codes of each
line have the same length and the same dimension of the kernel by
construction, and the same reduced rank by the criterion just solved, they
have the same rank.
\end{proof}

\begin{table}[ht]
\centering
\caption{The search of the proof of Proposition \ref{prop:z8-coincidences}
over the pairs $1\leq n\leq p-1\leq 6$.  Each cell displays the fraction
$\Gamma_p(n)\big/\bigl(\binom n2-\theta_p\bigr)$ and its value, the only
candidate for $m$; it gives a solution of \eqref{eq:z8-criterion} only when it
is a nonnegative integer, and the three cells where this happens are printed
in bold.  In the single cell of the row $p=2$ the denominator is $0$ and the
numerator is $-1$, so there \eqref{eq:z8-criterion} has no solution at all.}
\label{tab:z8-search}
\footnotesize
\renewcommand{\arraystretch}{1.3}
\begin{tabular}{@{}ccc|cccccc@{}}
\toprule
$p$ & $\theta_p$ & $\Lambda_p$ & $n=1$ & $n=2$ & $n=3$ & $n=4$ & $n=5$ &
$n=6$\\
\midrule
$2$ & $0$ & $-1$ & $\frac{-1}{0}$ &  &  &  &  & \\
$3$ & $2$ & $0$ & $\frac{-10}{-2}{=}\mathbf{5}$ & $\frac{-6}{-1}{=}\mathbf{6}$ &  &  &  & \\
$4$ & $5$ & $6$ & $\frac{-29}{-5}{=}\frac{29}{5}$ & $\frac{-19}{-4}{=}\frac{19}{4}$ & $\frac{-11}{-2}{=}\frac{11}{2}$ &  &  & \\
$5$ & $9$ & $21$ & $\frac{-60}{-9}{=}\frac{20}{3}$ & $\frac{-42}{-8}{=}\frac{21}{4}$ & $\frac{-26}{-6}{=}\frac{13}{3}$ & $\frac{-15}{-3}{=}\mathbf{5}$ &  & \\
$6$ & $14$ & $50$ & $\frac{-104}{-14}{=}\frac{52}{7}$ & $\frac{-76}{-13}{=}\frac{76}{13}$ & $\frac{-50}{-11}{=}\frac{50}{11}$ & $\frac{-29}{-8}{=}\frac{29}{8}$ & $\frac{-17}{-4}{=}\frac{17}{4}$ & \\
$7$ & $20$ & $99$ & $\frac{-161}{-20}{=}\frac{161}{20}$ & $\frac{-121}{-19}{=}\frac{121}{19}$ & $\frac{-83}{-17}{=}\frac{83}{17}$ & $\frac{-50}{-14}{=}\frac{25}{7}$ & $\frac{-26}{-10}{=}\frac{13}{5}$ & $\frac{-16}{-5}{=}\frac{16}{5}$\\
\bottomrule
\end{tabular}
\end{table}

\begin{remark}\label{rem:z8-search-values}
The entries of Table \ref{tab:z8-search} follow at once from
\eqref{eq:theta-Lambda-Gamma}.  For $p=5$ and $n=4$, say, $\theta_5=9$ and
$\Lambda_5=2\binom53+\binom54+1-5=21$, so
$\Gamma_5(4)=21+9(8-1-10)-2\binom43-\binom44=-15$ and $\binom42-\theta_5=-3$,
giving the candidate $m=5$ and $b=2\cdot4+5-1-10=2$, which is the last line of
\eqref{eq:z8-five}. Of the five families, only the first was previously observed: the pair
$H^{1,1,t-4}$, $\bar H^{2,0,t-5}$ was identified for $5\leq t\leq11$ in
\cite{Z2Z4Z8Linearity}, where its members had to be separated with
{\sc Magma} at each individual length. The other four are new, and the last
three are invisible below $t=13$, $t=18$ and $t=22$, which is why the
computations available so far could not detect them.  
\end{remark}

The three comparisons together say precisely where a new idea is needed.
Inside the family, two infinite sequences of pairs remain unseparated by the
rank and the dimension of the kernel.  Against the classical families, one
such sequence remains for the $\Z_2\Z_4$-linear codes and five for the
$\Z_8$-linear codes.  Moreover, for the alphabets $\Z_{2^s}$ with $s\geq4$ no
comparison by the rank is possible at all, since no rank formula is known
there.  Separating these pairs, and reaching those alphabets, requires an
invariant finer than the two classical ones.

\section{Conclusions and further research}\label{sec:conclusions}

We have determined, in Theorem \ref{thm:rank-main}, the rank of the
$\Z_2\Z_4\Z_8$-linear Hadamard codes $H^{t_1,t_2,t_3}$ constructed recursively
in \cite{Z2Z4Z8Construction}, whose linearity and kernel were obtained in
\cite{Z2Z4Z8Linearity}.  The computation splits into two transparent parts.
The coordinate functions attached to the coordinates in the $\Z_2$ and the
$\Z_4$ parts span the space $V_0$ of Lemma \ref{lem:V0}, which consists of the
linear functions of the two lowest layers of message digits together with the
quadratic carries, while the top binary digit of a coordinate in the $\Z_8$
part contributes the further carry space of Proposition
\ref{prop:carry-dimension}. The technical key is Lemma \ref{lem:top-digit-reduced}: modulo $V_0$, the
top digit depends on the coordinate only through the set $S$ of positions
where the corresponding column has odd entries, and through one linear form
in the lowest digits of the message.  Every value
produced by the formula agrees with the values computed with {\sc Magma} in
\cite[Tables 2--4]{Z2Z4Z8Linearity}.

We have then used the formula, together with the dimension of the kernel, to
classify the family as far as these two invariants allow.  Inside the family,
the only pairs of distinct types of the same length sharing both invariants
are the two pairs of Theorem \ref{thm:all-collisions}.  Hence, the number of
distinct pairs $(r,k)$ is the one given in Corollary \ref{cor:rk-classes}.
For comparison with the previously studied $\Z_2\Z_4$-linear and
$\Z_8$-linear Hadamard families, the coincidences are the single family of
Theorem \ref{thm:z2z4-coincidences} and the five families of Proposition
\ref{prop:z8-coincidences}. All of these comparisons rest on the reduced rank of Definition
\ref{def:reduced-rank}, which subtracts from the rank the part determined by
the length and the kernel dimension, and thereby reduces each comparison to
a short Diophantine problem.  Along the way, we observed in Lemma
\ref{lem:Theta-all-families} that the ranks of the $\Z_2\Z_4$-linear,
$\Z_2\Z_4\Z_8$-linear, and $\Z_8$-linear Hadamard codes are values of the
same polynomial $\Theta$, evaluated at $(0,U)$, $(t_1,t_2)$, and $(a-1,b)$,
respectively.  This uniformity is also of independent interest and is what
makes the three comparisons instances of a single computation.

Two remarks on the scope of these results are in order.  First, the triple
$(t_1,t_2,t_3)$ does not determine, up to equivalence, every
$\Z_2\Z_4\Z_8$-additive Hadamard code of that abstract type.  Indeed, a
generator matrix of type $(4,6,12;2,0,1)$ whose Gray image has rank $13$,
rather than the value $12$ of $H^{2,0,1}$, is exhibited in
\cite[Example~2.4]{Z2Z4Z8Construction}.  All our statements therefore concern
the recursively constructed family, together with the constructions already
proved in \cite{Z2Z4Z8Construction} to give permutation equivalent codes.
Second, as explained in Remark \ref{rem:only-s-three}, the comparison with
the $\Z_{2^s}$-linear Hadamard codes has been carried out only for $s\leq3$,
because no rank formula is available for $s\geq4$.

Several questions remain.  The most immediate one is raised by the paper
itself.  The pairs left unseparated by Theorems \ref{thm:all-collisions} and
\ref{thm:z2z4-coincidences} and by Proposition \ref{prop:z8-coincidences} are
not pairs of equivalent codes: for $5\leq t\leq11$, they were separated in
\cite{Z2Z4Z8Linearity} by computer equivalence tests.  A complete
classification at every length therefore requires an invariant finer than
the rank and the dimension of the kernel, preferably one that can also be
evaluated without a rank formula, so as to extend the comparison to the
alphabets $\Z_{2^s}$ with $s\geq4$.

The rank of $\bar H^{a_1,\dots,a_s}$ is known only for $s\leq3$
\cite{Kro:2001:Z4_Had_Perf,PRV06,fernandez2019mathbb}, and in both cases it is
a value of the single polynomial $\Theta$ of Definition \ref{def:Theta}.  For
$s=3$, this is \eqref{eq:rank-Theta-z8}, which reads
$\rank(\bar H^{a_1,a_2,a_3})=\Theta(a_1-1,a_2)+a_1+a_3+2$.  For $s=2$ a
$\Z_4$-linear Hadamard code $\bar H^{a_1,a_2}$ has the same length and the
same dimension of the kernel as the $\Z_2\Z_4$-linear code
$H^{a_1-1,\,a_2+2}$, and is equivalent to it by \cite{KV2015}, so
\eqref{eq:rank-Theta-z2z4} gives
$\rank(\bar H^{a_1,a_2})=\Theta(0,a_1-1)+a_2+2$.  In both cases the first
parameter enters shifted by one, and the degree in $a_1$ is $2(s-1)$.  It
would be interesting to know whether the rank for $s\geq4$ is again a value of
a polynomial of the same general form.  Next, the five families of
\eqref{eq:z8-five} were obtained by solving \eqref{eq:z8-criterion}.  Although
that solution is complete, the five coincidences arise from the algebraic
form of the equation rather than from a structural explanation, and an explicit reason why
$H^{4,5,t-21}$ and $\bar H^{6,2,t-21}$ must have the same rank, would be more
satisfactory.  It would also be natural to extend the computation of Section
\ref{sec:rank} to $\Z_2\Z_4\cdots\Z_{2^s}$-linear Hadamard codes with all the
$\alpha_i$ nonzero, or to $\Z_p\Z_{p^2}\cdots\Z_{p^s}$-linear
generalised Hadamard codes with $p$ prime, in the spirit of
\cite{HadamardZps,ZpZp2Construction,ZpZp2Classification}; the method is not
tied to three alphabets, and what has to be redone in general is the reduction
of the top digit, which is where the combinatorics of the carries enters.
Finally, the case $\alpha_1=0$, $\alpha_2\not=0$, $\alpha_3\not=0$, that is,
the existence of $\Z_4\Z_8$-additive Hadamard codes, is still open, whereas
the case $\alpha_1\not=0$, $\alpha_2=0$, $\alpha_3\not=0$ cannot occur, as
observed in \cite{Z2Z4Z8Linearity}.
\bibliographystyle{elsarticle-num}
\bibliography{manuscript}

@book{Key,
  title={Designs and Their Codes},
  author={Assmus, Edvard F and Key, Jennifer D},
  @number={103},
  year={1994},
  publisher={Cambridge University Press}
}

@article{BGH83,
  title={Algebraic techniques for nonlinear codes},
  author={Bauer, Heiko and Ganter, Bernhard and Hergert, Ferdinand},
  journal={Combinatorica},
  volume={3},
  number={1},
  pages={21--33},
  year={1983},
  publisher={Springer}
}

@article{PRV06,
  title={On the additive ({$\mathbb{Z}_4$}-linear and non-{$\mathbb{Z}_4$}-linear) {H}adamard codes: rank and kernel},
  author={Phelps, Kevin T and Rif{\`a}, Joseph and Villanueva, Merc{\`e}},
  journal={IEEE Transactions on Information Theory},
  volume={52},
  number={1},
  pages={316--319},
  year={2006},
  publisher={IEEE Press Piscataway, NJ, USA}
}

@article{RSV08,
  title={On the Intersection of {$\mathbb{Z}_2\mathbb{Z}_4$}-Additive Perfect Codes},
  author={Rif{\`a}, Josep and Solov'eva, Faina Ivanovna and Villanueva, Merc{\`e}},
  journal={IEEE transactions on information theory},
  volume={54},
  number={3},
  pages={1346--1356},
  year={2008},
  publisher={IEEE}
}

@article{ccsg,
  title={{$\mathbb{Z}_2\mathbb{Z}_4$}-linear codes: generator matrices and duality},
  author={Borges, Joaquim and Fern{\'a}ndez-C{\'o}rdoba, Cristina and Pujol, Jaume and Rif{\`a}, Josep and Villanueva, Merc{\`e}},
  journal={Designs, Codes and Cryptography},
  volume={54},
  number={2},
  pages={167--179},
  year={2010},
  publisher={Springer}
}

@book{BookZ2Z4,
  author    = {Joaquim Borges and
               Cristina Fern{\'{a}}ndez{-}C{\'{o}}rdoba and
               Jaume Pujol and
               Josep Rif{\`{a}} and
               Merc{\`{e}} Villanueva},
  title     = {$\Z_2\Z_4$-Linear Codes},
  publisher = {Springer},
  year      = {2022},
  isbn      = {978-3-031-05440-2},
}

@article{Codes2k,
  title={Every {$\mathbb{Z}_{2^k}$}-code is a binary propelinear code},
  author={Borges, Joaquim and Fern{\'a}ndez-C{\'o}rdoba, Cristina and Rif{\`a}, Josep},
  journal={Electronic Notes in Discrete Mathematics},
  volume={10},
  pages={100--102},
  year={2001},
  publisher={Elsevier}
}

@article{Magma,
  title={Handbook of {M}agma functions},
  author={Bosma, Wieb and Cannon, John J and Fieker, C and Steel, A},
  journal={Edition},
  volume={2.25},
  year={2020},
  url={http://magma.maths.usyd.edu.au/magma/},
  publisher={Unknown Publisher}
}

@article{Carlet,
  title={{$\mathbb{Z}_{2^k}$}-linear codes},
  author={Carlet, Claude},
  journal={IEEE Transactions on Information Theory},
  volume={44},
  number={4},
  pages={1543--1547},
  year={1998}
}

@article{Sole,
  title={The {$\mathbb{Z}_4$}-linearity of {K}erdock, {P}reparata, {G}oethals, and related codes},
  author={Hammons, A Roger and Kumar, P Vijay and Calderbank, A Robert and Sloane, Neil JA and Sol{\'e}, Patrick},
  journal={IEEE Transactions on Information Theory},
  volume={40},
  number={2},
  pages={301--319},
  year={1994},
  publisher={IEEE}
}

@article{Kro:2001:Z4_Had_Perf,
  title={{$\mathbb{Z}_4$}-linear {H}adamard and extended perfect codes},
  author={Krotov, Denis S},
  journal={Electronic Notes in Discrete Mathematics},
  volume={6},
  pages={107--112},
  year={2001},
  publisher={Elsevier}
}

@article{KV2015,
  title={Classification of the {$\mathbb{Z}_2\mathbb{Z}_4$}-linear {H}adamard codes and their automorphism groups},
  author={Krotov, Denis S and Villanueva, Merc{\`e}},
  journal={IEEE Transactions on Information Theory},
  volume={61},
  number={2},
  pages={887--894},
  year={2015},
  publisher={IEEE}
}

@book{WMcwill,
  title={The Theory of Error-correcting Codes},
  author={MacWilliams, Florence Jessie and Sloane, Neil James Alexander},
  year={1977},
  publisher={Elsevier}
}

@article{TwoWeightSole,
  title={On two-weight {$\mathbb{Z}_{2^k}$}-codes},
  author={Shi, Minjia and Sepasdar, Zahra and Alahmadi, Adel and Sol{\'e}, Patrick},
  journal={Designs, Codes and Cryptography},
  volume={86},
  number={6},
  pages={1201--1209},
  year={2018},
  publisher={Springer}
}

@article{Blake,
  title={Codes over integer residue rings},
  author={Blake, Ian F},
  journal={Information and Control},
  volume={29},
  number={4},
  pages={295--300},
  year={1975},
  publisher={Elsevier}
}

@article{HadamardZps,
  title={On the Linearity and Classification of {$\mathbb{Z}_{p^s}$}-Linear Generalized {H}adamard Codes},
  author={Bhunia, Dipak Kumar and Fern{\'a}ndez-C{\'o}rdoba, Cristina and Villanueva, Merc{\`e}},
  journal={Designs, Codes and Cryptography},
  volume={90},
  number={},
  pages={1037--1058},
  year={2022},
  publisher={Springer}
}

@article{ZpsEquivalance,
  title={On the Equivalence of {$\mathbb{Z}_{p^s}$}-Linear Generalized {H}adamard Codes},
  author={Bhunia, Dipak Kumar and Fern{\'a}ndez-C{\'o}rdoba, Cristina and Vela, Carlos and  Villanueva, Merc{\`e}},
  journal={Designs, Codes and Cryptography},
  volume={92},
  number={},
  pages={999--1022},
  year={2024},
  publisher={Springer}
}

@article{ZpZp2Construction,
  title={On the Constructions of  {$\mathbb{Z}_p\mathbb{Z}_{p^2}$}-Linear Generalized {H}adamard Codes},
  author={Bhunia, Dipak Kumar and Fern{\'a}ndez-C{\'o}rdoba, Cristina and Villanueva, Merc{\`e}},
  journal={Finite Fields and Their Applications},
  volume={83},
  number={},
  pages={102093},
  year={2022},
  publisher={Elsevier}
}

@article{ZpZp2Classification,
  title={Linearity and Classification of {$\Z_p\Z_{p^2}$}-Linear Generalized {H}adamard Codes},
  author={Bhunia, Dipak Kumar and Fern{\'a}ndez-C{\'o}rdoba, Cristina and Villanueva, Merc{\`e}},
  journal={Finite Fields and Their Applications},
  volume={86},
  number={},
  pages={102140},
  year={2023},
  publisher={Elsevier}
}

@article{Z2Z4Z8Construction,
  title={On recursive constructions of {$\Z_2\Z_4\Z_8$}-linear {H}adamard codes},
  author={Bhunia, Dipak Kumar and Fern{\'a}ndez-C{\'o}rdoba, Cristina and Villanueva, Merc{\`e}},
  journal={Advances in Mathematics of Communications},
  volume={18},
  number={2},
  pages={455--479},
  year={2024},
  publisher={American Institute of Mathematical Sciences}
}

@article{Z2Z4Z8Linearity,
  title={Linearity and classification of {$\Z_2\Z_4\Z_8$}-linear {H}adamard codes},
  author={Bhunia, Dipak Kumar and Fern{\'a}ndez-C{\'o}rdoba, Cristina and  Villanueva, Merc{\`e}},
  journal={Designs, Codes and Cryptography},
  volume={93},
  number={},
  pages={4567--4594},
  year={2025},
  publisher={Springer}
}

@article{dougherty,
  title={Codes over {$\mathbb{Z}_{2^k}$}, {G}ray map and self-dual codes},
  author={Dougherty, Steven T and Fern{\'a}ndez-C{\'o}rdoba, Cristina},
  journal={Advances in Mathematics of Communications},
  volume={5},
  number={4},
  pages={571--588},
  year={2011},
  publisher={American Institute of Mathematical Sciences}
}

@article{EquivZ2s,
  title={Equivalences among {$\mathbb{Z}_{2^s}$}-linear {H}adamard codes},
  author={Fern{\'a}ndez-C{\'o}rdoba, Cristina and Vela, Carlos and Villanueva, Merc{\`e}},
  journal={Discrete Mathematics},
  volume={343},
  number={3},
  pages={111721},
  year={2020},
  publisher={Elsevier}
}

@article{fernandez2019mathbb,
  title={On {$\mathbb{Z}_{8}$}-linear {H}adamard codes: rank and classification},
  author={Fern{\'a}ndez-C{\'o}rdoba, Cristina and Vela, Carlos and Villanueva, Merc{\`e}},
  journal={IEEE Transactions on Information Theory},
  volume={66},
  number={2},
  pages={970--982},
  year={2019},
  publisher={IEEE}
}

@article{KernelZ2s,
  title={On {$\mathbb{Z}_{2^s}$}-linear {H}adamard codes: kernel and partial classification},
  author={Fern{\'a}ndez-C{\'o}rdoba, Cristina and Vela, Carlos and Villanueva, Merc{\`e}},
  journal={Designs, Codes and Cryptography},
  volume={87},
  number={2-3},
  pages={417--435},
  year={2019},
  publisher={Springer}
}

@article{Krotov:2007,
  title={On  {$\mathbb{Z}_{2^k}$}-dual binary codes},
  author={Krotov, Denis S},
  journal={IEEE Transactions on Information Theory},
  volume={53},
  number={4},
  pages={1532--1537},
  year={2007},
  publisher={IEEE}
}

@article{Nechaev,
  title={Weighted modules and representations of codes},
  author={Honold, Th. and Nechaev, Aleksandr Aleksandrovich},
  journal={Probl. Inf. Transm.},
  volume={35},
  number={3},
  pages={205--223},
  year={1999},
  publisher={}
}

@article{Shankar,
  title={On {BCH} codes over arbitrary integer rings},
  author={Shankar, Priti},
  journal={IEEE Transactions on Information Theory},
  volume={25},
  number={4},
  pages={480--483},
  year={1979},
  publisher={IEEE}
}

@article{ShiKrotov2019,
  title={On  {$\mathbb{Z}_p\mathbb{Z}_{p^k}$}-additive codes and their duality},
  author={Shi, Minjia and Wu, Rongsheng and Krotov, Denis S},
  journal={IEEE Transactions on Information Theory},
  volume={65},
  number={6},
  pages={3841--3847},
  year={2019},
  publisher={IEEE}
}

@article{ShiTwoHomWeight,
  author    = {Minjia Shi and
               Thomas Honold and
               Patrick Sol{\'{e}} and
               Yunzhen Qiu and
               Rongsheng Wu and
               Zahra Sepasdar},
  title     = {The geometry of two-weight codes over {$\mathbb{Z}_{p^m}$}},
  journal   = {{IEEE} Transactions on Information Theory},
  volume    = {67},
  number    = {12},
  pages     = {7769--7781},
  year      = {2021}
}

@article{H07,
  author    = {Horadam, K. J.},
  title     = {{H}adamard matrices and their applications},
  journal   = {Princeton University Press},
  volume    = {},
  number    = {},
  pages     = {},
  year      = {2007}
}

@article{SBT98,
  author    = {Smith, E. D. J. and Blaikie, R. J. and Taylor D. P.},
  title     = {Performance enhancement of spectral-amplitude coding optical CDMA using pulse-position modulation},
  journal   = {{IEEE} Transactions on Communications},
  volume    = {46},
  number    = {},
  pages     = {1176--1185},
  year      = {1998}
}

@article{HYT04,
  author    = {Huang, J. F. and Yang, C. C. and Tseng, S. P.},
  title     = {Complementary {W}alsh-{H}adamard coded optical {CDMA} coder/decoders structured over arrayed-waveguide grating routers},
  journal   = {Opt. Commun.},
  volume    = {229},
  number    = {},
  pages     = {241--248},
  year      = {2004}
}

@article{Nyb91,
  author    = {Nyberg, K.},
  title     = {Perfect nonlinear {S}-boxes},
  journal   = {{EUROCRYPT-91}},
  volume    = {LNCS 547},
  number    = {},
  pages     = {378--385},
  year      = {1991},
 publisher={Springer}
}

@article{YLL03,
  author    = {Yu, G. J. and Lu, C. S. and Liao, H. Y.},
  title     = {A message-based cocktail watermarking system},
  journal   = {Pattern Recognition},
  volume    = {36},
  number    = {},
  pages     = {957--968},
  year      = {2003}
}
\end{document}